\PassOptionsToPackage{unicode}{hyperref}
\PassOptionsToPackage{hyphens}{url}
\PassOptionsToPackage{dvipsnames,svgnames,x11names}{xcolor}
\documentclass[12pt]{article}

\usepackage{amsmath,amsfonts,bm}

\def\1{\bm{1}}

\def\vg{{\bm{g}}}

\DeclareMathAlphabet{\mathsfit}{\encodingdefault}{\sfdefault}{m}{sl}
\SetMathAlphabet{\mathsfit}{bold}{\encodingdefault}{\sfdefault}{bx}{n}

\def\gD{{\mathcal{D}}}

\def\gH{{\mathcal{H}}}

\def\gM{{\mathcal{M}}}

\def\gU{{\mathcal{U}}}
\def\gV{{\mathcal{V}}}

\def\gX{{\mathcal{X}}}
\def\gY{{\mathcal{Y}}}

\def\sP{{\mathbb{P}}}

\def\sR{{\mathbb{R}}}

\newcommand{\E}{\mathbb{E}}

\DeclareMathOperator*{\argmax}{arg\,max}
\DeclareMathOperator*{\argmin}{arg\,min}

\newcommand{\LRs}[1]{\left(#1\right)}

\newcommand{\LRl}[1]{\left\{#1\right\}}

\newcommand{\Prob}{\mathbb{P}}

\usepackage{algorithm}
\usepackage{algorithmic}
\usepackage{bbm}
\usepackage{amsmath,amssymb}
\usepackage{makecell}
\usepackage{bibunits}
\defaultbibliographystyle{asa}
\defaultbibliography{references}
\makeatletter
\newcommand{\setbibunitprefix}[1]{%
  \def\@extra@binfo{#1}%
  \def\@extra@b@citeb{#1}}
\makeatother

\usepackage{diagbox}
\usepackage{caption}
\usepackage{amsthm}
\newtheorem{theorem}{Theorem}[section]
\newtheorem{lemma}{Lemma}[section]

\newtheorem{assumption}{Assumption}
\newtheorem{proposition}{Proposition}[section]
\newtheorem{remark}{Remark}[section]

\newtheorem{definition}{Definition}

\usepackage{soul}

\usepackage{iftex}
\ifPDFTeX
  \usepackage[T1]{fontenc}
  \usepackage[utf8]{inputenc}
  \usepackage{textcomp} 
\else 
  \usepackage{unicode-math}
  \defaultfontfeatures{Scale=MatchLowercase}
  \defaultfontfeatures[\rmfamily]{Ligatures=TeX,Scale=1}
\fi
\usepackage{lmodern}
\ifPDFTeX\else  
\fi
\IfFileExists{upquote.sty}{\usepackage{upquote}}{}
\IfFileExists{microtype.sty}{
  \usepackage[]{microtype}
  \UseMicrotypeSet[protrusion]{basicmath} 
}{}
\makeatletter
\@ifundefined{KOMAClassName}{
  \IfFileExists{parskip.sty}{%
    \usepackage{parskip}
  }{
    \setlength{\parindent}{0pt}
    \setlength{\parskip}{6pt plus 2pt minus 1pt}}
}{
  \KOMAoptions{parskip=half}}
\makeatother
\usepackage{xcolor}
\makeatletter
\ifx\textbf\undefined\else
  \let\oldparagraph\textbf
  \renewcommand{\textbf}{
    \@ifstar
      \xxxParagraphStar
      \xxxParagraphNoStar
  }
  \newcommand{\xxxParagraphStar}[1]{\oldparagraph*{#1}\mbox{}}
  \newcommand{\xxxParagraphNoStar}[1]{\oldparagraph{#1}\mbox{}}
\fi
\ifx\subparagraph\undefined\else
  \let\oldsubparagraph\subparagraph
  \renewcommand{\subparagraph}{
    \@ifstar
      \xxxSubParagraphStar
      \xxxSubParagraphNoStar
  }
  \newcommand{\xxxSubParagraphStar}[1]{\oldsubparagraph*{#1}\mbox{}}
  \newcommand{\xxxSubParagraphNoStar}[1]{\oldsubparagraph{#1}\mbox{}}
\fi
\makeatother

\usepackage{longtable,booktabs,array}
\usepackage{calc} 
\usepackage{etoolbox}
\makeatletter
\patchcmd\longtable{\par}{\if@noskipsec\mbox{}\fi\par}{}{}
\makeatother
\IfFileExists{footnotehyper.sty}{\usepackage{footnotehyper}}{\usepackage{footnote}}
\makesavenoteenv{longtable}
\usepackage{graphicx}
\makeatletter
\def\maxwidth{\ifdim\Gin@nat@width>\linewidth\linewidth\else\Gin@nat@width\fi}
\def\maxheight{\ifdim\Gin@nat@height>\textheight\textheight\else\Gin@nat@height\fi}
\makeatother
\setkeys{Gin}{width=\maxwidth,height=\maxheight,keepaspectratio}
\makeatletter
\def\fps@figure{htbp}
\makeatother

\makeatletter
\@ifpackageloaded{caption}{}{\usepackage{caption}}
\AtBeginDocument{%
\ifdefined\contentsname
  \renewcommand*\contentsname{Table of contents}
\else
  \newcommand\contentsname{Table of contents}
\fi
\ifdefined\listfigurename
  \renewcommand*\listfigurename{List of Figures}
\else
  \newcommand\listfigurename{List of Figures}
\fi
\ifdefined\listtablename
  \renewcommand*\listtablename{List of Tables}
\else
  \newcommand\listtablename{List of Tables}
\fi
\ifdefined\figurename
  \renewcommand*\figurename{Figure}
\else
  \newcommand\figurename{Figure}
\fi
\ifdefined\tablename
  \renewcommand*\tablename{Table}
\else
  \newcommand\tablename{Table}
\fi
}
\@ifpackageloaded{float}{}{\usepackage{float}}
\floatstyle{ruled}
\@ifundefined{c@chapter}{\newfloat{codelisting}{h}{lop}}{\newfloat{codelisting}{h}{lop}[chapter]}
\floatname{codelisting}{Listing}

\makeatother
\makeatletter
\@ifpackageloaded{caption}{}{\usepackage{caption}}
\@ifpackageloaded{subcaption}{}{\usepackage{subcaption}}
\makeatother

\ifLuaTeX
  \usepackage{selnolig}  
\fi
\usepackage[]{natbib}
\usepackage{bookmark}
\usepackage{setspace}
\usepackage{booktabs}
\usepackage{multirow}

\IfFileExists{xurl.sty}{\usepackage{xurl}}{} 
\hypersetup{
  pdftitle={Conformalized Safe Feasible Sets in Uncertain Decision Systems},
  pdfauthor={Yajie Bao; Yinjie Min; Haojie Ren; Changliang Zou},
  pdfkeywords={Conformal inference; Constrained optimization; Data-driven decision; Inclusion guarantee; Setwise calibration.},
  colorlinks=true,
  linkcolor={blue},
  filecolor={Maroon},
  citecolor={Blue},
  urlcolor={Blue},
  pdfcreator={LaTeX via pandoc}}

\newcommand{\anon}{1}

\definecolor{HJblue}{RGB}{0,85,170}

\begin{document}

\def\spacingset#1{\renewcommand{\baselinestretch}%
{#1}\small\normalsize} \spacingset{1}

\makeatletter
\newcommand{\printfnsymbol}[1]{%
  \textsuperscript{\@fnsymbol{#1}}%
}
\makeatother

\begin{bibunit}
\setbibunitprefix{-main}

\if1\anon
{
  \title{\LARGE\bf Conformalized Safe Feasible Sets in Uncertain\\ Decision Systems}
  \author{Yajie Bao$^1$\thanks{All authors are listed in alphabetical order.}, Yinjie Min$^2$, Haojie Ren$^3$, and Changliang Zou$^2$\\
$^1$ {\normalsize School of Management, Fudan University}\\
$^2$ {\normalsize School of Statistics and Data Science, Nankai University}\\
$^3$ {\normalsize School of Mathematical Sciences, Shanghai Jiao Tong University}}
  \maketitle
} \fi

\if0\anon
{
  \bigskip
  \bigskip
  \bigskip
  \begin{center}
    {\LARGE\bf Conformalized Safe Feasible Sets in Uncertain\\ Decision Systems}
\end{center}
  \medskip
} \fi

\bigskip
\begin{abstract}
    {Safety-critical decision systems often require a downstream optimizer to choose from an unknown feasible set determined by an unobserved label $Y$. Given a context $X$, the
    goal is to construct a safe subset $D(X)$ contained in the oracle feasible set $A(X,Y)$ with probability at least $1-\alpha$. 
    Existing conformal approaches typically construct a prediction set of the unobserved label $Y$ and retain decisions that are safe for every value in this set. Although valid, this requires a stronger intermediate event than set inclusion.
    We propose \emph{Directed Inclusion Safety Calibration} (\texttt{DISC}), a conformal framework that directly controls the probability of this inclusion event by reducing its verification to a scalar critical-inclusion score. 
    Given a pretrained nested family of candidate feasible sets, \texttt{DISC} assigns each labeled observation the smallest nestedness level at which the corresponding subset is contained in $A(X,Y)$, and constructs the safe feasible set using an empirical quantile at test-time. 
    With data exchangeability, this yields a finite-sample, distribution-free inclusion guarantee. 
    {Under two practical set families, we show that \texttt{DISC} produces a safe feasible set containing that obtained by the corresponding calibration baseline.}
  We further develop optimization-based score computation and decision-aware procedures for learning subset families. Experiments across continuous and structured decision problems show that \texttt{DISC} achieves the target inclusion guarantee while producing larger feasible regions.}
\end{abstract}

\noindent%
{\it Keywords:} Conformal inference; Constrained optimization; Data-driven decision; Inclusion guarantee; Setwise calibration.
\vfill

\newpage
\spacingset{1.7} 


\section{Introduction}

Modern decision systems often act before safety-relevant uncertainty is fully resolved. For example, resource allocations are determined before future demand is realized, while an autonomous vehicle plans its motion without knowing the future trajectories of surrounding obstacles. {In such problems, the object passed to a downstream optimizer is often not a prediction of the unknown label itself, but a feasible set that includes a set of admissible decisions and typically depends on the unknown label. The key problem is therefore how to construct a {decision set} that is contained in the unknown feasible set with high probability.}

Let $u\in\gU$ denote a decision variable, where $\gU$ is the decision space that can be discrete or continuous. Let $ (X, Y) \in\gX\times\gY$ denote a pair of random variables associated with the system, where $X$ represents the observed context and $Y$ denotes the label that remains unobserved at the decision stage. {Let $\phi(\cdot): \gU \to \sR$ be the deterministic objective function, and let $f(\cdot;X,Y): \gU\to\sR$ be the data-dependent constraint function. We consider the following constrained optimization problem:
\[
\min_{u\in \gU} \phi(u) \quad \text{s.t.}\quad f(u;X,Y) \leq 0.
\]
The {\it oracle (true) feasible set} of the problem above is thus
\begin{equation}\label{eq:absolute_safety_region}
    A(X,Y)=\LRl{u\in\gU:f(u;X,Y)\leq 0}.
\end{equation}
When the true label $Y$ is unknown, a pretrained point predictor $\mathsf{ML}(\cdot):\gX\to\gY$ provides a natural plug-in estimate of $A(X,Y)$ by replacing the true label $Y$ with its prediction $\mathsf{ML}(X)$ in \eqref{eq:absolute_safety_region}. However, prediction error may cause the resulting set $A(X, \mathsf{ML}(X))$ to contain decisions that are infeasible under $Y$.}

Given independent and identically distributed (i.i.d.) data $\{(X_i,Y_i)\}_{i=1}^n$, together with the predictor $\mathsf{ML}(\cdot)$, our goal is to {construct a safe feasible set $D(X)\subseteq\gU$ satisfying the following safety inclusion guarantee}:
\begin{equation}\label{eq:set_inclusion}
    \sP\bigl\{D(X)\subseteq A(X,Y)\bigr\}\geq 1-\alpha,
\end{equation}
where $\alpha\in(0,1)$ is a prespecified safety level.
This is a {setwise feasibility guarantee}: on the inclusion event $D(X) \subseteq A(X,Y)$, every decision $u\in D(X)$ satisfies the feasibility condition under the true label $Y$, allowing an arbitrary objective $\phi(u)$ to be optimized over the same $D(X)$. Since an empty $D(X)$ satisfies \eqref{eq:set_inclusion} trivially, the key challenge is to construct a valid safe feasible set that is sufficiently large and flexible to support effective decision-making.

A canonical application is chance-constrained programming \citep{charnes1959chance,nemirovski2006convex}, which seeks to minimize the objective \(\phi(u)\) subject to a probabilistic constraint \(\sP\{f(u;X,Y)\leq 0\}\geq 1-\alpha\). Here, \(X\) denotes observed contextual information, and \(Y\) captures unresolved future uncertainty, such as demand or system disturbances. In this case, the safe feasible set \(D(X)\) satisfying \eqref{eq:set_inclusion} provides a data-dependent inner approximation to the unknown feasible set \(A(X,Y)\), 
and any downstream solution \(\widehat u(X)\in\argmin_{u\in D(X)}\phi(u)\) inherits the probabilistic feasibility. {For another example }of the control or planning problem, $Y$ may represent the future positions of a moving obstacle or another agent, and $f(u;X,Y)$ measures violation of a required separation between the decision trajectory and the obstacle \citep{lindemann2023safe,dixit2023adaptive}. {In this case, the safe feasible set $D(X)$ is a region where every trajectory can avoid collision with the obstacles with high probability.}
The same formulation can also extend to structured output problems, such as LLM reasoning problems, where $Y$ records the validity of generated claims and $f(u;X,Y)$ measures whether the claims retained by a candidate output satisfy the desired validity criterion \citep{wei2022chain,zhouleast}. Although the decision spaces and feasibility criteria differ across these applications, they share the same target: identifying a collection of decisions that are feasible under the unknown realization $Y$. 

\subsection{Our framework and contributions}

Classical predictive inference provides a {baseline} route to achieving the inclusion target \eqref{eq:set_inclusion}. Given labeled data and an arbitrary predictive model, conformal prediction \citep{vovk2005algorithmic,lei2018distribution} constructs a prediction set \(C(X)\subseteq\gY\) satisfying \(\sP\{Y\in C(X)\}\geq 1-\alpha\) under exchangeability. 
Robust optimization \citep{ben2009robust,chenreddy2022data} then retains only decisions that are safe for all \(y\in C(X)\), {producing the feasible set
\begin{equation}\label{eq:RO_safe_set}
    D^{\texttt{Base}}(X) = \{u\in\gU:f(u;X,y)\leq 0,\ \forall~y\in C(X)\}.
\end{equation}
Whenever \(Y\in C(X)\), this subset $D^{\texttt{Base}}(X)$ is contained in the oracle feasible set \(A(X,Y)\), and hence the label coverage,  $\sP\{Y\in C(X)\}\geq 1-\alpha$, implies the desired safety guarantee \eqref{eq:set_inclusion}.} However, the label coverage is sufficient but not necessary for \eqref{eq:set_inclusion}. Even when \(Y\notin C(X)\), the constructed subset may still be entirely safe under the realized \(Y\). {Controlling label coverage therefore controls a stronger intermediate event than the required feasibility.} This mismatch can unnecessarily shrink the feasible set and exclude useful decisions, or even produce an empty set. 

Motivated by the gap between label coverage and set inclusion, we propose a new framework \emph{Directed Inclusion Safety Calibration} (\texttt{DISC}) that directly controls the probability of the desired inclusion event. Specifically, we consider a nested family of candidate feasible sets $\{D(X;\lambda):\lambda\in\sR\}$, where $D(X;\lambda)\subseteq\gU$ and $D(X;\lambda_2)\subseteq D(X;\lambda_1)$ whenever $\lambda_1\leq\lambda_2$. Thus, larger values of $\lambda$ correspond to more conservative feasible sets. This formulation separates the construction of a safe feasible set into two components: specifying a nested set family that incorporates available predictive or structural information, and calibrating the nestedness level $\lambda$ using labeled data. 
{To calibrate $\lambda$, for each labeled observation \((X_i,Y_i)\), \texttt{DISC} defines the score \(V_i=\inf\{\lambda\in\sR:D(X_i;\lambda)\subseteq A(X_i,Y_i)\}\), the smallest nestedness level required for the candidate subset to be contained in the corresponding oracle feasible set. The test-time nestedness level is then chosen as an appropriate empirical quantile of these scores. 
By using the target inclusion event directly, rather than stronger sufficient conditions such as label coverage, \texttt{DISC} can reduce conservativeness while retaining feasibility.} 
Our contributions are summarized as follows.
\begin{itemize}
    \item 
    For any fixed nested feasible set family, \texttt{DISC} provides a finite-sample, distribution-free inclusion guarantee under exchangeability. 
    {Specifically, we consider two practical nested set families. The first is termed the \textit{prediction-set} family, which follows the robust optimization construction in \eqref{eq:RO_safe_set}, with $C(X)$ replaced by a sequence of nested prediction sets. The second is termed the \textit{additive residual} family, which adds a tunable margin to the predicted constraint value $f(u;X,\mathsf{ML}(X))$ to account for prediction error and retains decisions for which the adjusted value is nonpositive. For both families, we show that the \texttt{DISC} safe feasible sets contain those returned by the existing baseline calibration methods.}
    

    \item 
    {Computing the \texttt{DISC} score typically requires verifying set inclusion over a potentially continuous decision space. We formulate this verification as a constrained optimization problem at each fixed nestedness level $\lambda$ and use bisection over $\lambda$ to locate the score. We further provide family-specific reductions to simplify the optimization. When an exact solution of the constrained optimization is unavailable, we construct computable upper bounds on the \texttt{DISC} scores while preserving the finite-sample inclusion guarantee.}
    
    

    \item 
    We develop \texttt{O-DISC}, which optimizes a parameterized family of feasible sets according to downstream decision risk. We establish an upper bound on the inclusion error of the optimized subset that quantifies the effect of family optimization, together with a nonasymptotic excess decision-risk bound relative to a population benchmark. We further develop \texttt{FO-DISC}, which restores the exact finite-sample inclusion guarantee at an additional computational cost.
    

\end{itemize}

\subsection{Related work}\label{sec:related_work}

Conformal prediction \citep{vovk2005algorithmic,lei2018distribution} provides prediction sets satisfying finite-sample marginal coverage under exchangeability without requiring a correctly specified prediction model. Several extensions move beyond label coverage to more general risk criteria depending on the label $Y$. \citet{bates2021distribution} develop risk-controlling prediction sets that bound a population-level risk with target probability. The Learn-then-Test framework developed by \citet{angelopoulos2025learn} aims to output a post-processing of prediction values that satisfy some statistical error rate control. Given a loss $\ell_{\lambda}(X,Y) \leq 1$ that varies monotonically along the parameter $\lambda$, \citet{angelopoulos2024conformal} propose the conformal risk control (CRC) framework to choose the test-time parameter via $\hat\lambda = \inf\left\{\lambda\in \sR: \frac{1}{n+1}(\sum_{i=1}^n \ell_{\lambda}(X_i,Y_i) + 1) \le \alpha\right\}$, which enjoys $\E[\ell_{\hat\lambda}(X,Y)] \leq \alpha$ under exchangeability. In our setting, the inclusion target \eqref{eq:set_inclusion} is equivalent to controlling the expectation of the binary loss $\mathbbm{1}\{D(X;\lambda)\nsubseteq A(X,Y)\}$, so CRC provides the underlying calibration principle. However, evaluating this binary loss requires verifying the constraint over the entire candidate feasible set. The generic CRC framework does not itself resolve this verification problem, so computing $\hat\lambda$ remains challenging.

Next, we review related work in two application areas relevant to the inclusion target \eqref{eq:set_inclusion}. For chance-constrained programming, classical approaches \citep{ben2009robust,nemirovski2006convex} use structural and distributional information to construct tractable safe approximations of the chance constraint $\sP\{f(u;X,Y) \le 0\}\ge 1-\alpha$. More recently, \citet{zhao2024conformal} propose conformal predictive programming, which replaces chance constraints with empirical-quantile constraints and uses an independent calibration set to obtain a posteriori feasibility guarantees for the resulting decision. In planning and control problems, several works \citep{lindemann2023safe,dixit2023adaptive,yang2023safe} construct safe decision regions of the baseline form \eqref{eq:RO_safe_set} by enforcing feasibility uniformly over conformal prediction sets.
These approaches obtain simultaneous feasibility through label coverage. \texttt{DISC} instead targets the inclusion event itself, avoiding a sufficient condition that can be unnecessarily restrictive.

Several works apply conformal prediction to related decision-making tasks. 
For example, \citet{johnstone2021conformal}, \citet{sun2023predict} and \citet{patel2024conformal} use conformal prediction sets as the uncertainty set in contextual robust optimization problems \citep{chenreddy2022data}, where the robustness of decisions can be guaranteed by the coverage of prediction sets. In addition, another line tailors the construction of conformal prediction sets to the downstream decision cost \citep{kiyani2025decision,cortesgomez2025utility,yeh2024end,bao2025optimal,wang2026optimal}. Specifically, \citet{hu2026conformal} develop a new method to directly control the robustness of the solution of robust optimization, instead of label coverage of prediction sets.

\paragraph*{Organization.}
The remainder of the paper is organized as follows. Section~\ref{sec:method} introduces \texttt{DISC}, develops methods for computing and conservatively upper-bounding its calibration scores, and instantiates the framework under two nested feasible set families. Section~\ref{sec:family_optimization} develops efficiency-aware optimization of safe feasible sets and studies its inclusion and decision-risk guarantees. Sections~\ref{sec:simulation} and~\ref{sec:real_data} present simulation studies and real-data applications, respectively, and Section~\ref{sec:conclusion} concludes and gives future directions. 


\section{Conformal Construction for Safe Feasible Sets}\label{sec:method}

We now formalize \texttt{DISC} for a nested family of feasible sets. Let labeled data $\gD_n=\{(X_i,Y_i)\}_{i=1}^n$ denote the calibration data, and let $X_{n+1}$ be a test context with unknown label $Y_{n+1}$. Given a nested family \(    \{D(X;\lambda):\lambda\in\sR\}\) such that \(D(X;\lambda_2)\subseteq D(X;\lambda_1)\) whenever \(\lambda_1\leq\lambda_2,\) our goal is to calibrate the nestedness level such that \eqref{eq:set_inclusion} holds for the test data.

\subsection{Directed inclusion safety calibration method}\label{sec:dir_inclusion}

For any $i\in[n+1]$ and $\lambda\in\sR$, the candidate subset is safe under the realized label \(Y_i\) if and only if its worst-case safety violation is nonpositive:
\begin{align}
    D(X_i;\lambda)\subseteq A(X_i,Y_i)\Longleftrightarrow \sup_{u\in D(X_i;\lambda)}f(u;X_i,Y_i)\leq0.\nonumber
\end{align}
{Since \(D(X_i;\lambda)\) shrinks as \(\lambda\) increases, the inclusion event is monotone in \(\lambda\): once the candidate subset becomes safe, it remains safe at every larger nestedness level. Consequently, each observation \((X_i,Y_i)\) determines a single critical threshold separating unsafe and safe candidate subsets.}

This equivalence motivates the explicit form of the \texttt{DISC} score: $V_i$ is the smallest nestedness level at which the entire candidate subset is contained in the oracle feasible set, that is, 
\begin{equation}\label{eq:dir_score}
    V_i
    =
    \inf\left\{
        \lambda\in\sR:
        \sup_{u\in D(X_i;\lambda)}
        f(u;X_i,Y_i)\leq0
    \right\}.
\end{equation}
Throughout the paper, we assume that the infimum in the definition \eqref{eq:dir_score} is attained, namely, for each \(i\in[n+1]\), the score \(V_i\) is finite and $\sup_{u\in D(X_i;V_i)}f(u;X_i,Y_i)\leq0$. Together with nestedness, this implies that $D(X_i;\lambda)\subseteq A(X_i,Y_i)$ if and only if $V_i\leq\lambda$. 
{This is the central reduction underlying \texttt{DISC}, which translates the original set-valued inclusion exactly into a scalar threshold event. 

Thus, we can calibrate the nestedness level using the empirical quantile of these \texttt{DISC} scores.} Specifically, let $\hat\lambda=\mathsf{Q}_{1-\alpha}\LRs{\{V_i\}_{i=1}^n,\infty}$, where $\mathsf{Q}_{1-\alpha}(
\cdot)$ denotes the empirical $(1-\alpha)$-quantile. The resulting \texttt{DISC} safe feasible set is $\widehat D^{\texttt{DISC}}(X_{n+1})=D(X_{n+1};\hat\lambda)$.

\begin{theorem}\label{thm:inclusion}
If $\{(X_i,Y_i)\}_{i=1}^{n+1}$ are exchangeable, then $\sP\bigl\{\widehat D^{\texttt{DISC}}(X_{n+1})\subseteq A(X_{n+1},Y_{n+1})\bigr\} \geq1-\alpha.$
Additionally, if $\{V_i\}_{i=1}^{n+1}$ are distinct with probability one, then $1-\alpha
    \leq
    \sP\bigl\{
        \widehat D^{\texttt{DISC}}(X_{n+1})
        \subseteq A(X_{n+1},Y_{n+1})
    \bigr\}
    \leq
    1-\alpha+1/(n+1)$.
\end{theorem}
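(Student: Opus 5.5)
The plan is to reduce the set-valued inclusion event to a scalar order statistic event and then apply the standard conformal quantile argument. By the attainment assumption and nestedness, established just before the statement, we have for every $i\in[n+1]$ and every $\lambda\in\sR$ that $D(X_i;\lambda)\subseteq A(X_i,Y_i)$ if and only if $V_i\leq\lambda$. Applying this with $i=n+1$ and the data-dependent level $\lambda=\hat\lambda$ (legitimate because the equivalence holds pointwise for every realization), the target event becomes
\[
\bigl\{\widehat D^{\texttt{DISC}}(X_{n+1})\subseteq A(X_{n+1},Y_{n+1})\bigr\}=\bigl\{V_{n+1}\leq\hat\lambda\bigr\},
\]
where $\hat\lambda=\mathsf{Q}_{1-\alpha}(\{V_i\}_{i=1}^n,\infty)$ is the $\lceil(1-\alpha)(n+1)\rceil$-th smallest element of the multiset $\{V_1,\dots,V_n,\infty\}$. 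It therefore suffices to prove the usual split-conformal coverage statement for the scores.

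Next I would observe that each $V_i$ is the same deterministic functional of $(X_i,Y_i)$, since the family $\{D(x;\lambda)\}$ is fixed (pretrained) and $f$ is deterministic. Hence exchangeability of $\{(X_i,Y_i)\}_{i=1}^{n+1}$ transfers to exchangeability of $(V_1,\dots,V_{n+1})$. Then comes the standard lemma: with $k=\lceil(1-\alpha)(n+1)\rceil$, the event $V_{n+1}\leq\hat\lambda$ coincides with $V_{n+1}\leq V_{(k)}$, where $V_{(k)}$ is the $k$-th order statistic of the full set $\{V_1,\dots,V_{n+1}\}$. The key step here is checking both directions when $\infty$ is replaced by $V_{n+1}$: if $V_{n+1}\le\hat\lambda$, inserting $V_{n+1}$ in place of $\infty$ does not push the $k$-th value above $\hat\lambda$; conversely, $V_{(k)}\le\hat\lambda$ always holds. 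If $k>n$, then $\hat\lambda=\infty$ and the bound is trivial.

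By exchangeability, the rank of $V_{n+1}$ among the $n+1$ scores is stochastically dominated by a uniform rank with ties broken favorably, so $\sP\{V_{n+1}\le V_{(k)}\}\ge k/(n+1)\ge1-\alpha$. This gives the lower bound. For the upper bound under almost-sure distinctness, the rank is exactly uniform on $[n+1]$. The probability then equals $k/(n+1)\le\bigl((1-\alpha)(n+1)+1\bigr)/(n+1)=1-\alpha+1/(n+1)$.

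The main obstacle is not probabilistic but the reduction step: one must ensure the equivalence $D(X_{n+1};\hat\lambda)\subseteq A\iff V_{n+1}\le\hat\lambda$ holds even though $\hat\lambda$ is random and possibly $+\infty$. I would handle this by noting that the attainment assumption gives the equivalence for each fixed $\lambda$ on every sample path, and I would interpret $D(X;\infty)$ as the limit set (or adopt the convention that the event holds when $\hat\lambda=\infty$), so the case $\hat\lambda=\infty$ is treated separately and is trivially favorable.
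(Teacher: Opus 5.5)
Your proposal is correct and follows the paper's proof. Like the paper, you reduce the inclusion event to $\{V_{n+1}\le\hat\lambda\}$ via nestedness and attainment, transfer exchangeability to the scores, and invoke the standard split-conformal rank argument, which you spell out in more detail than the paper does. One small slip: your ``both directions'' check states the same fact $V_{(k)}\le\hat\lambda$ twice, so it does not establish $V_{n+1}\le\hat\lambda\Rightarrow V_{n+1}\le V_{(k)}$, which the upper bound needs; argue it instead by noting that if $V_{n+1}>V_{(k)}$, then at least $k$ of $V_1,\dots,V_n$ lie strictly below $V_{n+1}$, forcing $\hat\lambda<V_{n+1}$.
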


Theorem~\ref{thm:inclusion} provides a finite-sample inclusion guarantee analogous to the standard coverage guarantee of split conformal prediction \citep{lei2018distribution}. {Rather than conformalizing a label-space score such as a prediction residual, \texttt{DISC} conformalizes the amount of shrinkage required to make an entire decision subset safe. Once this setwise condition has been encoded by the scalar score \(V_i\), finite-sample validity follows directly from exchangeability.}

{Before proceeding, let us discuss a useful decision-space interpretation of the score $V_i$ in \eqref{eq:dir_score}. Since inclusion fails precisely when \(D(X_i;\lambda)\) contains at least one unsafe decision satisfying \(f(u;X_i,Y_i)>0\), $V_i$ is equivalently the nestedness level at which the last remaining unsafe decision is excluded. Explicitly, that is $V_i = \sup_{u\in A^c(X_i,Y_i)}\inf\LRl{\lambda\in\sR: u\notin D(X_i;\lambda)}$, where $A^c(X_i,Y_i)=\{u\in\gU:f(u;X_i,Y_i)>0\}$ denotes the oracle infeasible set.}

The nested structure also facilitates score computation. Since the subset $D(X_i;\lambda)$ shrinks as $\lambda$ increases, the map $\lambda\mapsto\sup_{u\in D(X_i;\lambda)}f(u;X_i,Y_i)$ is nonincreasing. Hence, whenever the inner supremum in \eqref{eq:dir_score} can be solved, we can compute $V_i$ using bisection. Section~\ref{sec:score_computation} develops exact and conservative procedures for the $V_i$'s computation.



\subsection{Instantiation of \texttt{DISC} under two nested families}\label{sec:instantiation}

The validity result above requires only a fixed nested family. To turn this general calibration rule into practical constructions, we specify the candidate subset family through a \emph{margin function} $L_\lambda(\cdot;X):\gU\to\sR$, that is
\begin{equation}\label{eq:safety_margin_representation}
    D(X;\lambda)
    =
    \{u\in\gU:L_\lambda(u;X)\leq0\}.
\end{equation}
Here $L_\lambda(u;X)$ is nondecreasing in $\lambda$, i.e., $L_{\lambda_1}(u;X) \leq L_{\lambda_2}(u;X)$ if $\lambda_1\leq \lambda_2$, ensuring the required nestedness of $D(X;\lambda)$. This representation is not required for the general validity of \texttt{DISC}, but it provides a convenient structure for constructing practical safe feasible sets and deriving tractable \texttt{DISC} scores. 
The representation \eqref{eq:safety_margin_representation} admits two practically useful instantiations. 
Under both families, the structure of $L_\lambda$ yields an explicit representation of the \texttt{DISC} score in \eqref{eq:dir_score} and facilitates comparison with existing calibration rules.

\subsubsection{Prediction-set margin family}\label{sec:pred_set_family}

{We first consider a family induced by prediction sets in the label space. Let \(C(X;\lambda)=\{y\in\gY:s(X,y)\leq\lambda\}\), where $s:\gX\times\gY\to\sR$ is a pretrained nonconformity score.} 
The prediction-set margin function is defined by  \(L_\lambda(u;X)=\sup_{y\in C(X;\lambda)}f(u;X,y)\). 
{The corresponding candidate feasible set is $D(X;\lambda)=\LRl{u\in\gU: \sup_{y\in C(X;\lambda)}f(u;X,y)\leq 0}$, which consists of decisions that are safe for every label in $C(X;\lambda)$. As a larger $\lambda$ produces a larger prediction set $C(X;\lambda)$, the margin function \(L_\lambda(u;X)\) is nondecreasing and \(D(X;\lambda)\) becomes smaller. The family therefore satisfies the required nestedness. }
Additionally, for many commonly used shapes of prediction sets, such as ellipsoids, boxes, and polyhedra, $L_\lambda(u;X)$ can be computed either in closed form or by solving a standard convex optimization problem. Detailed formulations are provided in Appendix~\ref{appen:score_computation}.

Proposition~\ref{prop:dir_score_PS} states that {the prediction-set margin family yields the following explicit form of }the \texttt{DISC} score. 

\begin{proposition}
\label{prop:dir_score_PS}
Under the prediction-set margin family, the \texttt{DISC} score \eqref{eq:dir_score} is
\begin{align}\label{eq:dir_score_PS}
    V_i
    =
    \sup_{u\in A^c(X_i,Y_i)}\,
    \LRl{\inf_{y\in\gY:\,f(u;X_i,y)>0}
    s(X_i,y)}.
\end{align}
\end{proposition}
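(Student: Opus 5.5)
The plan is to start from the decision-space interpretation of the score stated after Theorem~\ref{thm:inclusion}:
\[
V_i=\sup_{u\in A^c(X_i,Y_i)}\inf\{\lambda\in\sR: u\notin D(X_i;\lambda)\}.
\]
It then suffices to show that, for each fixed unsafe decision $u\in A^c(X_i,Y_i)$, the exclusion level $\lambda^*(u):=\inf\{\lambda: u\notin D(X_i;\lambda)\}$ equals $\inf\{s(X_i,y): y\in\gY,\ f(u;X_i,y)>0\}$. Substituting this into the display gives \eqref{eq:dir_score_PS} directly.

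Before that, I will verify the interpretation itself, since the paper only asserts it. By nestedness and the attainment assumption, $\lambda\ge V_i$ if and only if $D(X_i;\lambda)\subseteq A(X_i,Y_i)$. This holds if and only if every unsafe $u$ satisfies $u\notin D(X_i;\lambda)$. Because $u\notin D(X_i;\lambda)$ is monotone in $\lambda$, this in turn holds if $\lambda>\lambda^*(u)$ for all unsafe $u$, and only if $\lambda\ge\lambda^*(u)$ for all unsafe $u$. Taking infima over $\lambda$, both the lower and upper bounds pin $V_i$ to $\sup_u\lambda^*(u)$. The edge case where $A^c$ is empty gives $\sup\emptyset=-\infty$ on both sides, which is consistent with the convention in the statement.

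The core step fixes $u\in A^c(X_i,Y_i)$ and sets $\tau(u)=\inf\{s(X_i,y): f(u;X_i,y)>0\}$. By definition of the family, $u\notin D(X_i;\lambda)$ if and only if $\sup_{y\in C(X_i;\lambda)} f(u;X_i,y)>0$, which holds if and only if there is some $y$ with $s(X_i,y)\le\lambda$ and $f(u;X_i,y)>0$. The second equivalence is the key point: a supremum is strictly positive exactly when some element is strictly positive, so no attainment of the inner supremum is needed.
\begin{itemize}
\item If $\lambda>\tau(u)$, the definition of the infimum gives some $y$ with $f(u;X_i,y)>0$ and $s(X_i,y)<\lambda$, so $u$ is excluded.
\item If $\lambda<\tau(u)$, every $y$ with $f(u;X_i,y)>0$ has $s(X_i,y)>\lambda$, so $C(X_i;\lambda)$ contains no such $y$ and $u\in D(X_i;\lambda)$.
\end{itemize}
Hence $\lambda^*(u)=\tau(u)$. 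The set over which $\tau(u)$ is taken is nonempty, since $y=Y_i$ belongs to it because $u$ is unsafe under $Y_i$. This guarantees $\tau(u)\le s(X_i,Y_i)<\infty$.

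The main obstacle is the boundary behavior at $\lambda=\tau(u)$ together with the interchange of infimum and supremum. My plan avoids both by characterizing $\lambda^*(u)$ only through strict inequalities $\lambda\gtrless\tau(u)$, which suffices for an infimum. I will also rely on the paper's standing assumption that $V_i$ is attained, so that the equivalence used in the first step is exact.
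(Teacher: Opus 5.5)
Your proof is correct and follows the same route as the paper's. You fix an unsafe $u\in A^c(X_i,Y_i)$, show that its exclusion level from $D(X_i;\lambda)$ equals $\inf_{y:\,f(u;X_i,y)>0}s(X_i,y)$ because $\sup_{y\in C(X_i;\lambda)}f(u;X_i,y)>0$ exactly when $C(X_i;\lambda)$ contains an unsafe witness, and then take the supremum over unsafe decisions. Beyond the paper, you verify the pointwise-exclusion identity for $V_i$ (by sandwiching the safe set of levels between the strict and non-strict rays) and handle the boundary case; that step needs only the definition of $V_i$ as an infimum, not the attainment assumption.
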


{To interpret \eqref{eq:dir_score_PS}, consider a decision \(u\in A^c(X_i,Y_i)\) that is unsafe under the realized label \(Y_i\). Such a decision \(u\) should be excluded from  \(D(X_i;\lambda)\) once \(C(X_i;\lambda)\) contains at least one label \(y\) under which \(u\) is unsafe. The inner problem in \eqref{eq:dir_score_PS} is the smallest threshold of this prediction set required to contain such an unsafe witness $y$.
The outer supremum then identifies the unsafe decision that is hardest to exclude. 
Consequently, \(V_i\) is the smallest threshold at which every truly unsafe decision has an unsafe witness inside \(C(X_i;\lambda)\), and hence has been removed from the candidate subset.

This representation in \eqref{eq:dir_score_PS} also separates the score computation into two components: the inner optimization only over the label space and the outer optimization over the oracle infeasible set. The inner problem is tractable for many commonly used prediction scores \(s(X_i,\cdot)\) and constraint functions \(f(u;X_i,\cdot)\).} 
The following remark illustrates two cases in which the inner problem admits a closed-form or efficiently computable solution.

\begin{remark}
Consider the Mahalanobis score \(s(x,y)=(y-\mathsf{ML}(x))^\top\Sigma(y-\mathsf{ML}(x))\) with a positive definite symmetric matrix \(\Sigma\), which leads to an ellipsoidal prediction set \(C(X;\lambda)\). If \(f(u;X_i,y)=a_i(u)^\top y+b_i(u)\) with \(a_i(u)\neq0\), then the inner infimum of \eqref{eq:dir_score_PS} has a closed form: $\big[-\{a_i(u)^\top\mathsf{ML}(X_i)+b_i(u)\}\big]_+^2/\{a_i(u)^\top\Sigma^{-1} a_i(u)\}$, where $[\cdot]_+$ denotes the positive part. Another example is to suppose that $f(u;X_i,y)=r_i-\|y-c_i(u)\|_2$ with $r_i>0$. The unsafe label set is then an open Euclidean ball. The inner infimum is zero when $\|\mathsf{ML}(X_i)-c_i(u)\|_2\leq r_i$; otherwise, it is the squared Mahalanobis distance to the closed ball.

\end{remark}

\paragraph*{Label coverage calibration baseline.}
{The same prediction-set family naturally gives rise to the {standard predict-then-robustify} procedure from a robust optimization perspective. That is,} $C(X;\lambda)$ serves as an uncertainty set for the unknown label $Y$, and imposing $\sup_{y\in C(X;\lambda)}f(u;X,y)\leq 0$ gives the robust counterpart \citep{ben2009robust} of the unknown constraint $f(u;X,Y)\leq 0$. Existing baselines \citep{johnstone2021conformal,sun2023predict} calibrate the threshold of $C(X;\lambda)$ directly for label coverage. 
The resulting coverage-based score is defined as
\begin{align}\label{eq:score_coverage_PS}
    V_i^{\texttt{Base}} = \inf\{\lambda\in \sR: Y_i\in C(X_i;\lambda)\} = s(X_i,Y_i).
\end{align}
Let $\hat\lambda^{\texttt{Base}} =\mathsf{Q}_{1-\alpha} \LRs{\{V_i^{\texttt{Base}}\}_{i=1}^n,\infty}$. Then $C(X_{n+1};\hat\lambda^{\texttt{Base}})$ is precisely the usual split conformal prediction set \citep{lei2018distribution} and therefore satisfies a finite-sample label coverage guarantee \(\sP\LRl{Y_{n+1}\in C(X_{n+1};\hat\lambda^{\texttt{Base}})}\geq 1-\alpha\). The induced safe feasible set $\widehat D^{\texttt{Base}}(X_{n+1})=D(X_{n+1};\hat\lambda^{\texttt{Base}})$ satisfies the desired safety guarantee \eqref{eq:set_inclusion}. {However, the converse need not hold as the resulting decision subset may remain entirely safe even when the prediction set fails to cover \(Y_{n+1}\).} Thus, controlling label coverage controls a stronger event and can be more conservative than \texttt{DISC}. 

{The difference between the two procedures is transparent from their scores. For any unsafe decision \(u\in A^c(X_i,Y_i)\), the realized label \(Y_i\) is feasible for the inner optimization in \eqref{eq:dir_score_PS}. Hence, \(\inf_{y\in\gY:\,f(u;X_i,y)>0}s(X_i,y)\leq s(X_i,Y_i)\), and taking the supremum over \(u\in A^c(X_i,Y_i)\) gives the precise comparison in Proposition~\ref{prop:ps_scp_comparison}.} 



\begin{proposition}
\label{prop:ps_scp_comparison}
Under the prediction-set margin family, for every calibration point $i\in[n]$, we have
\(V_i\leq V_i^{\texttt{Base}}\). Consequently, 
\(\widehat D^{\texttt{Base}}(X_{n+1})\subseteq\widehat D^{\texttt{DISC}}(X_{n+1})\) holds almost surely.
\end{proposition}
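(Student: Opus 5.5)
The plan is to prove the pointwise score inequality from the representation in Proposition~\ref{prop:dir_score_PS}, and then obtain the set inclusion from the monotonicity of the empirical quantile together with nestedness.

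\textbf{Step 1: pointwise comparison $V_i\leq V_i^{\texttt{Base}}$.} By Proposition~\ref{prop:dir_score_PS},
\[
V_i=\sup_{u\in A^c(X_i,Y_i)}\inf_{y\in\gY:\,f(u;X_i,y)>0}s(X_i,y).
\]
Fix any $u\in A^c(X_i,Y_i)$, so that $f(u;X_i,Y_i)>0$. Then $y=Y_i$ is admissible in the inner infimum, which gives
\[
\inf_{y:\,f(u;X_i,y)>0}s(X_i,y)\leq s(X_i,Y_i)=V_i^{\texttt{Base}}.
\]
Taking the supremum over $u$ yields $V_i\leq V_i^{\texttt{Base}}$.

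One edge case needs care. If $A^c(X_i,Y_i)=\emptyset$, the supremum over an empty set is $-\infty$, or, under the standing assumption that scores are finite and attained, some smallest level. In that case the inequality holds trivially, provided the convention for $V_i$ gives a value no larger than $s(X_i,Y_i)$.

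I would also cross-check the inequality directly from definition \eqref{eq:dir_score}, without relying on the proposition. At $\lambda=s(X_i,Y_i)$ we have $Y_i\in C(X_i;\lambda)$, so every $u\in D(X_i;\lambda)$ satisfies $f(u;X_i,Y_i)\leq\sup_{y\in C}f(u;X_i,y)\leq0$. Hence $\lambda=s(X_i,Y_i)$ lies in the set whose infimum defines $V_i$, and $V_i\leq V_i^{\texttt{Base}}$. This direct argument avoids the representation entirely and handles the empty-$A^c$ case cleanly, so I would present it as the main argument.

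\textbf{Step 2: from scores to thresholds.} The empirical quantile $\mathsf{Q}_{1-\alpha}(\{v_i\}_{i=1}^n,\infty)$ is the $\lceil(1-\alpha)(n+1)\rceil$-th order statistic of the augmented multiset. Order statistics are coordinatewise monotone: if $v_i\leq w_i$ for all $i$, then the $k$-th smallest of the $v$'s is at most the $k$-th smallest of the $w$'s. Applying this to the two collections of scores gives
\[
\hat\lambda\leq\hat\lambda^{\texttt{Base}}.
\]
The case where both quantiles equal $\infty$ is trivial.

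\textbf{Step 3: from thresholds to sets.} Nestedness of the family gives
\[
D(X_{n+1};\hat\lambda^{\texttt{Base}})\subseteq D(X_{n+1};\hat\lambda),
\]
that is, $\widehat D^{\texttt{Base}}(X_{n+1})\subseteq\widehat D^{\texttt{DISC}}(X_{n+1})$. Since Step 1 holds for every realization of the calibration data, the inclusion holds almost surely.

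The only delicate point is Step 1's edge cases and infinite values. Using the direct definition-based argument sidesteps these, so I expect no real obstacle.
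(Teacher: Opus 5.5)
Your proposal is correct and follows essentially the same route as the paper. The paper likewise shows that any $\lambda\ge V_i^{\texttt{Base}}=s(X_i,Y_i)$ forces $Y_i\in C(X_i;\lambda)$ and hence $D(X_i;\lambda)\subseteq A(X_i,Y_i)$, mentions the pointwise-formula argument as an equivalent check, and then concludes by monotonicity of the conformal quantile and nestedness of $D(X;\lambda)$.
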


\subsubsection{Additive residual-margin family}\label{sec:additive_margin}

We next consider a margin function family that operates directly on the predicted safety function. Let \(\hat f(u;X)\) be a pretrained surrogate safety function, such as the plug-in function $f(u;X,\mathsf{ML}(X))$. We define \(L_\lambda(u;X)=\hat f(u;X)+\mu(u)+\lambda\sigma(u)\), where $\mu(u)$ and $\sigma(u)$ are fixed before calibration. {The resulting candidate feasible set is \(D(X;\lambda)=\LRl{u\in\gU:~\hat f(u;X)+\mu(u)+\lambda\sigma(u)\leq 0}\). Since \(\sigma(u)>0\), increasing \(\lambda\) raises the margin function pointwise and therefore shrinks \(D(X;\lambda)\), as required by the nestedness condition.} This class is motivated by modeling the residual process \(f(u;X_i,Y_i)-\hat f(u;X_i)\) as a functional signal-plus-noise model, with mean function \(\mu(u)\) and scale function \(\sigma(u)>0\). This additive residual-margin family leads to an explicit form of the \texttt{DISC} score.

\begin{proposition}
\label{prop:dir_score_add}
Under the additive residual-margin family, the \texttt{DISC} score in \eqref{eq:dir_score} is
\begin{equation}\label{eq:dir_score_add}
    V_i=\sup_{u\in A^c(X_i,Y_i)}-\frac{\hat f(u;X_i)+\mu(u)}{\sigma(u)}\,.
\end{equation}
\end{proposition}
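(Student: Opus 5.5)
The plan is to use the decision-space representation of the score noted after Theorem~\ref{thm:inclusion}:
\[
V_i=\sup_{u\in A^c(X_i,Y_i)}\inf\{\lambda\in\sR:u\notin D(X_i;\lambda)\}.
\]
After that, the inner infimum is computed explicitly for the additive residual-margin family. Since we may invoke this representation, I will only sketch why it holds and check it against \eqref{eq:dir_score} under our standing attainment assumption. Write $\tau(u)=\inf\{\lambda:u\notin D(X_i;\lambda)\}$.

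\emph{Step 1: the representation.} First take $\lambda>\sup_{u\in A^c}\tau(u)$. Every unsafe $u$ then satisfies $\lambda>\tau(u)$. By nestedness, $u\notin D(X_i;\lambda)$, so $D(X_i;\lambda)\subseteq A(X_i,Y_i)$. Hence $V_i\le\lambda$, and letting $\lambda$ decrease gives $V_i\le\sup_{u\in A^c}\tau(u)$.

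For the reverse inequality, take $\lambda$ with $D(X_i;\lambda)\subseteq A(X_i,Y_i)$. Every unsafe $u$ is then excluded at level $\lambda$, so $\tau(u)\le\lambda$. Taking the infimum over such $\lambda$ gives $\sup_{u\in A^c}\tau(u)\le V_i$. If $A^c$ is empty, both sides are $-\infty$ (formally). This conflicts with the finiteness assumption, so in that case I will just note that the convention $\sup\emptyset=-\infty$ is being used.

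\emph{Step 2: computing $\tau(u)$.} Here
\[
u\notin D(X_i;\lambda)\iff \hat f(u;X_i)+\mu(u)+\lambda\sigma(u)>0 .
\]
Because $\sigma(u)>0$, this is equivalent to
\[
\lambda> -\frac{\hat f(u;X_i)+\mu(u)}{\sigma(u)} .
\]
The set of such $\lambda$ is therefore an open half-line, and its infimum is
\[
\tau(u)=-\frac{\hat f(u;X_i)+\mu(u)}{\sigma(u)} .
\]
Substituting this into Step 1 yields \eqref{eq:dir_score_add}.

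The only delicate point is the strict versus non-strict inequality at the boundary. Here $\tau(u)$ is an infimum that is not attained: at $\lambda=\tau(u)$ we have $L_\lambda(u;X_i)=0$, so $u$ is still in $D$. The supremum formulation is insensitive to this, because Step 1 compares $\lambda$ strictly above $\tau(u)$. That step is the part I would write out most carefully, together with the remark that positivity of $\sigma(u)$ is what makes the division direction-preserving.
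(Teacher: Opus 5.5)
Your proof is correct and follows the paper's argument: the paper likewise computes, for each unsafe $u$, the exclusion threshold $-\{\hat f(u;X_i)+\mu(u)\}/\sigma(u)$ using $\sigma(u)>0$, and then takes the supremum over $A^c(X_i,Y_i)$. Your Step 1 makes explicit the ``smallest level excluding all unsafe decisions equals the supremum of the exclusion levels'' step, which the paper asserts in one line; your version also does not use the attainment assumption.
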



{The score $V_i$ in \eqref{eq:dir_score_add} can also be interpreted by the pointwise-exclusion representation. That means $V_i$ is the critical level at which the hardest decision that is unsafe under $Y_i$ is removed from the candidate subset.}

For the additive margin family, the representation \eqref{eq:dir_score_add} {brings an immediate computational advantage and} avoids the bisection over $\lambda$ required by the original formulation \eqref{eq:dir_score}. Computing $V_i$ is particularly convenient when the infeasible set \(A^c(X_i,Y_i)\) has tractable geometry. For example, suppose that the closure of $A^c(X_i,Y_i)$ is a finite union of compact polyhedra, which includes the case where $\gU$ is a compact polyhedron and $f(u;X_i,Y_i)$ is affine or max-affine in $u$. If $\hat f(u;X_i)+\mu(u)$ and $\sigma(u)$ are affine in $u$, with $\sigma(u)>0$, then the objective in \eqref{eq:dir_score_add} is linear-fractional and attains its maximum over each polyhedral component at a vertex \citep{boyd2004convex}. The representation \eqref{eq:dir_score_add} also enables a direct comparison with the uniform-residual benchmark below.

\paragraph*{Uniform residual calibration baseline.}
Uniform residual calibration is a natural alternative and is widely used in conformal methods for functional and trajectory-valued data~\citep{diquigiovanni2024importance,zhou24lconformalized}. It uses the largest standardized residual over the entire decision space, and the score is:
\begin{align}\label{eq:uniform_score}
    V_i^{\texttt{Base}}
    =
    \sup_{u\in\gU}
    \frac{f(u;X_i,Y_i) - \hat{f}(u;X_i) -\mu(u)}{\sigma(u)}.
\end{align}
Let \(\widehat D^{\texttt{Base}}(X_{n+1})=D(X_{n+1};\hat\lambda^{\texttt{Base}})\) be the safe feasible set output by this baseline method, where $\hat\lambda^{\texttt{Base}}=\mathsf{Q}_{1-\alpha}\LRs{\{V_i^{\texttt{Base}}\}_{i=1}^n,\infty}$. The validity of this procedure follows from a uniform upper-envelope argument. {For any $\lambda$, \(V_{n+1}^{\texttt{Base}}\leq\lambda\) implies \(f(u;X_{n+1},Y_{n+1})\leq L_{\lambda}(u;X_{n+1})\) for every \(u\in\gU\). In turn, every decision satisfying \(L_\lambda(u;X_{n+1})\leq0\) also satisfies \(f(u;X_{n+1},Y_{n+1})\leq0\), and hence \(D(X_{n+1};\lambda)\subseteq A(X_{n+1},Y_{n+1})\). The standard conformal inference applied to $\{V_{i}^{\texttt{Base}}\}_{i=1}^{n+1}$ therefore yields \(\sP\LRl{\widehat D^{\texttt{Base}}(X_{n+1})\subseteq A(X_{n+1},Y_{n+1})}\geq 1-\alpha\).}
In fact, this method outputs the safe feasible set analogue of simultaneous prediction bands for functional data.

Uniform calibration is valid but imposes a stronger requirement than the inclusion target. It controls the residual process over the entire \(\gU\), including decisions irrelevant to the target inclusion event. {By contrast, \texttt{DISC} based on \eqref{eq:dir_score_add} adjusts only the margin needed to exclude decisions that are unsafe under the realized label. This distinction yields a direct samplewise comparison. For every \(u\in A^c(X_i,Y_i)\), we have \(f(u;X_i,Y_i)>0\), and therefore \(-\frac{\hat{f}(u;X_i)+\mu(u)}{\sigma(u)}\leq \frac{f(u;X_i,Y_i) - \hat{f}(u;X_i) -\mu(u)}{\sigma(u)} \leq V_i^\texttt{Base}\). Taking the supremum over \(u\in A^c(X_i,Y_i)\),} the next proposition formally shows that our method is more efficient.

\begin{proposition}
\label{prop:additive_unif_comparison}
Under the additive residual-margin family, for every calibration point $i\in[n]$, we have
\(V_i\leq V_i^{\texttt{Base}}\). Consequently, \(\widehat D^{\texttt{Base}}(X_{n+1})\subseteq\widehat D^{\texttt{DISC}}(X_{n+1})\) holds almost surely.
\end{proposition}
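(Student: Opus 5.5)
The plan is to prove the two claims of Proposition~\ref{prop:additive_unif_comparison} in turn: first the pointwise score comparison, then transfer it to the safe feasible sets through monotonicity of the empirical quantile and nestedness of the family.

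\textbf{Score comparison.} Fix $i\in[n]$ and use the explicit form of $V_i$ from Proposition~\ref{prop:dir_score_add}. Take any $u\in A^c(X_i,Y_i)$, so $f(u;X_i,Y_i)>0$. Since $\sigma(u)>0$,
\[
-\frac{\hat f(u;X_i)+\mu(u)}{\sigma(u)}
<\frac{f(u;X_i,Y_i)-\hat f(u;X_i)-\mu(u)}{\sigma(u)}
\leq \sup_{v\in\gU}\frac{f(v;X_i,Y_i)-\hat f(v;X_i)-\mu(v)}{\sigma(v)}=V_i^{\texttt{Base}}.
\]
Taking the supremum over $u\in A^c(X_i,Y_i)$ gives $V_i\leq V_i^{\texttt{Base}}$. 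If $A^c(X_i,Y_i)$ is empty, the supremum equals $-\infty$, which I will reconcile with the standing assumption that $V_i$ is finite and attained. From \eqref{eq:dir_score}, $V_i$ is then the infimum over all of $\sR$. The finiteness assumption excludes this case, or else any convention with $V_i\leq V_i^{\texttt{Base}}$ can be used. This edge case is the only point needing care.

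\textbf{Quantile comparison.} The empirical quantile $\mathsf{Q}_{1-\alpha}(\{v_i\}_{i=1}^n,\infty)$ is the $\lceil(1-\alpha)(n+1)\rceil$-th smallest of the augmented multiset. Order statistics are monotone under coordinatewise domination, so $V_i\leq V_i^{\texttt{Base}}$ for every $i$ gives $\hat\lambda\leq\hat\lambda^{\texttt{Base}}$. This holds on every realization, including the case where both equal $\infty$.

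\textbf{Set inclusion.} By nestedness of the family, $\hat\lambda\leq\hat\lambda^{\texttt{Base}}$ implies
\[
\widehat D^{\texttt{Base}}(X_{n+1})=D(X_{n+1};\hat\lambda^{\texttt{Base}})\subseteq D(X_{n+1};\hat\lambda)=\widehat D^{\texttt{DISC}}(X_{n+1}).
\]
Concretely, for $u\in D(X_{n+1};\hat\lambda^{\texttt{Base}})$ we have $\hat f+\mu+\hat\lambda\sigma\leq\hat f+\mu+\hat\lambda^{\texttt{Base}}\sigma\leq 0$, using $\sigma(u)>0$. When $\hat\lambda^{\texttt{Base}}=\infty$, the convention is $D(\cdot;\infty)=\emptyset$, so the inclusion holds trivially.

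Since the inclusion holds for every realization of the data, it holds almost surely.
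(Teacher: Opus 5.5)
Your proposal is correct and is essentially the paper's argument: the pointwise comparison over $A^c(X_i,Y_i)$ using the representation in Proposition~\ref{prop:dir_score_add}, followed by monotonicity of the conformal quantile and nestedness of $D(X;\lambda)$, is the route the paper gives in the main text and again as the ``equivalent'' argument in its proof. The paper's primary line instead checks directly from the definition of the score that $D(X_i;V_i^{\texttt{Base}})\subseteq A(X_i,Y_i)$. That version does not rely on \eqref{eq:dir_score_add} and so handles your empty-$A^c$ edge case automatically, but the two arguments are interchangeable.
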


\subsection{Computation of DISC scores}\label{sec:score_computation}

{Section~2.2 derives explicit representations of the \texttt{DISC} scores in two margin families. We now consider the computation of \(V_i\) for a general margin function family. Recall the score definition in \eqref{eq:dir_score}.} The margin representation \eqref{eq:safety_margin_representation} turns score computation into a verification problem at each nestedness level.  
For each labeled observation, write $f_i(u)\equiv f(u;X_i,Y_i)$ and $L_{i,\lambda}(u)\equiv L_\lambda(u;X_i)$, and define the certification value
\begin{equation}\label{eq:certification_value}
    g_i(\lambda) = \sup_{u\in D(X_i;\lambda)} f(u;X_i,Y_i),
\end{equation}
which represents the worst-case violation over the candidate subset $D(X_i;\lambda)$. 
Whenever $g_i(\lambda)\le 0$ can be verified for any $\lambda$, the score $V_i$ can be computed by bisection. {If \(\gU\) is finite, as in the LLM reasoning problem \citep{wei2022chain,zhouleast}}, \(g_i(\lambda)\) can be computed exactly by enumerating the
points satisfying \(L_{i,\lambda}(u)\leq0\). 
For continuous decision spaces, exact evaluation of $g_i(\lambda)$ remains possible in the following three standard optimization regimes. 

\noindent
\emph{(1) Concave safety function.} If \(f_i\) is concave and \(D(X_i;\lambda)\) is convex, then computing \(g_i(\lambda)\) in \eqref{eq:certification_value} is a standard convex optimization problem, {since it maximizes a concave objective over a convex set} \citep{boyd2004convex}. This case includes the collision-avoidance problem and can be handled using off-the-shelf algorithms.

\noindent
\emph{(2) Affine safety function.} If \(f_i\) is affine (or max-affine) and \(D(X_i;\lambda)\) is polyhedral, then \(g_i(\lambda)\) can be computed by solving a linear program. Such a polyhedral subset arises when \(\gU\) is polyhedral and \(L_\lambda(u;X_i)\) is affine or max-affine in \(u\), as in {the resource allocation problem and related chance-constrained applications} \citep{nemirovski2006convex}.

\noindent
\emph{(3) Convex safety function.} If \(f_i\) is continuous and convex and \(D(X_i;\lambda)\) is a nonempty compact polyhedron, then the maximum is attained at a vertex of $D(X_i;\lambda)$. Hence, \(g_i(\lambda)\) can be computed exactly by enumerating the vertices of \(D(X_i;\lambda)\), although the number of vertices may grow exponentially with the problem dimension.

    
    

When exact global optimization is intractable, we replace $g_i(\lambda)$ by a \emph{computable} upper bound $\widetilde g_i(\lambda)\geq g_i(\lambda)$. The condition $\widetilde g_i(\lambda)\leq0$ then implies $g_i(\lambda)\leq0$ and hence certifies $D(X_i;\lambda)\subseteq A(X_i,Y_i)$. We define the conservative \texttt{DISC} score as
\begin{equation}\label{eq:conservative_disc_score}
    \widetilde V_i
    =
    \inf\bigl\{
    \lambda\in\sR:
    \widetilde g_i(t)\leq0
    \text{ for every }t\geq\lambda
    \bigr\}.
\end{equation}
We assume that the conservative \texttt{DISC} score is also attained, in the sense that $\widetilde g_i(t)\leq0$ for every $t\geq\widetilde V_i$. If $\widetilde g_i$ is nonincreasing, the definition \eqref{eq:conservative_disc_score} reduces to $\widetilde V_i=\inf\{\lambda:\widetilde g_i(\lambda)\leq0\}$. {Moreover, every nestedness level admissible in \eqref{eq:conservative_disc_score} is also admissible for the original score, and hence \(\widetilde V_i\geq V_i\).
Further,} computing the empirical quantile in these conservative scores \(\widetilde V_i\) yields the subset $\widehat{D}^{\texttt{C-DISC}}(X_{n+1})=D(X_{n+1};\widetilde\lambda)$, where $\widetilde\lambda=\mathsf{Q}_{1-\alpha}(\{\widetilde V_i\}_{i=1}^n,\infty)$. Theorem~\ref{thm:inclusion_certificate} shows that exchangeability of the conservative scores preserves the finite-sample inclusion guarantee.

\begin{theorem}\label{thm:inclusion_certificate}
    If $\widetilde g_i(\lambda)\ge g_i(\lambda)$ for all $\lambda$, then $\left\{\widetilde V_{n+1} \leq \widetilde{\lambda}\right\} \subseteq \left\{\widehat D^{\texttt{C-DISC}}(X_{n+1}) \subseteq A(X_{n+1},Y_{n+1})\right\}$. 
    If the scores $\{\widetilde V_i\}_{i=1}^{n+1}$ are exchangeable, we have $\sP\LRl{\widehat{D}^{\texttt{C-DISC}}(X_{n+1}) \subseteq A(X_{n+1},Y_{n+1})}
        \geq 1-\alpha$.
\end{theorem}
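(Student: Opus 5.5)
The proof has two parts: a deterministic event inclusion, followed by the standard conformal quantile argument applied to the conservative scores.

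\emph{Part 1: the event inclusion.} Suppose $\widetilde V_{n+1}\le\widetilde\lambda$.
\begin{itemize}
\item By the attainment assumption for the conservative score, $\widetilde g_{n+1}(t)\le 0$ for every $t\ge\widetilde V_{n+1}$, and in particular $\widetilde g_{n+1}(\widetilde\lambda)\le0$.
\item Since $g_{n+1}\le\widetilde g_{n+1}$ pointwise, we get $g_{n+1}(\widetilde\lambda)=\sup_{u\in D(X_{n+1};\widetilde\lambda)}f(u;X_{n+1},Y_{n+1})\le0$.
\item This says every $u\in D(X_{n+1};\widetilde\lambda)$ satisfies $f(u;X_{n+1},Y_{n+1})\le0$, i.e.\ $\widehat D^{\texttt{C-DISC}}(X_{n+1})\subseteq A(X_{n+1},Y_{n+1})$.
\end{itemize}
If $\widetilde\lambda=\infty$, the inclusion event is interpreted as trivially true, matching the convention already used for $\hat\lambda$ in Theorem~\ref{thm:inclusion}. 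The same convention keeps the event inclusion valid in that case. Nestedness is not even needed for this step, because the definition \eqref{eq:conservative_disc_score} already builds in the upward-closed condition "for every $t\ge\lambda$".

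\emph{Part 2: the probability bound.} By Part 1, it suffices to show $\sP\{\widetilde V_{n+1}\le\widetilde\lambda\}\ge1-\alpha$, where $\widetilde\lambda=\mathsf{Q}_{1-\alpha}(\{\widetilde V_i\}_{i=1}^n,\infty)$. The steps are the standard split-conformal ones.
\begin{itemize}
\item Replacing the $\infty$ atom by $\widetilde V_{n+1}$ gives the identity $\{\widetilde V_{n+1}\le \mathsf{Q}_{1-\alpha}(\{\widetilde V_i\}_{i=1}^n,\infty)\}=\{\widetilde V_{n+1}\le \mathsf{Q}_{1-\alpha}(\{\widetilde V_i\}_{i=1}^{n+1})\}$.
\item That second event says $\widetilde V_{n+1}$ is among the $\lceil(1-\alpha)(n+1)\rceil$ smallest of the $n+1$ scores, with ties broken favorably.
\item By the assumed exchangeability of $\{\widetilde V_i\}_{i=1}^{n+1}$, each index is equally likely to occupy any rank. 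Hence the probability of this event is at least $\lceil(1-\alpha)(n+1)\rceil/(n+1)\ge1-\alpha$.
\end{itemize}
This is the same argument presumably used for Theorem~\ref{thm:inclusion}, so I would simply invoke it with $V_i$ replaced by $\widetilde V_i$.

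\emph{Remaining care points.} Exchangeability of the $\widetilde V_i$ is assumed rather than derived. In practice it would follow if each $\widetilde g_i$ is computed by the same fixed procedure applied to $(X_i,Y_i)$, making $\widetilde V_i$ a common measurable function of exchangeable pairs; I would note this but not rely on it. The only genuinely delicate point is the quantile-with-$\infty$ identity under ties. I would handle it by the usual rank-counting argument rather than assuming distinct scores, since only the lower bound is claimed.
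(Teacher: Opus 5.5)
Your proposal is correct and follows essentially the same route as the paper: on the event $\{\widetilde V_{n+1}\le\widetilde\lambda\}$, attainment of the conservative score gives $\widetilde g_{n+1}(\widetilde\lambda)\le 0$, hence $g_{n+1}(\widetilde\lambda)\le 0$ and the inclusion $D(X_{n+1};\widetilde\lambda)\subseteq A(X_{n+1},Y_{n+1})$, while the probability bound is the standard conformal rank argument applied to the exchangeable scores $\widetilde V_i$. The paper also records in passing that $\widetilde V_i\ge V_i$, which your argument correctly shows is not needed for the stated theorem.
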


{Appendix \ref{appen:Lagrangian_certificate} provides a general construction of \(\widetilde{g}_i(\lambda)\) for a class of nonconcave constraint functions through the Lagrangian dual formulation of \eqref{eq:certification_value}.}

\section{Efficiency-Aware Optimization of Safe Feasible Sets}\label{sec:family_optimization}


{Section~\ref{sec:method} establishes finite-sample validity for any fixed feasible set family, }
but does not distinguish among families with different downstream efficiency. Let $\phi:\gU\to\sR$ denote the downstream cost function, where smaller values indicate more efficient decisions. Given a safe feasible set $D(X)\subseteq\gU$, the corresponding decision solves the problem $\min_{u\in D(X)}\phi(u)$.
The optimal value measures the decision cost of $D(X)$ and therefore provides a natural criterion for optimizing the safe feasible set while controlling inclusion error.

We parameterize the candidate feasible set as $D_\theta(x;\lambda)=\{u\in\gU:L_{\theta,\lambda}(u;x)\le 0\}$,
where $L_{\theta,\lambda}$ is nondecreasing in $\lambda$ for each fixed $\theta$. {The parameter \(\theta\) controls the shape of the candidate set, whereas \(\lambda\) controls its nestedness level.} 
For example, the prediction-set family can use $L_{\theta,\lambda}(u;x)=\sup_{y\in C_{\theta}(x;\lambda)}f(u;x,y)$ with $C_{\theta}(x;\lambda)=\{y\in\gY:s_{\theta}(x,y)\leq\lambda\}$, while the additive residual-margin family can take $L_{\theta,\lambda}(u;x)=\hat f(u;x)+\mu_{\theta}(u)+\lambda\sigma_{\theta}(u)$.


\subsection{Optimized DISC via empirical risk minimization}

Applying \texttt{DISC} to this parameterized family yields the score
\begin{equation}\label{eq:opt_disc_score}
    V_\theta(X,Y)
    =
    \inf\left\{
        \lambda\in\sR:
        \sup_{u\in D_\theta(X;\lambda)}f(u;X,Y)\leq0
    \right\}.
\end{equation}
We assume that, with probability one, the infimum in \eqref{eq:opt_disc_score} is finite and attained simultaneously for all \(\theta\in\Theta\). Let \(\lambda_\theta^*=\inf\{q\in\sR:\sP\{V_\theta(X,Y)\leq q\}\geq1-\alpha\}\) denote the population \((1-\alpha)\)-quantile of \(V_\theta(X,Y)\), and let \(D_\theta^*(X)=D_\theta(X;\lambda_\theta^*)\) be the population subset.

\begin{definition}\label{def:oracle_optimal_model}
The population minimum risk over the feasible set family is
\(\Phi^*=\min_{\theta\in\Theta}\E[\min_{u\in D_\theta^*(X)}\phi(u)]\).
\end{definition}

{The population benchmark first specifies the safe feasible set using its own population score quantile and then selects the family with the smallest expected downstream cost.}
Since the population quantiles and risks are unknown, we replace them with empirical counterparts. For each \(\theta\in\Theta\), we take \(\hat\lambda_\theta=\mathsf{Q}_{1-\alpha}(\{V_\theta(X_i,Y_i)\}_{i=1}^n,\infty)\), and optimize the family parameter by solving the following empirical risk minimization (ERM) problem
\begin{align}\label{eq:empirical_model_selection}
    \hat\theta
    \in
    \argmin_{\theta\in\Theta}
    \frac{1}{n}\sum_{i=1}^n
    \min_{u\in D_\theta(X_i;\hat\lambda_\theta)}
    \phi(u).
\end{align}
We refer to this procedure as \texttt{O-DISC}, and define its test-time subset by $\widehat D^{\texttt{O-DISC}}(X_{n+1})=D_{\hat\theta}(X_{n+1};\hat\lambda_{\hat\theta})$. When $\Theta$ is finite, \eqref{eq:empirical_model_selection} can be solved by {comparing all candidate families}. 
For continuous $\Theta$, Section~\ref{sec:dc_cvar_erm} reformulates the ERM as a tractable constrained optimization problem.

{Unlike the fixed-family construction in Section~\ref{sec:dir_inclusion}, \texttt{O-DISC} uses the same labeled data both to select \(\hat\theta\) and to calibrate the corresponding nestedness level. Consequently,}
the optimized test score $V_{\hat\theta}(X_{n+1},Y_{n+1})$ is not generally exchangeable with the optimized calibration scores $\{V_{\hat\theta}(X_i,Y_i)\}_{i=1}^n$, and the exact finite-sample guarantee in Theorem~\ref{thm:inclusion} does not directly extend to \(\widehat D^{\texttt{O-DISC}}(X_{n+1})\). Section~\ref{sec:odisc_theory} bounds the resulting inclusion error, while Section~\ref{sec:augmented_erm} introduces an augmented ERM procedure that restores exact validity at higher computational cost.

\subsection{Reformulating the ERM via constrained optimization}\label{sec:dc_cvar_erm}

For a continuous parameter space $\Theta$, the conformal sample quantile $\hat\lambda_\theta$ makes the ERM problem \eqref{eq:empirical_model_selection} nonsmooth in $\theta$. Moreover, the value $\min_{u\in D_{\theta}(X_i;\hat\lambda_\theta)}\phi(u)$ has an implicit relation with $\theta$.
{To make these dependencies explicit, we adapt the lifting reformulation in the optimization literature \citep{balas2005projection} to the problem \eqref{eq:empirical_model_selection}. Specifically, we treat the nestedness level $\lambda$ and individual decisions $u_1,\ldots,u_n$ as joint optimization variables, replacing the inner minimizations with explicit feasibility constraints and expressing the quantile condition as $\hat\lambda_\theta\leq\lambda$.} Since $D_\theta(X;\lambda)$ decreases as $\lambda$ increases, this yields the following equivalent formulation:
\begin{equation}\label{eq:erm_quantile_constraint}
    \begin{aligned}
    \min_{\substack{\theta\in\Theta,\ \lambda\in\sR,\\ u_1,\ldots,u_n\in\gU}}\  \frac1n\sum_{i=1}^n \phi(u_i) \quad \mathrm{s.t.}\quad L_{\theta,\lambda}(u_i;X_i)\le 0,\ i\in[n],\quad \hat\lambda_\theta\le \lambda\,.
    \end{aligned}
\end{equation}
The first set of constraints enforces $u_i\in D_\theta(X_i;\lambda)$, while the second requires the chosen nestedness level to be no smaller than the conformal sample quantile. Because $D_\theta(X_i;\lambda)$ shrinks as $\lambda$ increases, the empirical downstream value $\min_{u\in D_\theta(X_i;\lambda)}\phi(u)$ is nondecreasing in $\lambda$. Thus, for each fixed $\theta$, an optimizer can always be chosen with $\lambda=\hat\lambda_\theta$.

To characterize the quantile constraint in \eqref{eq:erm_quantile_constraint}, 
define
\[
g_i(\lambda;\theta)=\sup_{u\in\gU} f(u;X_i,Y_i) \quad \text{s.t.}\quad L_{\theta,\lambda}(u;X_i)\le0.
\]
And let \(g_{(1)}(\lambda;\theta)\le\cdots\le g_{(n)}(\lambda;\theta)\) be their order statistics.
By definitions of $V_\theta(X_i,Y_i)$ and $g_i(\lambda;\theta)$, 
we know \(g_i(\lambda;\theta)\le0\) is equivalent to \(V_\theta(X_i,Y_i)\le\lambda\). 
Let \(k=\lceil(n+1)(1-\alpha)\rceil\) and assume \(k\le n\). Then \(\hat\lambda_\theta\le\lambda\) in \eqref{eq:erm_quantile_constraint} holds if and only if 
\(g_{(k)}(\lambda;\theta)\le0\). 
{Using the difference-of-top-sums representation: \(g_{(k)}(\lambda;\theta)=\sum_{j=k}^n g_{(j)}(\lambda;\theta)-\sum_{j=k+1}^n g_{(j)}(\lambda;\theta)\), where the second sum is zero when \(k=n\), {we obtain the equivalent constrained problem} 
\begin{equation}\label{eq:dc_cvar_erm}
    \begin{aligned}
    \min_{\substack{\theta\in\Theta,\ \lambda\in\sR,\\ u_1,\ldots,u_n\in\gU}}\quad
    \frac1n\sum_{i=1}^n \phi(u_i)\quad
    \mathrm{s.t.}\quad
    & L_{\theta,\lambda}(u_i;X_i)\le 0,\quad i\in[n],\\
    & \sum_{j=k}^n g_{(j)}(\lambda;\theta)
    -\sum_{j=k+1}^n g_{(j)}(\lambda;\theta)\le 0.
    \end{aligned}
\end{equation}
This formulation avoids the binary variables required by a direct encoding of the number of nonpositive verification values.
Both \(\sum_{j=k}^n g_{(j)}(\lambda;\theta)\) and \(\sum_{j=k+1}^n g_{(j)}(\lambda;\theta)\) are convex and piecewise-linear functions \citep{rockafellar2000optimization}} of $g_i(\lambda;\theta)$.
If each \(g_i(\lambda;\theta)\) is differentiable in \((\lambda,\theta)\), \eqref{eq:dc_cvar_erm} can be solved using generalized subgradient-based methods \citep{bolte2021nonsmooth}. Implementation details are deferred to Appendix~\ref{appen:SCA_erm}. The next proposition formalizes the equivalence between this constrained problem and the original ERM.


\begin{proposition}\label{prop:dc_cvar_equivalence}
The constrained problem \eqref{eq:dc_cvar_erm} has the same minimum objective value and the same optimal family parameters $\hat\theta$ as the original ERM problem \eqref{eq:empirical_model_selection}. 
\end{proposition}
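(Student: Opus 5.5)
The plan is to pass through the intermediate lifted problem \eqref{eq:erm_quantile_constraint} in two steps. First, \eqref{eq:erm_quantile_constraint} and the ERM \eqref{eq:empirical_model_selection} have the same optimal value and optimal $\theta$'s. Second, \eqref{eq:erm_quantile_constraint} and \eqref{eq:dc_cvar_erm} have identical feasible sets. The second step is a pure equivalence of constraints. The first step is a partial-minimization argument in which, for each fixed $\theta$, we minimize out $(\lambda,u_1,\ldots,u_n)$.

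For the first step, fix $\theta\in\Theta$ and write $h_i(\lambda;\theta)=\min_{u\in D_\theta(X_i;\lambda)}\phi(u)$, with the convention $+\infty$ on an empty set. The inner problem of \eqref{eq:erm_quantile_constraint} over $(u_1,\ldots,u_n)$ separates across $i$ because the constraints $L_{\theta,\lambda}(u_i;X_i)\le 0$ decouple. Its value at fixed $(\theta,\lambda)$ is therefore $\frac1n\sum_i h_i(\lambda;\theta)$. The constraint $\hat\lambda_\theta\le\lambda$ leaves $\lambda\in[\hat\lambda_\theta,\infty)$.

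Nestedness gives $D_\theta(X_i;\lambda)\subseteq D_\theta(X_i;\hat\lambda_\theta)$ for $\lambda\ge\hat\lambda_\theta$, so each $h_i(\cdot;\theta)$ is nondecreasing. The infimum over $\lambda$ is therefore attained at $\lambda=\hat\lambda_\theta$, and the profile value is exactly the ERM objective $\frac1n\sum_i h_i(\hat\lambda_\theta;\theta)$. Minimizing over $\theta$ then gives equal optimal values. A $\theta$ is optimal in \eqref{eq:erm_quantile_constraint} (for some choice of $\lambda,u_i$) exactly when its profile value is minimal, which is exactly when it solves \eqref{eq:empirical_model_selection}. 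This uses the minimizers of $\phi$ over $D_\theta(X_i;\hat\lambda_\theta)$ being attained, as implicit in writing $\min$ in \eqref{eq:empirical_model_selection}. Conversely, any optimal tuple of \eqref{eq:erm_quantile_constraint} has a $\theta$ whose profile value equals the common optimum. I would also note the case $\hat\lambda_\theta=\infty$, where $k>n$ is excluded by the assumption $k\le n$.

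For the second step, I show $\{\hat\lambda_\theta\le\lambda\}\iff\{g_{(k)}(\lambda;\theta)\le0\}$, then apply the identity $g_{(k)}=\sum_{j\ge k}g_{(j)}-\sum_{j\ge k+1}g_{(j)}$.

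1. By attainment of the score infimum, $g_i(\lambda;\theta)\le0\iff V_\theta(X_i,Y_i)\le\lambda$. One direction uses the definition of the infimum together with monotonicity of $g_i$ in $\lambda$. The other uses attainment plus nestedness.
2. Hence $g_{(k)}(\lambda;\theta)\le0$ holds iff at least $k$ of the $g_i$ are nonpositive, i.e. iff $\#\{i:V_\theta(X_i,Y_i)\le\lambda\}\ge k$.
3. Since $\hat\lambda_\theta$ is the $k$-th smallest score when $k\le n$, this count condition is equivalent to $V_{(k)}\le\lambda$, i.e. $\hat\lambda_\theta\le\lambda$.

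The two problems then share objective and feasible set, which completes the chain.

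The main obstacle is the careful handling of attainment and the definition of $\mathsf{Q}_{1-\alpha}(\cdot,\infty)$ as the $k$-th order statistic. The rest is bookkeeping.
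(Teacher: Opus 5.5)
Your proposal is correct and follows essentially the same route as the paper. The paper's proof has the same two steps.
\begin{itemize}
\item It proves the constraint equivalence $g_{(k)}(\lambda;\theta)\le 0 \iff \hat\lambda_\theta\le\lambda$, using the closed upper-ray property of the scores and the top-sum identity for the $k$-th order statistic.
\item It then argues that $\lambda\mapsto\min_{u\in D_\theta(X_i;\lambda)}\phi(u)$ is nondecreasing, so $\lambda=\hat\lambda_\theta$ is optimal for each fixed $\theta$.
\end{itemize}
You carry out these steps in the opposite order, and you are more explicit than the paper about two points it leaves implicit: the separability across $i$ and the attainment of the inner minima.
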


\subsection{Theoretical results for the optimized safe feasible set}\label{sec:odisc_theory}

Optimizing the set family and calibrating the nestedness parameter on the same labeled data raises two statistical questions: how much inclusion validity may be lost and how far the selected family may lie from the population benchmark. This section derives the corresponding inclusion-error and excess-risk bounds.





\begin{assumption}\label{ass:threshold_complexity_inclusion}
The threshold class $\gH=\left\{ \mathbbm{1}\{V_\theta(x,y)\le t\}: \theta\in\Theta,\ t\in\sR \right\}$ has a Vapnik–Chervonenkis (VC) dimension $d_{\gH}$.
\end{assumption}
The threshold class $\gH$ controls the uniform estimation error between the empirical and population distribution functions of $V_\theta(X,Y)$. 


\begin{theorem}
\label{thm:direct_concentration_inclusion}
Under Assumption \ref{ass:threshold_complexity_inclusion}, for a universal constant $\mathfrak C>0$, with probability at least $1-\eta$ over the randomness of labeled data $\gD_n$, we have
\begin{gather*}
\sP\!\left\{ \widehat D^{\texttt{O-DISC}}(X_{n+1}) \subseteq A(X_{n+1},Y_{n+1}) \,\mid\, \gD_n\right\}\ge 1-\alpha-\mathfrak C \sqrt{\frac{d_{\gH}\log(en/d_{\gH})+\log(2/\eta)}{n}}.
\end{gather*}
\end{theorem}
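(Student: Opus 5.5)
The plan is to reduce the inclusion event to a threshold event for the selected score and then control it uniformly over $\theta$ by VC theory. For each fixed $\theta$ the infimum in \eqref{eq:opt_disc_score} is attained and $D_\theta(x;\lambda)$ is nested, so, exactly as in the reduction preceding Theorem~\ref{thm:inclusion}, $D_\theta(x;\lambda)\subseteq A(x,y)$ holds if and only if $V_\theta(x,y)\le\lambda$. We are assuming attainment holds simultaneously for all $\theta$ almost surely, so this equivalence can be applied at the data-dependent pair $(\hat\theta,\hat\lambda_{\hat\theta})$. Since $(X_{n+1},Y_{n+1})$ is independent of $\gD_n$, this gives
\[
\sP\bigl\{\widehat D^{\texttt{O-DISC}}(X_{n+1})\subseteq A(X_{n+1},Y_{n+1})\mid\gD_n\bigr\}=F_{\hat\theta}(\hat\lambda_{\hat\theta}),
\]
where $F_\theta(t)=\sP\{V_\theta(X,Y)\le t\}$ is the population CDF of the score. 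If $\hat\lambda_{\hat\theta}=\infty$, the bound is trivial with value $1$, so we may assume $k=\lceil(n+1)(1-\alpha)\rceil\le n$.

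Next we lower bound the empirical CDF at the empirical quantile. Write $\widehat F_\theta(t)=n^{-1}\sum_{i=1}^n\mathbbm{1}\{V_\theta(X_i,Y_i)\le t\}$. Because $\hat\lambda_\theta$ is the $k$-th order statistic of the calibration scores, $\widehat F_\theta(\hat\lambda_\theta)\ge k/n\ge 1-\alpha$ for every $\theta$, and in particular for $\theta=\hat\theta$. Therefore
\[
F_{\hat\theta}(\hat\lambda_{\hat\theta})\ge 1-\alpha-\sup_{\theta\in\Theta,\,t\in\sR}\bigl|\widehat F_\theta(t)-F_\theta(t)\bigr|.
\]
This step deliberately avoids any exchangeability argument. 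The bound holds whatever data-dependent rule is used to pick $\hat\theta$ and $\hat\lambda_{\hat\theta}$, so the reuse of the data in the ERM \eqref{eq:empirical_model_selection} causes no problem.

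Finally, the supremum is $\sup_{h\in\gH}|P_nh-Ph|$ over the threshold class $\gH$ of Assumption~\ref{ass:threshold_complexity_inclusion}. A standard uniform deviation bound for a class of VC dimension $d_\gH$ controls it. One route is symmetrization plus Sauer's lemma (growth function at most $(en/d_\gH)^{d_\gH}$), with a union bound over the projected class. The other is a bounded-differences inequality combined with a Rademacher complexity bound of order $\sqrt{d_\gH\log(en/d_\gH)/n}$. Either route gives, with probability at least $1-\eta$,
\[
\sup_{h\in\gH}|P_nh-Ph|\le\mathfrak C\sqrt{\bigl(d_\gH\log(en/d_\gH)+\log(2/\eta)\bigr)/n},
\]
which yields the stated result.

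The main obstacle is technical rather than conceptual. First, we need measurability of the supremum over an uncountable $\Theta$; we would handle this by assuming $\gH$ is permissible, or by reducing to a countable dense subclass. Second, we must check that the pointwise equivalence between inclusion and the threshold event holds simultaneously over $\theta$, which is exactly what the simultaneous-attainment assumption provides.
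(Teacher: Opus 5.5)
Your proof is correct and matches the paper's argument essentially step for step. Both use attainability to turn inclusion into the threshold event $\{V_{\hat\theta}(X_{n+1},Y_{n+1})\le\hat\lambda_{\hat\theta}\}$, bound $\widehat F_{\hat\theta}(\hat\lambda_{\hat\theta})\ge k/n\ge 1-\alpha$, and pass to the population CDF through a VC uniform deviation bound over $\gH$, which is unaffected by the data-dependent choice of $\hat\theta$. Your remarks on the case $k>n$ and on measurability of the supremum make explicit points the paper leaves implicit.
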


Theorem \ref{thm:direct_concentration_inclusion} gives a distribution-free guarantee whose proof applies the VC inequality to the class $\gH$: uniformly over $\theta\in\Theta$ and $t\in\sR$, the empirical distribution of $V_\theta(X,Y)$ is close to its population distribution. Because the bound holds uniformly over the candidate families, it remains valid after the data-dependent selection of $\hat\theta$. 
Appendix \ref{appen:stability_bound} gives 
a potentially sharper inclusion-error bound under a replace-one stability condition, which is widely used in the recent conformal prediction literature \citep{bian2023training,liang2025algorithmic,Min2026enhanced}. 

We next evaluate the efficiency of the empirical optimizer in \eqref{eq:empirical_model_selection}, which is captured by the gap between the expected decision risk of the family selected by $\hat\theta$ and the population minimum risk $\Phi^*$ in Definition \ref{def:oracle_optimal_model}. 



\begin{assumption}\label{ass:uniform_quantile_regularity}
There exist positive constants $c_q$, $\epsilon_q$, $\Gamma$ and $B$. (i) For every $\theta\in\Theta$ and $0\le t\le \epsilon_q$, $\sP\{V_\theta(X,Y)\le\lambda_\theta^*+t\}\ge 1-\alpha+c_qt$ and $\sP\{V_\theta(X,Y)\le\lambda_\theta^*-t\}\le 1-\alpha-c_qt$. (ii) For every $\theta\in\Theta$ and $\lambda,\lambda'\in \sR$ satisfying $|\lambda-\lambda_\theta^*|\le \epsilon_q$ and $|\lambda'-\lambda_\theta^*|\le \epsilon_q$, we have $\left| \E\left[\min_{u\in D_\theta(X;\lambda)}\phi(u)\right] - \E\left[\min_{u\in D_\theta(X;\lambda')}\phi(u)\right] \right| \le \Gamma|\lambda-\lambda'|$, and $|\min_{u\in D_\theta(X;\lambda)}\phi(u)|\le B$. (iii) Define the function class $\gM_\delta=\{x\mapsto\min_{u\in D_\theta(x;\lambda)}\phi(u):\theta\in\Theta,\ |\lambda-\lambda_\theta^*|\le\delta\}$. For every $0<\delta\le\epsilon_q$, the pseudo-dimension \citep{anthony1999neural} of $\gM_\delta$ is at most $d_{\gM}$.


\end{assumption}

Assumption \ref{ass:uniform_quantile_regularity}(i) is a standard local regularity condition for quantile estimation. It holds when $V_\theta(X,Y)$ has a density uniformly bounded below near $\lambda_\theta^*$. Assumption~\ref{ass:uniform_quantile_regularity}(ii) controls how quantile-estimation error propagates to downstream cost and bounds the relevant decision values, while Assumption~\ref{ass:uniform_quantile_regularity}(iii) controls uniform convergence of the empirical objective. Further discussions are given in Appendices~\ref{appen:verification_lipschitz} and \ref{appen:verification_complexity}.

\begin{theorem}\label{thm:excess_risk_bound}
Suppose Assumptions~\ref{ass:threshold_complexity_inclusion} and
\ref{ass:uniform_quantile_regularity} hold.
There exists a universal constant $\mathfrak{C}>0$ such that, with probability at least $1-\eta$, if $\Delta_n(\eta) = \mathfrak{C}\sqrt{\frac{d_{\gH}\log(en/d_{\gH})+\log(4/\eta)}{n}}+\frac{2}{n} \leq c_q\epsilon_q$, then
\begin{align*}
&\E\left[
\min_{u\in\widehat D^{\texttt{O-DISC}}(X_{n+1})}
\phi(u)
\,\mid\,\gD_n
\right]
- \Phi^* \leq
2\mathfrak{C}B
\sqrt{
\frac{
d_{\gM}\log(en/d_{\gM})+\log(4/\eta)
}{n}
}
+
\frac{\Gamma}{c_q}\Delta_n(\eta),
\end{align*}
where $\Phi^*$ is the population minimum decision risk in Definition~\ref{def:oracle_optimal_model}.
\end{theorem}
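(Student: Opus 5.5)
The plan is the standard ERM excess-risk decomposition, with an extra step that handles the data-dependent quantile $\hat\lambda_\theta$ by localizing it near $\lambda_\theta^*$. Write $R(\theta,\lambda)=\E[\min_{u\in D_\theta(X;\lambda)}\phi(u)]$ and $\hat R(\theta,\lambda)$ for its empirical average over $\gD_n$. Let $\theta^*$ attain $\Phi^*=R(\theta^*,\lambda^*_{\theta^*})$; if the minimum is not attained, take an $\epsilon$-minimizer and let $\epsilon\to0$. Since $\widehat D^{\texttt{O-DISC}}(X_{n+1})=D_{\hat\theta}(X_{n+1};\hat\lambda_{\hat\theta})$ and $X_{n+1}$ is independent of $\gD_n$, the left-hand side equals $R(\hat\theta,\hat\lambda_{\hat\theta})$. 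I split the budget $\eta$ into two events of probability $\eta/2$ each, which explains the $\log(4/\eta)$ terms.

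\textbf{Step 1 (uniform quantile localization).} By the VC inequality applied to $\gH$ (Assumption~\ref{ass:threshold_complexity_inclusion}), with probability at least $1-\eta/2$ we have
\[
\sup_{\theta,t}\bigl|\hat F_\theta(t)-F_\theta(t)\bigr|\le \mathfrak C\sqrt{\bigl(d_{\gH}\log(en/d_{\gH})+\log(4/\eta)\bigr)/n},
\]
where $F_\theta$ and $\hat F_\theta$ are the population and empirical CDFs of $V_\theta(X,Y)$. The conformal quantile $\hat\lambda_\theta$ is the $\lceil(n+1)(1-\alpha)\rceil/n$ empirical quantile, and this level differs from $1-\alpha$ by at most $2/n$. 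Hence $\hat F_\theta(\hat\lambda_\theta)\ge 1-\alpha$ and $\hat F_\theta(\hat\lambda_\theta^-)\le 1-\alpha+2/n$, so $F_\theta$ at $\hat\lambda_\theta$ lies within $\Delta_n(\eta)$ of $1-\alpha$ from each side. Now set $t=\Delta_n(\eta)/c_q\le\epsilon_q$ in Assumption~\ref{ass:uniform_quantile_regularity}(i). If $\hat\lambda_\theta>\lambda^*_\theta+t$, then $\hat F_\theta$ just left of $\hat\lambda_\theta$ would exceed $1-\alpha+c_qt-\text{dev}>1-\alpha+2/n$, which is a contradiction. The lower side is symmetric. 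So $|\hat\lambda_\theta-\lambda_\theta^*|\le\Delta_n(\eta)/c_q=:\delta$ for all $\theta$ simultaneously. This step is where most of the care is needed: the $2/n$ slack must be tracked, and the strict versus non-strict inequalities at atoms of $F_\theta$ must be handled.

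\textbf{Step 2 (uniform convergence of the risk).} On a second event of probability at least $1-\eta/2$, the pseudo-dimension bound of Assumption~\ref{ass:uniform_quantile_regularity}(iii) on $\gM_\delta$, together with the envelope $B$, gives
\[
\sup_{\theta,\,|\lambda-\lambda^*_\theta|\le\delta}|\hat R(\theta,\lambda)-R(\theta,\lambda)|\le \mathfrak C B\sqrt{\bigl(d_{\gM}\log(en/d_{\gM})+\log(4/\eta)\bigr)/n}=:\epsilon_n.
\]
One subtlety is that $\delta$ depends on $\eta$ and $n$ but not on the data, so $\gM_\delta$ is a fixed class and the bound applies. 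On the intersection of the two events, Step 1 shows that every pair $(\theta,\hat\lambda_\theta)$ lies in the class.

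\textbf{Step 3 (chain).} On the intersection of both events,
\[
R(\hat\theta,\hat\lambda_{\hat\theta})\le \hat R(\hat\theta,\hat\lambda_{\hat\theta})+\epsilon_n\le \hat R(\theta^*,\hat\lambda_{\theta^*})+\epsilon_n\le R(\theta^*,\hat\lambda_{\theta^*})+2\epsilon_n.
\]
The first and third inequalities use Step 2, and the middle one is the ERM optimality of $\hat\theta$ in \eqref{eq:empirical_model_selection}. Finally, Assumption~\ref{ass:uniform_quantile_regularity}(ii) gives $R(\theta^*,\hat\lambda_{\theta^*})\le\Phi^*+\Gamma|\hat\lambda_{\theta^*}-\lambda^*_{\theta^*}|\le\Phi^*+\Gamma\Delta_n(\eta)/c_q$. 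Combining the terms yields $2\epsilon_n+(\Gamma/c_q)\Delta_n(\eta)$, which is the stated bound.
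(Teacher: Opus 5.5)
Your proposal is correct and follows the paper's proof essentially step for step. The paper likewise localizes $\hat\lambda_\theta$ uniformly in $\theta$ via the VC bound on $\gH$ and Assumption~\ref{ass:uniform_quantile_regularity}(i) (its Lemma~\ref{lem:lambda_error_bound}), applies uniform convergence over the fixed class $\gM_{\Delta_n(\eta)/c_q}$, and closes with the same ERM chain plus the Lipschitz condition (ii). One minor tidy-up in Step 1: with $c_q t=\Delta_n(\eta)$ you only get $\hat F_\theta(\hat\lambda_\theta^-)\ge 1-\alpha+2/n$, not strict inequality, so finish the contradiction with the sharper bound $\hat F_\theta(\hat\lambda_\theta^-)\le (k-1)/n\le 1-\alpha+1/n$ for $k=\lceil(n+1)(1-\alpha)\rceil$, or argue directly as the paper does that $\hat F_\theta(\lambda_\theta^*+t)\ge 1-\alpha+2/n\ge k/n$ forces $\hat\lambda_\theta\le\lambda_\theta^*+t$.
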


Theorem \ref{thm:excess_risk_bound} characterizes the excess risk of the empirical minimizer \eqref{eq:empirical_model_selection}. The first term comes from uniform convergence of the downstream decision risk, while the second term arises from estimating the family-dependent population quantile. 

\subsection{Finite-sample inclusion guarantee via augmented ERM}
\label{sec:augmented_erm}


The inclusion gap above arises since $\gD_n$ is used for both family optimization and calibration. A full conformal procedure can restore exchangeability by augmenting the sample with each hypothesized test label and re-solving the augmented ERM \citep{liang2024conformal,bao2025optimal}. This recovers finite-sample validity at the cost of additional computation.

For each hypothesized label $y\in\gY$ and $\theta\in\Theta$, define the augmented sample quantile as $\hat\lambda_\theta^y = \mathsf{Q}_{1-\alpha} \left( \{V_\theta(X_i,Y_i)\}_{i=1}^n, V_\theta(X_{n+1},y) \right)$.
Using $\hat\lambda_\theta^y$ in the empirical objective over all $n+1$ augmented observations yields the label-specific optimizer:
\begin{gather*}
\hat\theta^y \in \argmin_{\theta\in\Theta} \frac{1}{n+1} \left\{ \sum_{i=1}^n \min_{u\in D_\theta(X_i;\hat\lambda_\theta^y)} \phi(u) + \min_{u\in D_\theta(X_{n+1};\hat\lambda_\theta^y)} \phi(u) \right\}\,.
\end{gather*}
The full-conformal optimized safe feasible set is
\begin{equation}
    \widehat D^{\texttt{FO-DISC}}(X_{n+1})=\bigcap_{y\in\gY}D_{\hat\theta^y}\left(X_{n+1};\hat\lambda_{\hat\theta^y}^y\right)\,.\nonumber
\end{equation}
For the true label $Y_{n+1}$, the augmented optimized parameter $\hat\theta^{Y_{n+1}}$ is obtained symmetrically from the $n+1$ exchangeable samples. 
Therefore, the inclusion scores $\{V_{\hat\theta^{Y_{n+1}}}(X_i,Y_i)\}_{i=1}^{n+1}$ are exchangeable, which yields the following inclusion guarantee.



\begin{theorem}
\label{thm:full_conformal_inclusion}
If $\{(X_i,Y_i)\}_{i=1}^{n+1}$ are exchangeable, then $\sP\{\widehat D^{\texttt{FO-DISC}}(X_{n+1}) \subseteq A(X_{n+1},Y_{n+1})\} \ge 1-\alpha$.
\end{theorem}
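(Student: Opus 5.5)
The argument reduces to the fixed-family guarantee of Theorem~\ref{thm:inclusion} by working with the true-label branch of the intersection and checking that the selected parameter is a symmetric function of the $n+1$ data points.

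\emph{Step 1 (reduce to the true label).} Since $\widehat D^{\texttt{FO-DISC}}(X_{n+1})$ is an intersection over all $y\in\gY$, it is contained in the true-label branch:
\[
\widehat D^{\texttt{FO-DISC}}(X_{n+1})\subseteq D_{\tilde\theta}(X_{n+1};\tilde\lambda),\qquad \tilde\theta=\hat\theta^{Y_{n+1}},\quad \tilde\lambda=\hat\lambda^{Y_{n+1}}_{\tilde\theta}.
\]
It therefore suffices to show $\sP\{D_{\tilde\theta}(X_{n+1};\tilde\lambda)\subseteq A(X_{n+1},Y_{n+1})\}\ge1-\alpha$.

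\emph{Step 2 (symmetry of the selected parameter).} Write $Z_i=(X_i,Y_i)$. With $y=Y_{n+1}$, the augmented quantile is
\[
\hat\lambda_\theta^{Y_{n+1}}=\mathsf{Q}_{1-\alpha}\bigl(\{V_\theta(Z_i)\}_{i=1}^{n+1}\bigr),
\]
the empirical $(1-\alpha)$-quantile of the full multiset of $n+1$ scores. This is a symmetric function of $(Z_1,\dots,Z_{n+1})$. The augmented ERM objective is the average of $\min_{u\in D_\theta(X_i;\hat\lambda^{Y_{n+1}}_\theta)}\phi(u)$ over all $n+1$ points, which is also symmetric. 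So, provided the argmin is resolved by a permutation-invariant tie-breaking rule (an assumption I would state explicitly), $\tilde\theta=\Theta(\{Z_1,\dots,Z_{n+1}\})$ depends only on the unordered multiset, and so does $\tilde\lambda$. Conditional on the multiset, exchangeability then makes $Z_{n+1}$ uniformly distributed over its elements, while $\tilde\theta$ and $\tilde\lambda$ stay fixed.

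\emph{Step 3 (counting argument).} By the attainment assumption below \eqref{eq:opt_disc_score}, which holds simultaneously for all $\theta$, inclusion at level $\lambda$ is equivalent to $V_{\tilde\theta}(Z_{n+1})\le\lambda$; applying this at the random $\tilde\theta$ is exactly why simultaneity is needed. Because $\tilde\lambda$ is the $\lceil(1-\alpha)(n+1)\rceil$-th smallest of the scores $\{V_{\tilde\theta}(Z_i)\}_{i=1}^{n+1}$, at least $\lceil(1-\alpha)(n+1)\rceil$ of the indices $i$ satisfy $V_{\tilde\theta}(Z_i)\le\tilde\lambda$. Uniformity of the test index then gives conditional probability at least $1-\alpha$, and taking expectations gives the claim. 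The case $\lceil(1-\alpha)(n+1)\rceil>n+1$ is trivial, since the quantile is $\infty$ by convention.

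\emph{Main obstacle.} The delicate step is the quantile convention, not exchangeability itself. I must check that the paper's $\mathsf{Q}_{1-\alpha}(\cdot,V_\theta(X_{n+1},y))$, which places the test score in the multiset rather than $\infty$, coincides with the symmetric level-$\lceil(1-\alpha)(n+1)\rceil$ order statistic. The same convention has to be used in both the ERM and the final set, so that Step 2's symmetry claim holds and Step 3's count is valid. I also need to confirm measurability of the selection $\hat\theta^y$.
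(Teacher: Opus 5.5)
Your proposal is correct and takes essentially the same route as the paper: you restrict to the $y=Y_{n+1}$ branch of the intersection, use permutation invariance of the augmented ERM (with a data-symmetric tie-breaking rule) to make the scores $\{V_{\hat\theta^{Y_{n+1}}}(X_i,Y_i)\}_{i=1}^{n+1}$ exchangeable, apply the full-conformal rank argument, and convert the score event into inclusion via attainment. The quantile convention you flag is consistent: the paper's appendix writes the augmented threshold as $\min\{c_{(k),m},\max\{v_m(y),c_{(k-1),m}\}\}$ with $k=\lceil (n+1)(1-\alpha)\rceil$, which is exactly the $k$-th smallest of the $n+1$ scores, as your Step 3 requires.
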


{The cost of exact validity is that \(\hat\theta^y\) and
\(\hat\lambda_{\hat\theta^y}^y\) have to be recomputed across all $y\in\gY$. When \(\Theta\) is finite, Appendix~\ref{appen:piecewise_constant_augmented_erm} exploits the piecewise-constant structure of the mapping $y\mapsto\hat\theta^y$ to partition $\gY$ into finitely many regions. A simpler alternative is sample splitting: one labeled subset selects $\theta$, and another performs the fixed-family \texttt{DISC} calibration. This approach also restores the finite-sample inclusion guarantee, but may sacrifice efficiency because each stage uses only part of the data. }

\section{Simulation Results}\label{sec:simulation}

We evaluate four fixed-family procedures built from the two margin function families in Section \ref{sec:instantiation}. \texttt{DISC-PS} applies directed inclusion calibration to the prediction-set margin family (\texttt{PS}) in \eqref{eq:dir_score_PS}, while \texttt{DISC-Add} applies it to the additive residual-margin family (\texttt{Add}) in \eqref{eq:dir_score_add}. The corresponding baselines \texttt{Base-PS} and \texttt{Base-Add} calibrate the nestedness level using the label coverage score in \eqref{eq:score_coverage_PS} and the uniformly standardized score in \eqref{eq:uniform_score}, respectively.

Across all simulations, we report two primary evaluation metrics. (i) The \textit{inclusion rate} is the empirical estimator of $ \sP\{D(X)\subseteq A(X,Y)\}$ computed from the test data. (ii) The \textit{decision cost} is the empirical estimator of $ \E[\min_{u\in D(X)}\phi(u)]$, 
where $\phi(u)$ is the task-specific decision cost function. The evaluation metrics are computed over 100 repetitions. In each repetition, an independent calibration sample \(\{(X_i,Y_i)\}_{i=1}^n\) is used to construct the safe feasible set, and an independent test sample \(\{(X_{n+j},Y_{n+j})\}_{j=1}^{n_{\rm test}}\) is used for evaluation.

\subsection{Chance-constrained resource allocation experiment}
Contextual resource allocation under uncertain demand is a continuous chance-constrained programming problem. 
The context is sampled as $ X\sim {\rm Unif}([-1,1]^5)$, and the demand $ Y=(Y_1,Y_2)$ is generated by $ Y_j=[m_j(X)+\varsigma_j(X)\xi_j]_+$, $ j=1,2$, where $ \xi_1,\xi_2$ are independent noise variables. The nonlinear mean function $ m_j$ and scale function $ \varsigma_j$ are reported in Appendix~\ref{appen:resource_dgp}. The demand predictor $ \mathsf{ML}(X)$ is a random forest regressor trained on an independent pre-training sample of size $5,000$, and held fixed across all repetitions. The test sample size is fixed at $n_{\rm test}=1,000$.

The decision variable is $ u=(u_1,u_2)\in[0,8]^2$, where $ u_j$ is the reserved resource for the $j$th demand component. We consider two constraint functions: the \textit{reserve-constraint function} is $f_{\rm res}(u;X,Y)=\max\left\{Y_1-u_1,\,Y_2-u_2,\,1.5(Y_1+Y_2)-(u_1+u_2)\right\}$; the \textit{buffered-constraint function} is $f_{\rm buf}(u;X,Y)=\max\left\{Y_1-u_1,\,Y_2-u_2,\,1.5(Y_1+Y_2)-(u_1+u_2),\,u_2-Y_2-3.5\right\}$.
The first $f_{\rm res}$ requires the allocation to cover both individual demands and an aggregate reserve requirement, whereas $f_{\rm buf}$ additionally imposes an upper buffer on $u_2$.


\paragraph*{Fixed margin function.}
For either constraint function $ f\in\{f_{\rm res},f_{\rm buf}\}$, the additive residual-margin family is constructed from a learned safety predictor $ \hat{f}$ as $ L_\lambda(u;X)=\hat{f}(u;X)+\lambda\sigma(u)$, where $ \sigma(u)=1+0.1(u_1+u_2)$ is fixed before calibration. The prediction-set margin family is defined as $ L_\lambda(u;X)=\sup_{y\in C(X;\lambda)} f(u;X,y)$, where $ C(X;\lambda)=\{y\in \sR^2:\|y-\mathsf{ML}(X)\|_2\le \lambda\}$. For \texttt{DISC-PS}, the score has a closed-form expression from the polyhedral resource constraints; for \texttt{DISC-Add}, the score can be computed exactly since $D(X;\lambda)$ is polyhedral, and both constraint functions are max-affine. We evaluate decision efficiency using a linear resource cost $ \phi_L(u)=u_1 + 1.25u_2$ and a quadratic resource cost $ \phi_Q(u)=0.1(u_1^2+1.5u_2^2)$. The linear cost represents unit resource prices, while the quadratic cost strongly penalizes over-allocation. Detailed computation is given in Appendix~\ref{appen:resource_score_computation}. 

Figure~\ref{fig:resource_loss} illustrates how the output safe feasible sets interact with the contours of two downstream cost functions under the reserve-constraint function $f_{\rm res}$. The \texttt{DISC} regions retain a larger low-cost portion of the oracle feasible set, whereas the baseline methods exclude many decisions that are in fact safe but fail to be included by the stronger events of label coverage or uniform residual bound. {As the downstream decision selects the lowest-cost point in the safe feasible set, the additional shrinkage of the baseline subset translates directly into a higher decision cost.}

\begin{figure}[H]
    \centering
    \includegraphics[width=.8\linewidth]{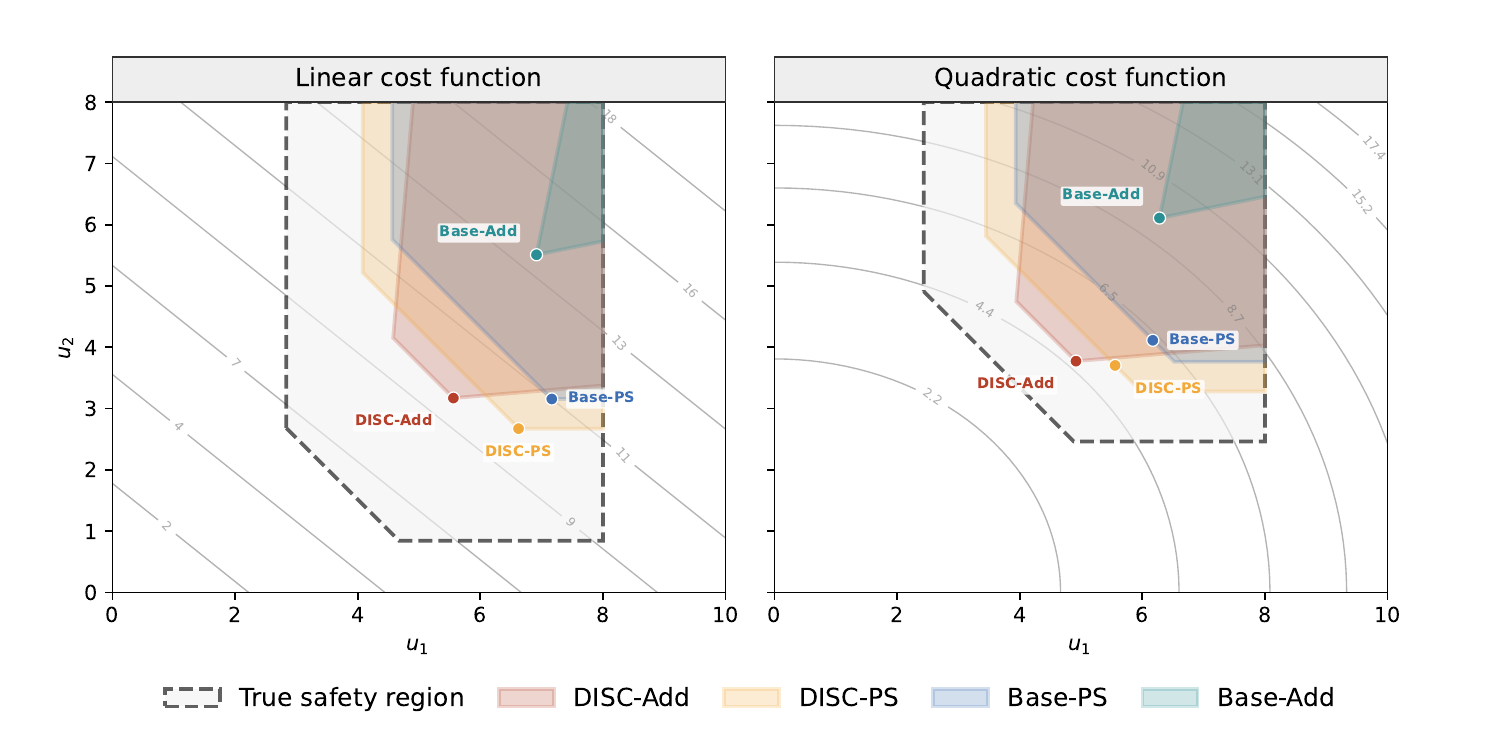}
    \caption{Safe feasible sets and oracle feasible sets for the fixed-family resource-allocation experiment under $ f_{\rm res}$, overlaid on the contours of linear and quadratic costs, where $ n=500$ and $ 1-\alpha = 90\%$. Colored dots refer to decision points.}
    \label{fig:resource_loss}
\end{figure}

{Figure~\ref{fig:resource_box} summarizes the results over repeated samples under both $f_{\rm res}$ and $f_{\rm buf}$}. 
Across all calibration sample sizes, the \texttt{DISC} methods attain inclusion rates close to the target level while achieving lower decision costs. The baseline methods produce substantially higher inclusion rates and are therefore more conservative. 
{The effect is particularly pronounced for \texttt{Base-Add} under \(f_{\rm res}\), whose inclusion rate is close to one. These findings agree with Propositions~\ref{prop:ps_scp_comparison} and \ref{prop:additive_unif_comparison}: even when the candidate family is fixed, controlling a stronger event may discard a substantial set of useful decisions.}

\begin{figure}[H]
    \centering
    \includegraphics[width=.9\linewidth]{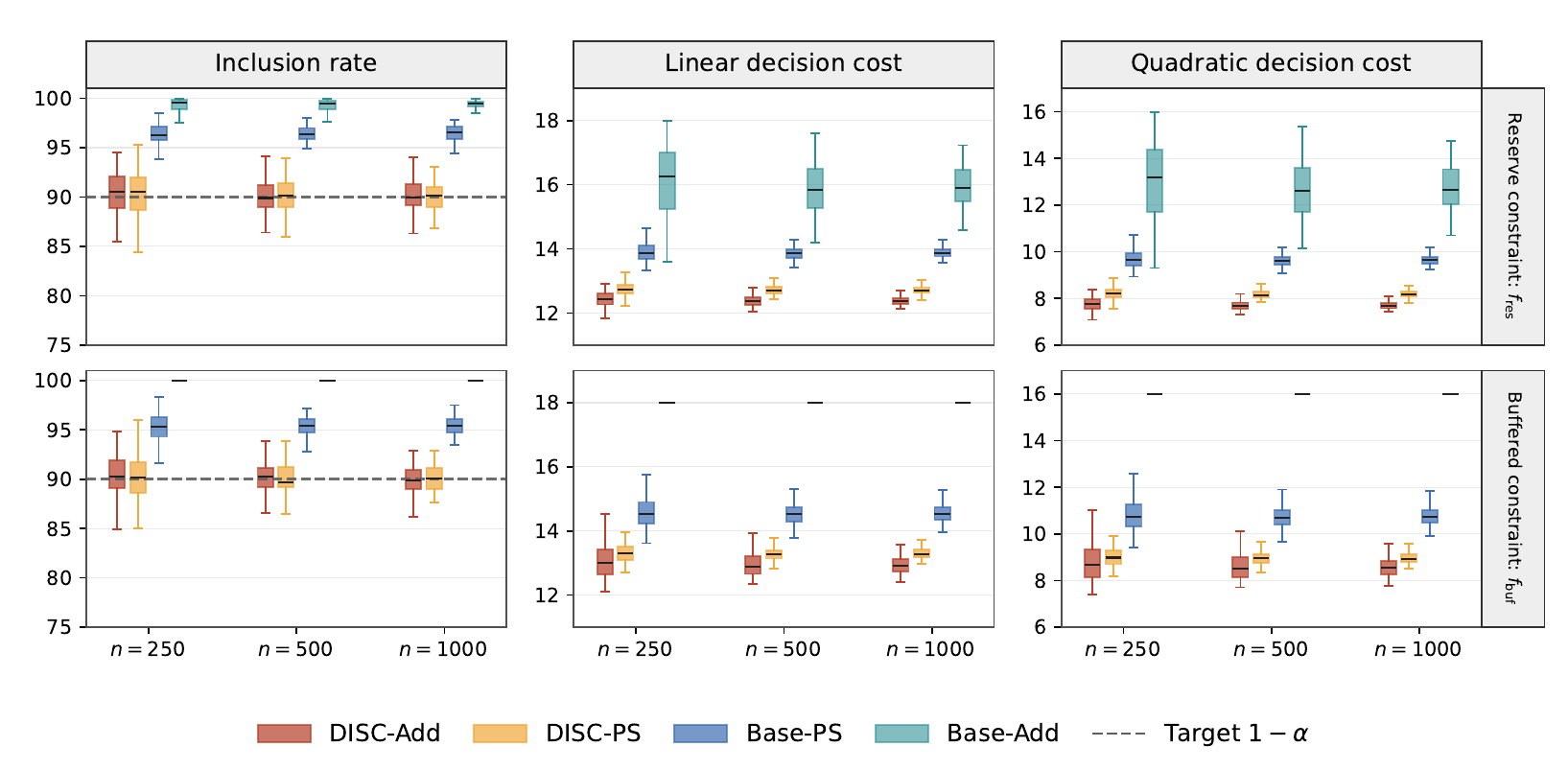}
    \caption{Empirical performance of each method under different calibration sample sizes in the resource allocation task for $f_{\rm res}$ and $f_{\rm buf}$, with target safety level $1-\alpha = 90\%$.}
    \label{fig:resource_box}
\end{figure}


\paragraph*{Optimized margin function.}
We next assess whether the procedure in Section~\ref{sec:family_optimization} can further improve downstream efficiency under one asymmetric linear objective $\phi_L(u)=u_1+10u_2$. {The optimized baseline procedures select \(\theta\) by minimizing the same empirical decision-loss objective over the same parameter spaces as their optimized \texttt{DISC} counterparts. They differ only in their calibration scores: baseline methods use the label-coverage score in \eqref{eq:score_coverage_PS}, or the uniform-residual score in \eqref{eq:uniform_score}, thereby retaining the stronger surrogate calibration targets of the corresponding fixed-family baselines.}


For the prediction-set margin family, we consider the parametric ellipsoidal prediction sets: $  C_\Sigma(X;\lambda) = \left\{y\in \sR^2: (y-\mathsf{ML}(X))^\top \Sigma (y-\mathsf{ML}(X))\le \lambda \right\},$ where $ \Sigma$ is a positive definite shape matrix. The concrete parameterization is given in Appendix~\ref{appen:resource_family_opt}. The fixed prediction-set family methods use the identity matrix $ \Sigma=I_2$, while the optimized methods select the shape matrix by solving the ERM \eqref{eq:empirical_model_selection} with $ \theta = \Sigma$ and $ \Theta = \{\Sigma = LL^{\top}: L \in \sR^{2\times 2}, \det(LL^{\top})=1\}$. Here we impose the constraint $ \det(\Sigma)=1$ to remove the scale non-identifiability between $ \Sigma$ and $ \lambda$.
For the additive residual-margin family, we consider a parameterized mean function 
while keeping a fixed scale function $\sigma(u)=1+0.1(u_1+u_2)$, i.e., \(L_{\theta, \lambda}(u;X) = \hat{f}(u;X)+\theta^\top u +\lambda \sigma(u)\) with a discrete space $\Theta = \{-2,-1,0,1,2\}^2$ {containing 25 two-dimensional parameters}.
The fixed-family procedures correspond to $ \theta=(0,0)^\top$. 

\begin{figure}[H]
    \centering
    \includegraphics[width=.8\linewidth]{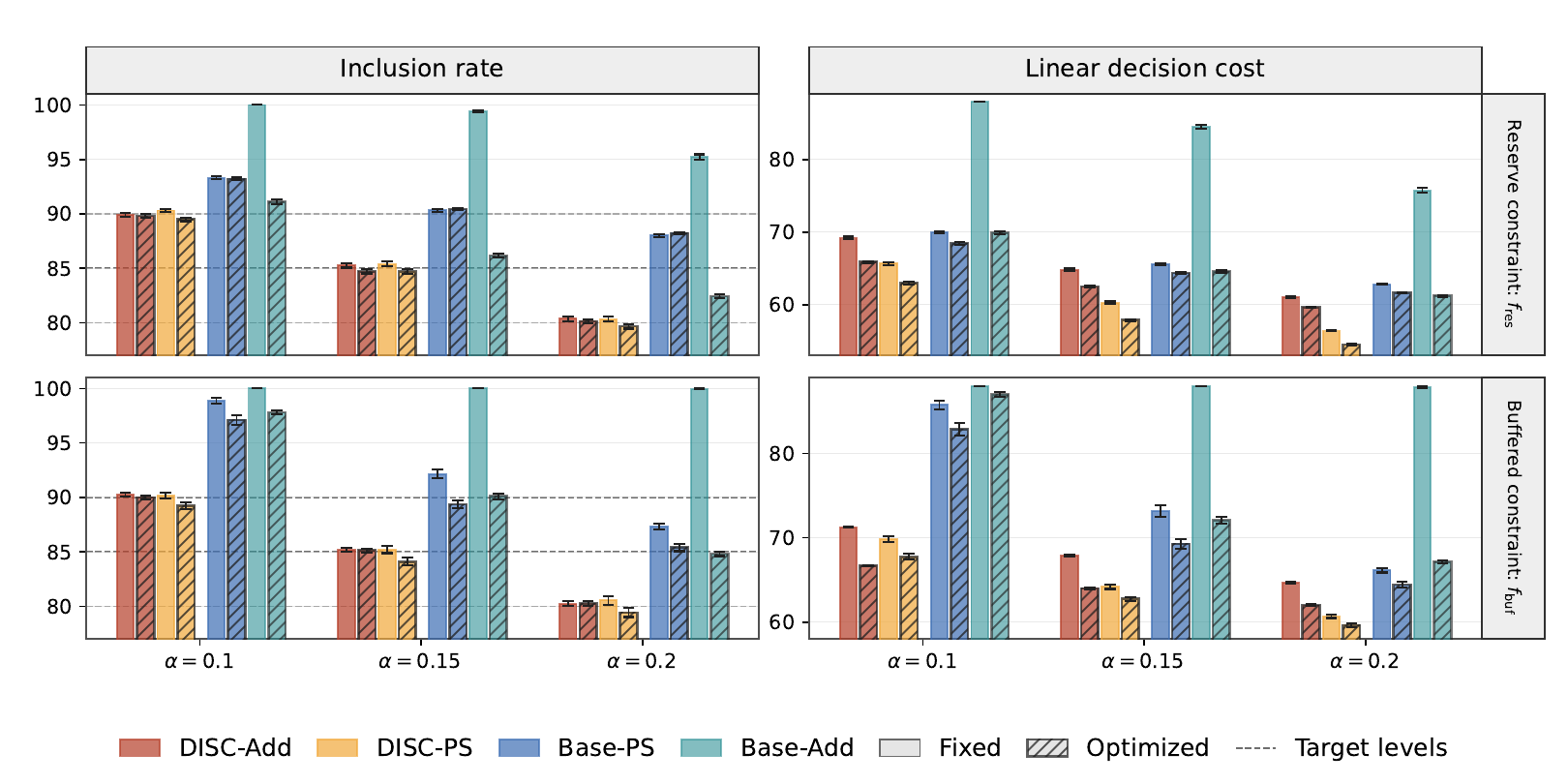}
    \caption{Empirical results of optimized margin function families in the resource-allocation task for $ f_{\rm res}$ and $ f_{\rm buf}$, with a linear decision cost function and $ n = 500$.}
    \label{fig:resource_opt}
\end{figure}

Figure~\ref{fig:resource_opt} shows that optimizing the margin function family effectively decreases the downstream cost in both \texttt{DISC} methods and \texttt{Base} methods. Under two margin families, the optimized \texttt{DISC} methods improve over their fixed-family counterparts while keeping inclusion close to the nominal target. The optimized \texttt{Base} methods also reduce cost relative to their fixed versions, but remain more conservative because their calibration targets are unchanged and are still stronger coverage or uniform residual bound events. 
{In sum, the experiments display inefficiency from two distinct sources: calibration of an unnecessarily strong surrogate event and an unsuitable candidate-family shape. Direct inclusion guarantee addresses the former, while family optimization addresses the latter.} Additional results of the \texttt{FO-DISC} safe feasible set are deferred to Appendix \ref{appen:add_results_resource}.


\subsection{Trajectory collision-avoidance experiment}
Trajectory-valued collision avoidance is motivated by safe-planning formulations for robots or autonomous vehicles moving among dynamic agents \citep{lindemann2023safe,dixit2023adaptive}, where a learned trajectory predictor is used to reason about future obstacle motion. This setting involves the following two trajectory components.

\noindent\textit{(1) Obstacle trajectory.} The unobserved label is the future trajectory of a surrounding traffic participant,
    $ Y=(Y_t)_{1\leq t\leq T}\in(\mathbb R^2)^T$ with $ T=10$. It is generated as $ Y_t=m_t(X)+\varepsilon_t(X)+J_t$ for $ t=1,\ldots,T$.
    Here $ m_t(X)$ is a smooth context-dependent mean path, $ \varepsilon_t(X)$ is heteroscedastic heavy-tailed smooth trajectory noise, and $ J_t$ is a rare lane-level excursion term. The excursion term can occasionally produce large prediction errors that do not necessarily imply a large safety risk within the ego-control box. The complete data-generating mechanism is given in Appendix~\ref{appen:trajectory_dgp}.

\noindent\textit{(2) Decision ego trajectory.} The decision variable is a two-dimensional control $u=(u_1,u_2)\in[-3,3]^2$, which characterizes an intermediate avoidance maneuver around a context-dependent ego route. Specifically, \(u_1\) controls an arch-shaped adjustment, while \(u_2\) controls the lateral avoidance component. Given a context $X$, the induced ego trajectory is $ Z(u;X)=\{Z_t(u;X)\}_{t=1}^T$ with $ Z_t(u;X)=a_t(X)+B_tu$. Here $ a_t(X)\in\mathbb R^2$ and $ B_t\in\mathbb R^{2\times2}$ are specified in Appendix~\ref{appen:trajectory_decision_setting}. The control basis functions in $ B_t$ vanish at both endpoints, so $ u$ affects only the intermediate maneuver. In particular, the \textit{nominal ego trajectory} is a zero-control route $ Z^{\rm nom}(X)=Z(0;X)$ when there are no obstacles.


When $ D(X)\neq\emptyset$, the decision control is obtained by $ u_D(X)\in\argmin_{u\in D(X)}\phi_Q(u)$ with $ \phi_Q(u)=0.2u_1^2+0.3u_2^2$, and the resulting decision ego trajectory is $ Z(u_D(X);X)$. If $ D(X)=\emptyset$, no trajectory is returned, and it receives a cost value of $4.5$. The quadratic cost measures the control effort needed to depart from the zero-control route, and the larger coefficient on \(u_2\) penalizes lateral maneuvers more heavily. It implies that lower quadratic cost corresponds to a safe trajectory that stays closer to the nominal trajectory. {In this simulation, we also report two additional metrics: the \textit{root-mean-square (RMS) to nominal} is the RMS distance between the decision ego trajectory and the nominal ego trajectory; the \textit{empty-set ratio} is the frequency of empty safe feasible sets.}

We consider two constraint functions: the \textit{average distance} constraint \(f_{\rm avg}(u;X,Y)=\omega-T^{-1}\sum_{t=1}^T\|Z_t(u;X)-Y_t\|_2^2\), and the \textit{minimum distance} constraint \(f_{\min}(u;X,Y)=\omega-\min_{1\le t\le T}\|Z_t(u;X)-Y_t\|_2^2\), where $\omega$ is the safety budget.
Before calibration, we fit a long short-term memory (LSTM) model $\{\mathsf{ML}_t(\cdot)\}_{1\leq t\leq T}$ to predict the obstacle trajectory $Y = (Y_t)_{1\leq t\leq T}$, in an independent pre-training sample of size $3000$. 
The fitted predictor is held fixed across all repetitions and methods. The prediction-set margin family uses the norm-based nonconformity scores. The additive residual-margin family is $ L_\lambda(u;X)=\hat{f}(u;X)+\lambda\sigma(u)$, where $\sigma(u)=1+0.2(u_1+1)^2+0.05(u_2-0.5)^2$. Because exact computation of the \texttt{DISC-Add} and \texttt{DISC-PS} scores requires solving nonconvex optimization problems, we use the conservative certification method in Section~\ref{sec:score_computation}; its implementation is described in Appendix~\ref{appen:trajectory_box_certificate}. {Since the certificate upper bounds the exact \texttt{DISC} score, Theorem~\ref{thm:inclusion_certificate} preserves the finite-sample inclusion guarantee, although the resulting subsets may be smaller than those obtained from exact scores.}

\begin{figure}[ht]
    \centering
    \includegraphics[width=.8\linewidth]{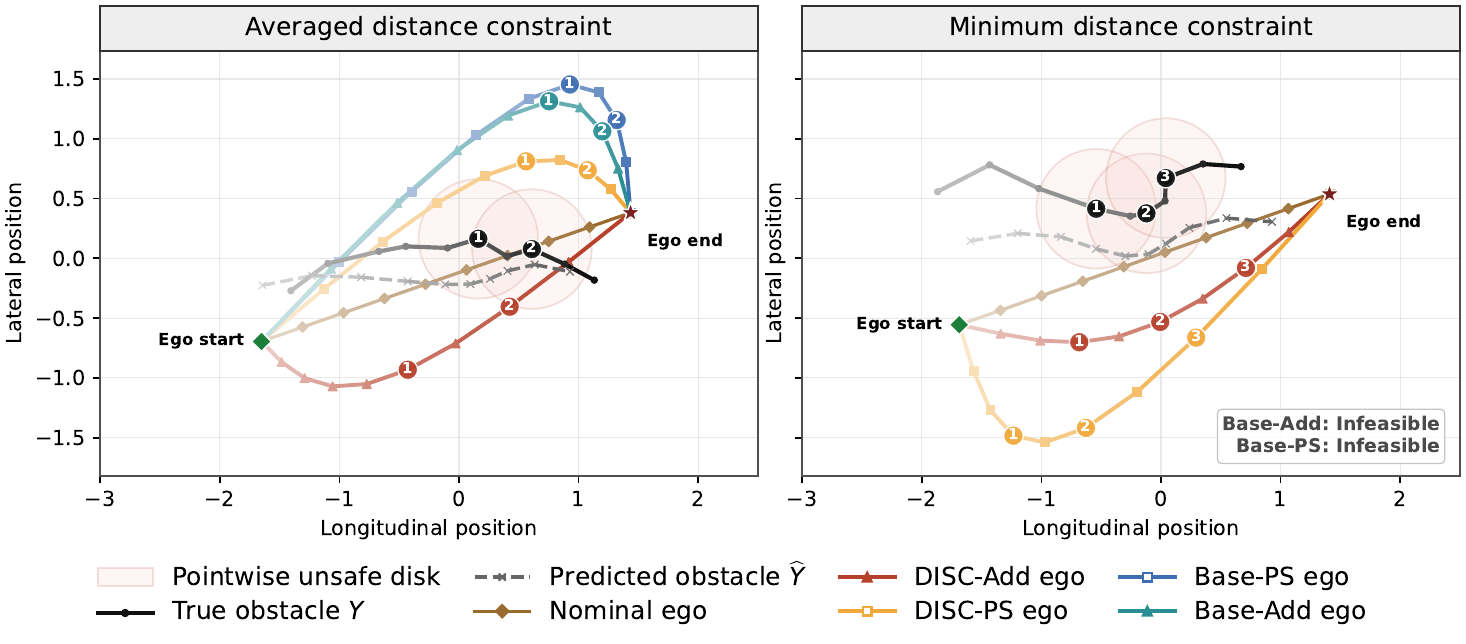}
    \caption{Decision ego trajectories of four methods with the safety budget $\omega = 0.25$. Numbered disks mark the displayed pointwise unsafe regions $ \{z\in \sR^2: \|z-Y_t\|_2^2 \leq \omega\}$ along the true obstacle trajectory, and matching numbers on the ego trajectories indicate the corresponding time indices. }
    \label{fig:collision_trajectory_paths}
\end{figure}

Figure~\ref{fig:collision_trajectory_paths} illustrates the decision ego trajectories produced by the four methods under both safety settings. Under the average-distance constraint function $f_{\rm avg}$, the safe feasible sets of all methods are nonempty. The \texttt{DISC} methods select ego trajectories that stay closer to the nominal path, while the baseline methods take more conservative detours. 
Under the minimum-distance constraint function $f_{\rm min}$, 
the \texttt{DISC} methods still return feasible low-cost avoidance maneuvers, whereas two baseline subsets are empty. 

\begin{figure}[H]
    \centering
    \includegraphics[width=.9\linewidth]{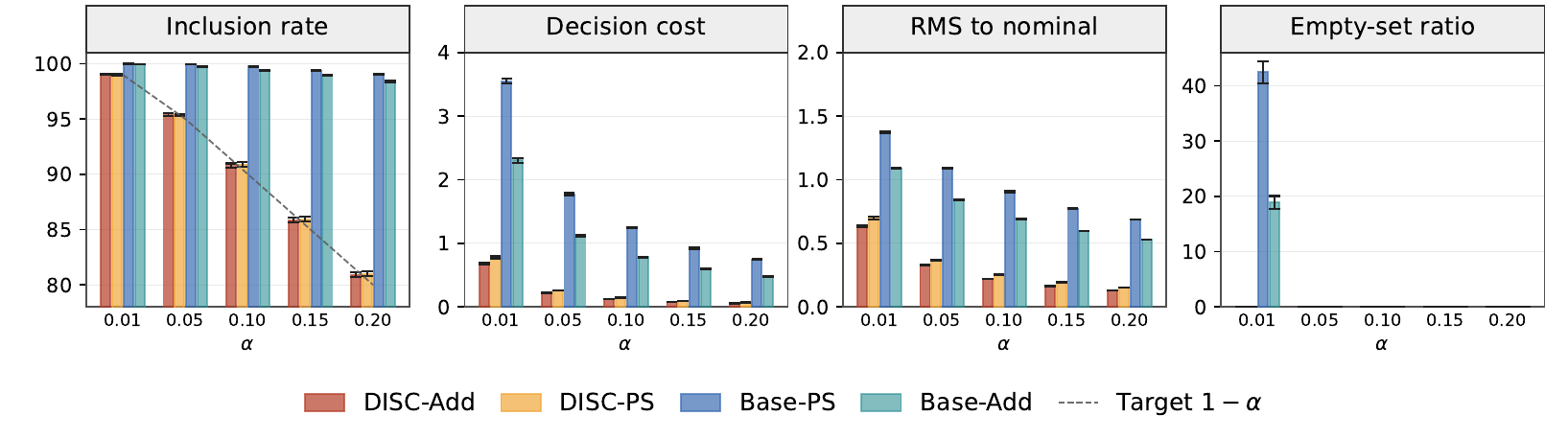}
    \caption{Sensitivity to the safety level $ \alpha$ in the average-distance safety setting with $ n=500$, and $ \omega=0.25$. Empty sets receive cost $ 4.5$; RMS is averaged over nonempty sets.} 
    \label{fig:collision_trajectory_barplot}
\end{figure}

\begin{figure}[H]
    \centering
    \includegraphics[width=.9\linewidth]{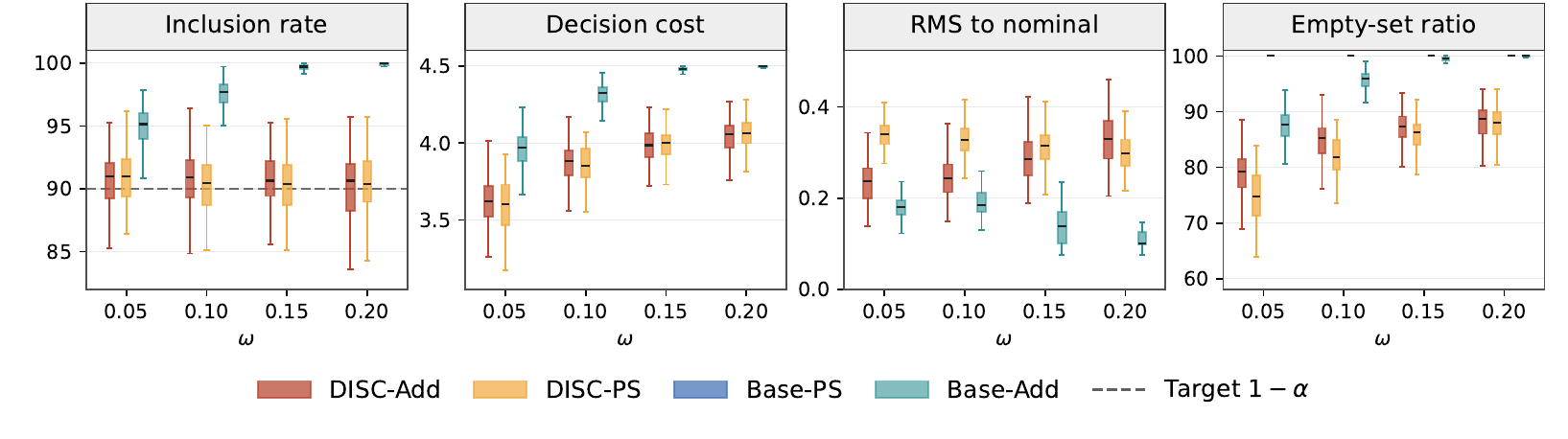}
    \caption{Sensitivity to the safety budget $ \omega$ in the minimum-distance safety setting with $ n=200$ and $ \alpha=0.1$. \texttt{Base-PS} is omitted because it always returns an empty subset. }
    \label{fig:trajectory_min_omega_sensitivity_main}
\end{figure}

Figure~\ref{fig:collision_trajectory_barplot} summarizes the performance metrics in the average-distance setting as the target safety level varies. The proposed \texttt{DISC} tracks the nominal inclusion targets across different values of $\alpha$, while the baseline methods are substantially more conservative, especially \texttt{Base-PS}. {Correspondingly, direct inclusion calibration also leads to lower decision costs and trajectories closer to the nominal route.} {Two baseline methods return empty subsets when $\alpha = 0.01$, whereas both \texttt{DISC} methods remain nonempty.} Figure~\ref{fig:trajectory_min_omega_sensitivity_main} examines the more stringent minimum-distance constraint function as the safety budget \(\omega\) increases. The \texttt{DISC} methods maintain inclusion rates near the nominal level while producing lower-cost nonempty decisions more frequently. {The empty-set ratios further show that the \texttt{DISC} methods yield nonempty feasible subsets more often than \texttt{Base-Add} across all values of $\omega$, and \texttt{Base-PS} is always infeasible.} These results demonstrate that the efficiency gain from direct inclusion calibration persists in a structured trajectory problem even when the \texttt{DISC} scores are computed through conservative certificates.

\section{Real Data Applications}\label{sec:real_data}

In this section, we evaluate \texttt{DISC} in two real-data applications: LLM reasoning-chain verification and safe object detection, with discrete decision spaces. 

\subsection{LLM reasoning-chain verification}\label{sec:exp_llm_reasoning}

We study reasoning-chain verification on the MATH benchmark \citep{hendrycksmath2021}. Qwen-Flash is used to generate a chain-of-thought response and to produce claim-level verifier confidence scores. 
Each record contains a problem statement, an official solution, a benchmark-provided difficulty label, one generated response, its ordered subclaims, an LLM-generated dependency graph, and claim-wise acceptability labels assigned by a fixed reference-judge pipeline. After preprocessing, the experiment contains $12{,}500$ valid problems. The inclusion target is defined relative to these fixed reference-judge labels and cannot be interpreted as a guarantee of objective mathematical correctness.

Following \citet{rubin2025conformal}, we preprocess each response as a reasoning graph; Appendix~\ref{appen:llm_graph_construction} gives the implementation details and prompts. For a prompt $X$, let $(\mathsf{LLM}_t(X))_{1\leq t\leq T(X)}$ denote the ordered $T(X)$ subclaims produced by the fixed LLM pipeline and let $G(X)$ be their dependency graph. For $\mathcal S\subseteq[T(X)]$, write $u_{\mathcal S}=(\mathsf{LLM}_t(X))_{t\in\mathcal S}$ for the output retaining the subclaims indexed by $\mathcal S$. {Let $\operatorname{Anc}(t;G(X))$ denote the set of strict ancestors of node $t$ in $G(X)$}. The set $\mathcal S$ is dependency closed if $\operatorname{Anc}(t;G(X))\subseteq\mathcal S$ for every $t\in\mathcal S$, and $\gU(X)$ collects all such outputs, including $u_{\emptyset}$. Let $Y=(Y_t)_{1\leq t\leq T(X)}$ be the reference-judge labels, where $Y_t=1$ indicates that $\mathsf{LLM}_t(X)$ is acceptable. {For $u_{\mathcal S}\in\gU(X)$, define $f(u_{\mathcal S};X,Y)=1-2\min_{t\in\mathcal S}Y_t$, with the convention $\min_{t\in\emptyset}Y_t=1$}. Hence, $A(X,Y)=\{u_{\mathcal S}\in\gU(X):f(u_{\mathcal S};X,Y)\leq0\}=\{u_{\mathcal S}\in\gU(X):Y_t=1\text{ for all }t\in\mathcal S\}$ is the oracle feasible set consisting of dependency-closed outputs whose retained subclaims are all acceptable. 

Given the prompt, generated subclaims, and dependency graph, the Qwen-Flash verifier produces a confidence score $\widehat Y_t(X)\in[0,1]$ for each subclaim $\mathsf{LLM}_t(X)$, with larger values indicating greater confidence that the subclaim is acceptable. Replacing $Y_t$ in the constraint function with $\widehat Y_t(X)$ gives the pretrained surrogate constraint function $\hat f(u_{\mathcal S};X)=1-2\min_{t\in\mathcal S}\widehat Y_t(X)$, where we retain the convention $\min_{t\in\emptyset}\widehat Y_t(X)=1$. Using $\hat f(u_{\mathcal S};X)$, we construct the additive residual-margin family $D(X;\lambda)=\{u_{\emptyset}\}\cup\{u\in\gU(X):\hat f(u;X)+\mu(u;X)+\lambda\leq0\}$, where $\mu(u;X)$ is an additive adjustment specified below.


{The constant-offset DISC method sets $\mu(u;X)\equiv0$ and recovers the rule of \citet{rubin2025conformal}; we refer to this method as \texttt{DISC-Const}}. The three adaptive DISC variants incorporate the benchmark difficulty label through $\mu(u;X)$, with \texttt{DISC-Mean} depending only on difficulty and \texttt{DISC-Linear} and \texttt{DISC-Sqrt} additionally depending on the retained-subset size. For comparison, the corresponding uniform-residual procedures are denoted by \texttt{Base-Const}, \texttt{Base-Mean}, \texttt{Base-Linear}, and \texttt{Base-Sqrt}. Their complete definitions and calibration scores are given in Appendix~\ref{appen:llm_cot_details}. For each $\alpha\in\{0.05,0.10\}$, we repeat the experiment over $100$ random splits with $10{,}000$ problems for calibration and $2{,}500$ for testing.

For each safe feasible set, the downstream output is chosen to retain an admissible dependency-closed subset of maximum cardinality. We report the empirical inclusion rate together with three retention metrics: mean retained length, mean retained fraction, and probability of retaining the final claim. Appendix~\ref{appen:llm_evaluation_metrics} specifies the exact definitions. As a reference, all subclaims in the unfiltered response are acceptable for $81.29\%$ of the problems under the same reference-judge pipeline.

\begin{table}[htbp]
    \centering
    \caption{LLM reasoning-chain verification on MATH at $\alpha=0.1$. Boldface and underlining indicate the largest and second-largest values among the retention metrics.}
    \label{tab:llm_reasoning_results}
    \begin{tabular}{lcccc}
        \toprule
        Method & Inclusion & Mean len $\uparrow$ & Mean frac $\uparrow$ & Final kept $\uparrow$ \\
        \midrule
        \texttt{DISC-Const} & 0.9036 & 9.89 & 0.8293 & 0.6786 \\
        \texttt{DISC-Mean} & 0.9032 & 9.90 & 0.8303 & 0.6796 \\
        \texttt{DISC-Linear} & 0.8994 & \textbf{10.57} & \textbf{0.8790} & \textbf{0.8770} \\
        \texttt{DISC-Sqrt} & 0.8996 & \underline{10.53} & \underline{0.8750} & \underline{0.8399} \\
        \bottomrule
    \end{tabular}
\end{table}

Table~\ref{tab:llm_reasoning_results} reports the results for $\alpha=0.1$; the results for $\alpha=0.05$ are provided in Table~\ref{tab:llm_reasoning_resultsp05}. All four displayed methods maintain empirical inclusion rates near the nominal target while producing nontrivial outputs. The fixed-family \texttt{DISC-Const} and the mean-adjusted \texttt{DISC-Mean} perform similarly. In contrast, \texttt{DISC-Linear} and \texttt{DISC-Sqrt} retain more claims and preserve the final claim more frequently. 
The uniform-residual baselines in Appendix~\ref{appen:llm_cot_details} are more conservative and return the empty output in nearly all repetitions. Thus, our method can produce more informative reasoning outputs in this application.


\subsection{Safe object detection}\label{sec:safe_obj_detect}

In this section, we study safe object detection using the COCO 2017 validation set \citep{lin2015microsoft} and a fixed pretrained YOLOv11n detector \citep{yolo11_ultralytics}. For an image $X_i$, let $Y_{ij}=(b_{ij},a_{ij})$, $j=1,\ldots,T(X_i)$, denote $T(X_i)$ target objects, where $b_{ij}\in[0,1]^4$ is the normalized ground-truth bounding box and $a_{ij}$ is the class label. We conduct calibration and evaluation at the object level, treating each pair $(X_i,Y_{ij})$ as one object-level unit. 
Here we consider pedestrians, vehicles, and traffic-related facilities and retain target objects with normalized area at least $0.001$, leaving $3{,}205$ images containing at least one retained object. Appendix~\ref{appen:safe_obj_detect} gives the complete implementation details.

For a generic object-level pair $(X,Y)$ with $Y=(b,a)$, the detector YOLOv11n returns at most $100$ target-class predictions $((\widehat b_k(X),\widehat c_k(X),\widehat a_k(X)))_{k=1}^{K(X)}$, where $K(X)\leq100$, $\widehat b_k(X)\in[0,1]^4$ is the $k$th predicted box, $\widehat c_k(X)\in[0,1]$ is its confidence score, and $\widehat a_k(X)$ is its predicted class label. The predictions are ordered so that $\widehat c_1(X)\geq\cdots\geq\widehat c_{K(X)}(X)$. For $k\in[K(X)]$, the decision $u_{[k]}(X)$ retains the first $k$ predictions, and $\gU(X)=\{u_{[k]}(X):k\in[K(X)]\}$ is the candidate-output family. For a prescribed coverage fraction $\gamma\in(0,1)$, define $\operatorname{cover}(u_{[k]}(X),Y)=\max_{\ell\leq k:\,\widehat a_\ell(X)=a}|b\cap\widehat b_\ell(X)|/|b|$, taking the maximum over an empty set as zero, where $|b|$ is the area of box $b$. The object-level constraint function is $f(u_{[k]}(X);X,Y)=\gamma-\operatorname{cover}(u_{[k]}(X),Y)$, and hence $A(X,Y)=\{u\in\gU(X):f(u;X,Y)\leq0\}$ is the oracle feasible set consisting of confidence-ordered prefixes that sufficiently cover the target object. 

Define a surrogate safety function by $\hat f(u_{[k]}(X);X)=\min_{\ell\leq k}\widehat c_\ell(X)=\widehat c_k(X)$.
For an offset $\mu(X)$, we consider $D(X;\lambda,\mu)=\{u_{[k]}(X)\in\gU(X):\hat f(u_{[k]}(X);X)+\mu(X)+\lambda\leq0\}\cup\{u_{[K(X)]}(X)\}$. This candidate feasible set is a tail of the ordered prefix family, and its shortest member is the reported output. Subject to object-level inclusion, shorter reported prefixes and smaller box-union areas are preferred. We compare \texttt{DISC} and \texttt{Base}, using the directed object-level inclusion score and a uniform residual score over the same prefixes, respectively. For each calibration strategy, we consider two offset specifications: the constant version sets $\mu(X)\equiv0$, whereas the trained version estimates $\widehat\mu(X)$ from detector-derived features on an independent pre-training split. The resulting methods are \texttt{DISC-Const}, \texttt{DISC-Train}, \texttt{Base-Const}, and \texttt{Base-Train}.

\begin{table}[htbp]
    \centering
    \caption{Safe object detection at $\gamma=0.6$ and $\alpha=0.1$, averaged over $100$ random splits. Boldface and underlining mark the smallest and second-smallest values in the retention and area columns.}
    \label{tab:object_detection_gamma06}
    \begin{tabular}{lcccccc}
        \toprule
        Method & Inclusion & Mean Ret $\downarrow$ & $Q_{0.25}^{\rm Ret}\ \downarrow$ & $Q_{0.50}^{\rm Ret}\ \downarrow$ & $Q_{0.75}^{\rm Ret}\ \downarrow$ & Area $\downarrow$ \\
        \midrule
        \texttt{DISC-Const} & 0.900 & \textbf{12.41} & 2.96 & \underline{6.15} & \underline{16.64} & \underline{0.379} \\
        \texttt{DISC-Train} & 0.900 & \underline{17.33} & \textbf{2.00} & \textbf{3.33} & \textbf{10.50} & \textbf{0.375} \\
        \texttt{Base-Const} & 0.952 & 82.79 & 72.18 & 100.00 & 100.00 & 0.576 \\
        \texttt{Base-Train} & 0.951 & 50.07 & \underline{2.26} & 41.03 & 100.00 & 0.447 \\
        \bottomrule
    \end{tabular}
\end{table}

The $3{,}205$ images are split into disjoint pre-training, calibration, and test sets in a $7{:}2{:}1$ ratio. All object-level pairs associated with the same image are assigned to the same split. We set $\alpha=0.1$ and average over $100$ random splits. Table~\ref{tab:object_detection_gamma06} reports the results at $\gamma=0.6$. The empirical inclusion rate is evaluated over eligible image--object pairs in the test images. In addition to object-level inclusion, we report the mean and quartiles of retained-prefix length and the normalized union area. Both \texttt{DISC} methods attain inclusion rates of $0.900$. The mean retained-prefix lengths of \texttt{DISC-Const} and \texttt{DISC-Train}, $12.41$ and $17.33$, are substantially smaller than those of the corresponding uniform-residual baselines, $82.79$ and $50.07$. The \texttt{DISC} procedures also produce 
smaller box-union areas. Although \texttt{DISC-Train} has a larger mean retained length than \texttt{DISC-Const}, it achieves lower retained-length quartiles and a slightly smaller union area, suggesting that it returns shorter prefixes for most images but has a heavier upper tail. Results for $\gamma=0.5$ and $0.7$ appear in Appendix~\ref{appen:safe_obj_detect}.

\section{Concluding Remarks}\label{sec:conclusion}

{This paper formulates feasible set inclusion as a setwise conformal inference problem and proposes \texttt{DISC} to control the probability of this target directly. For a fixed nested family, \texttt{DISC} converts set inclusion into a scalar critical threshold and yields a finite-sample, distribution-free marginal guarantee under exchangeability. In particular, we show that \texttt{DISC} is no more conservative than the corresponding label coverage and uniform-residual calibration rules for the prediction-set and additive residual-margin families, respectively. 
The numerical studies show that our method maintains the inclusion rate near the target safety level while producing larger or more useful decision regions and lower decision costs. 

Several extensions follow naturally from the present framework. First, Appendix~\ref{appen:optimal_safety_subset} characterizes a theoretically optimal safe feasible set, but turning this design into a practical method requires efficient estimation of conditional distributions. Also, the guarantees established here are marginal over the joint distribution of $(X,Y)$, and so it warrants further study of feasible conditional inclusion guarantees.} 






\phantomsection\label{supplementary-material}
\bigskip

\begin{center}

{\large\bf SUPPLEMENTARY MATERIAL}

\end{center}

The supplementary material contains the computation and optimization details of \texttt{DISC} scores, proofs of theoretical results, and deferred numerical settings and results.


\setstretch{.8}
\setlength{\bibsep}{4pt}
\putbib
\end{bibunit}

\newpage
\appendix
\begin{bibunit}
\setbibunitprefix{-appendix}
\setcounter{table}{0}
\renewcommand{\thetable}{S\arabic{table}}
\setcounter{figure}{0}
\renewcommand{\thefigure}{S\arabic{figure}}
\setcounter{lemma}{0}
\renewcommand{\thelemma}{S\arabic{lemma}}
\setcounter{theorem}{0}
\renewcommand{\thetheorem}{S\arabic{theorem}}

\numberwithin{equation}{section}
\allowdisplaybreaks
\spacingset{1.7}

\begin{center}
    {\LARGE\bf Supplementary Material for ``Conformalized Safe Feasible Sets in Uncertain Decision Systems''}
\end{center}

\section{Optimization and computation}

\subsection{Computation of prediction-set margins}
\label{appen:score_computation}

We give explicit formulations of
\(L_\lambda(u;X)=\sup_{y\in C(X;\lambda)}f(u;X,y)\).
Throughout this subsection, \(\gY=\sR^q\), the prediction sets under
consideration are nonempty and compact, and their fitted parameters
are fixed before calibration. Computational tractability depends on
both the geometry of \(C(X;\lambda)\) and the dependence of \(f\) on
the label \(y\).

\paragraph*{Affine and max-affine constraint functions.}
Suppose
\[
    f(u;X,y)=\max_{1\le k\le K}
    \{a_k(u;X)^\top y+b_k(u;X)\}.
\]
The affine case corresponds to \(K=1\). Define the support function
\(h_C(a)=\sup_{y\in C}a^\top y\). Interchanging the supremum with
the finite maximum gives
\begin{equation}
    L_\lambda(u;X)
    =\max_{1\le k\le K}
    \{b_k(u;X)+h_{C(X;\lambda)}(a_k(u;X))\}.
    \label{eq:ps_margin_support}
\end{equation}
Thus each affine branch requires one support-function evaluation.
The following expressions are standard robust counterparts
\citep{ben2002robust,boyd2004convex}; we suppress dependence on
\((u,X)\) in \(a_k,b_k\) and on \(X\) in the fitted set parameters.

\paragraph*{Ellipsoidal prediction sets.}
Let
\[
    C(X;\lambda)
    =\{y:(y-\mu)^\top\Sigma(y-\mu)\le r_\lambda^2\},
    \qquad \Sigma\succ0,
\]
where \(r_\lambda\ge0\) is nondecreasing in \(\lambda\). Substituting
\(y=\mu+r_\lambda\Sigma^{-1/2}z\), \(\|z\|_2\le1\), yields
\[
    L_\lambda(u;X)
    =\max_{1\le k\le K}
    \left\{b_k+a_k^\top\mu
    +r_\lambda\sqrt{a_k^\top\Sigma^{-1}a_k}\right\}.
\]
Hence \(D(X;\lambda)\) is obtained by requiring every expression
inside the maximum to be nonpositive. For the squared Mahalanobis
score \(s(X,y)=(y-\mu)^\top\Sigma(y-\mu)\),
\(r_\lambda=\sqrt\lambda\); for its square root,
\(r_\lambda=\lambda\), in both cases with \(\lambda\ge0\).

\paragraph*{Box prediction sets.}
For fixed coordinate scales \(w_j>0\), consider
\[
    C(X;\lambda)
    =\{y:|y_j-\mu_j|\le r_\lambda w_j,\ j=1,\ldots,q\}.
\]
Maximization separates across coordinates, giving
\[
    L_\lambda(u;X)
    =\max_{1\le k\le K}
    \left\{b_k+a_k^\top\mu
    +r_\lambda\sum_{j=1}^q w_j|a_{kj}|\right\}.
\]
The score \(s(X,y)=\max_j|y_j-\mu_j|/w_j\) produces this family
with \(r_\lambda=\lambda\ge0\).

\paragraph*{Polyhedral prediction sets.}
Let \(C(X;\lambda)=\{y:Hy\le h_\lambda\}\), where
\(H\in\sR^{p\times q}\) is fixed and \(h_\lambda\in\sR^p\)
is coordinatewise nondecreasing in \(\lambda\).
For each branch, linear-programming duality gives
\[
    h_{C(X;\lambda)}(a_k)
    =\max_{Hy\le h_\lambda}a_k^\top y
    =\min_{\substack{\nu_k\ge0\\H^\top\nu_k=a_k}}
    h_\lambda^\top\nu_k.
\]
The assumed nonemptiness and compactness ensure finite optimal
values and attainment. Consequently, \(u\in D(X;\lambda)\) is
equivalent to the existence of \(\nu_1,\ldots,\nu_K\) satisfying
\[
    \nu_k\ge0,\qquad
    H^\top\nu_k=a_k(u;X),\qquad
    b_k(u;X)+h_\lambda^\top\nu_k\le0,
    \quad k=1,\ldots,K.
\]
This formulation avoids enumerating the vertices of the prediction set.

\paragraph*{Nonlinear constraint functions and scope.}
If \(y\mapsto f(u;X,y)\) is continuous and concave and
\(C(X;\lambda)\) is convex, then, for fixed \((u,X,\lambda)\),
\[
    L_\lambda(u;X)
    =-\min_{y\in C(X;\lambda)}\{-f(u;X,y)\}
\]
is the negative optimal value of a convex optimization problem.
The same reasoning applies branchwise when \(f\) is a finite
maximum of concave functions of \(y\). For example, if
\(f(u;X,y)=\omega-\|z(u;X)-y\|_2^2\) and
\(C(X;\lambda)=\{y:\|y-\mu\|_2\le r_\lambda\}\), then
\[
    L_\lambda(u;X)
    =\omega-\bigl[\|z(u;X)-\mu\|_2-r_\lambda\bigr]_+^2.
\]

These results concern evaluation of the margin at a fixed decision.
Convexity of \(D(X;\lambda)\) additionally follows when \(\gU\)
is convex and \(u\mapsto f(u;X,y)\) is convex for every \(y\).
In particular, for fixed \(\lambda\), affine \(a_k(u;X)\) and
\(b_k(u;X)\), and polyhedral \(\gU\), the ellipsoidal case admits
a second-order-cone formulation, while the box and polyhedral cases
admit linear-programming formulations. Tractable evaluation of
\(L_\lambda\) alone does not guarantee tractability of the further
optimization over decisions required to compute the \texttt{DISC} score.

\subsection{Lagrangian conservative certificate}\label{appen:Lagrangian_certificate}
{Suppose that decision space $\gU$ is convex and $f_i$ is $\beta_i$-\emph{semiconcave} for some $\beta_i\geq 0$, meaning that $u\mapsto f_i(u)-\beta_i\lVert u\rVert_2^2/2$ is concave on $\gU$.}
We choose the margin function \(L_{i,\lambda}(u)\) to be uniformly \(\kappa_i\)-strongly convex in \(u\) with \(\kappa_i>0\). Then, for every multiplier \(\rho\geq\beta_i/\kappa_i\), the function $f_i(u)-\rho L_{i,\lambda}(u)$ is concave. This yields the Lagrangian conservative certificate: 
\begin{equation}\nonumber
    \widetilde g_i(\lambda)
    =
    \inf_{\rho\geq\beta_i/\kappa_i}
    \sup_{u\in\gU}
    \left\{
        f_i(u)-\rho L_{i,\lambda}(u)
    \right\}.
\end{equation}
Indeed, \(L_{i,\lambda}(u)\leq0\) implies
\(f_i(u)\leq f_i(u)-\rho L_{i,\lambda}(u)\), and weak duality \citep{boyd2004convex} ensures $\widetilde g_i(\lambda)\geq g_i(\lambda)$ by the definition \eqref{eq:certification_value}. For each fixed \(\rho\), the inner supremum in defining $\widetilde g_i(\lambda)$ is a convex optimization problem, since it maximizes a concave function over the convex set \(\gU\). Moreover, its optimal value is convex in the scalar variable \(\rho\), so the outer infimum can be computed through a one-dimensional convex minimization. Since \(L_{i,\lambda}(u)\) is nondecreasing in \(\lambda\), we know the conservative certificate \(\widetilde g_i(\lambda)\) is also nonincreasing, {and thus $\widetilde{V}_i$ can be located by bisection.}

\subsection{Sequential convex approximation of constrained ERM}\label{appen:SCA_erm}
Let \(m_1=n-k+1\) and \(m_0=n-k\). Denote $\widehat{\mathsf{CVaR}}_{m}(g) = m^{-1}\sum_{i=n-m+1}^n g_{(i)}$ for a vector $g\in \sR^n$. Since \(m_1\widehat{\mathsf{CVaR}}_{m_1}\) and \(m_0\widehat{\mathsf{CVaR}}_{m_0}\) are convex functions, the quantile constraint \(m_1\widehat{\mathsf{CVaR}}_{m_1}(g)-m_0\widehat{\mathsf{CVaR}}_{m_0}(g)\le 0\) is a difference-of-convex (DC) constraint. We solve it by sequential convex approximation. At iteration \(t\), given \((\theta^t,\lambda^t)\), set \(g^t=g(\lambda^t;\theta^t)\) and choose \(w^t\in\partial\{m_0\widehat{\mathsf{CVaR}}_{m_0}(g^t)\}\). When there are no ties, \(w_i^t=1\) if \(g_i^t\) is among the largest \(m_0\) components of \(g^t\), and \(w_i^t=0\) otherwise. By convexity,
\[
    m_0\widehat{\mathsf{CVaR}}_{m_0}(g)
    \ge
    m_0\widehat{\mathsf{CVaR}}_{m_0}(g^t)
    +
    \langle w^t,g-g^t\rangle .
\]
Thus, the DC constraint can be conservatively replaced by
\[
    m_1\widehat{\mathsf{CVaR}}_{m_1}(g)
    -
    m_0\widehat{\mathsf{CVaR}}_{m_0}(g^t)
    -
    \langle w^t,g-g^t\rangle
    \le 0 .
\]
Using the epigraph representation of \(m_1\widehat{\mathsf{CVaR}}_{m_1}\), the conservative surrogate subproblem at iteration \(t\) becomes
\begin{align}\label{eq:dc_cvar_sca_step}
    \min_{\theta,\lambda,u_i,\eta,\xi_i}
    \quad &
    \frac1n\sum_{i=1}^n\phi(u_i) \\
    \mathrm{s.t.}\quad
    &
    L_{\theta,\lambda}(u_i;X_i)\le 0,
    \quad i\in[n], \nonumber\\
    &
    \xi_i\ge g_i(\lambda;\theta)-\eta,
    \qquad
    \xi_i\ge 0,
    \quad i\in[n], \nonumber\\
    &
    m_1\eta+\sum_{i=1}^n\xi_i
    -
    m_0\widehat{\mathsf{CVaR}}_{m_0}(g^t)
    -
    \left\langle w^t,
    \vg(\lambda;\theta)-g^t
    \right\rangle
    \le 0 . \nonumber
\end{align}

Gradient information is available under suitable value-function regularity. Fix $i$, write $p=(\lambda,\theta)$, and consider a reference parameter $p_0$. Suppose that the inner problem admits a twice continuously differentiable nonlinear-programming formulation, with compact $\gU$ described by parameter-independent smooth constraints. Assume that the global maximizer at $p_0$ is unique and that its KKT point satisfies the linear independence constraint qualification, strict complementarity, and the second-order sufficient condition for the equivalent minimization of $-f$. Then $g_i$ is continuously differentiable near $p_0$. Since $f$ and $\gU$ are independent of $p$, \[\nabla g_i(p)=-\nu_i^*(p)\,\nabla_p L_{\theta,\lambda}(u_i^*(p);X_i),\] where $\nu_i^*(p)\ge 0$ is the unique multiplier associated with $L_{\theta,\lambda}(u;X_i)\le 0$. If this constraint is inactive, the gradient is zero \citep{bonnans2000perturbation,rockafellar1998variational}. 


If the inner value is replaced by a conservative upper bound $\bar g_i(p)\ge g_i(p)$ throughout the relevant parameter domain, enforcing $\bar g_i(p)\le 0$ remains sufficient for the exact certificate $g_i(p)\le 0$. Derivatives or generalized gradients used for optimization then correspond to $\bar g_i$. The required differentiability or local Lipschitz assumptions, and any subdifferential characterization, must be verified for the surrogate itself; they do not follow from the upper-bound property.

\subsection{Piecewise-constant augmented ERM maps}
\label{appen:piecewise_constant_augmented_erm}

The augmented-ERM construction in Section~\ref{sec:augmented_erm} appears to require recomputing \(\hat\theta^y\) for every hypothesized label \(y\). When the parameter class is finite, \(\Theta=\{\theta_1,\ldots,\theta_M\}\), the computation can be organized more explicitly. The key observation is that, for each fixed \(\theta_m\), the augmented quantile can be updated without re-sorting the calibration scores for every \(y\).

For each \(m\in[M]\), precompute the calibration scores \(c_{i,m}=V_{\theta_m}(X_i,Y_i)\) for \(i=1,\ldots,n\) and let \(c_{(1),m}\le\cdots\le c_{(n),m}\) be their order statistics. 
And we write $c_{(0),m} = -\infty$ and $c_{(n+1),m} = \infty$.
Set \(k=\lceil (n+1)(1-\alpha)\rceil\), \(a_m=c_{(k-1),m}\), and \(b_m=c_{(k),m}\).
For a candidate label \(y\), write \(v_m(y)=V_{\theta_m}(X_{n+1},y)\). Then the augmented threshold for the fixed parameter \(\theta_m\) is
\begin{align*}
    \lambda_m(y)
    :=
    \hat\lambda_{\theta_m}^y
    &=
    \mathsf{Q}_{1-\alpha}
    \left(
    \{V_{\theta_m}(X_i,Y_i)\}_{i=1}^n,
    V_{\theta_m}(X_{n+1},y)
    \right)\\
    &=
    \min\{b_m,\max\{v_m(y),a_m\}\}.
\end{align*}
Thus, after sorting the calibration scores once for each \(\theta_m\), evaluating the augmented threshold for a new label \(y\) only requires computing \(v_m(y)\) and clipping it to the interval \([a_m,b_m]\).

Next define the fixed-\(\theta_m\) value functions \(\Psi_{i,m}(\lambda) = \min_{u\in D_{\theta_m}(X_i;\lambda)}\phi(u)\) for \(i=1,\ldots,n+1\). The augmented objective evaluated at \(\theta_m\) is \(\widetilde R_m(y)=(n+1)^{-1}\sum_{i=1}^{n+1}\Psi_{i,m}(\lambda_m(y))\).
The augmented ERM therefore selects
\[
    \hat\theta^y=\theta_{m(y)},
    \qquad
    m(y)
    =
    \min\left\{
    m\in[M]:
    \widetilde R_m(y)=\min_{\ell\in[M]}\widetilde R_\ell(y)
    \right\},
\]
where the minimum index implements the deterministic tie-breaking rule. Equivalently, the selected-parameter cell of \(\theta_m\) is
\[
    \mathcal C_m
    =
    \left\{
    y\in\gY:
    \widetilde R_m(y)\le \widetilde R_\ell(y)\ \text{for all }\ell>m,
    \quad
    \widetilde R_m(y)< \widetilde R_\ell(y)\ \text{for all }\ell<m
    \right\}.
\]
On \(\mathcal C_m\), the selected parameter is constant, namely \(\hat\theta^y=\theta_m\). The threshold \(\lambda_m(y)=\hat\lambda_{\theta_m}^y\), however, can still vary with \(y\) through the clipped score \(v_m(y)\).

For a continuous label space, exact computation can be organized as a finite
cell-decomposition problem, provided that the score functions and the inner
value functions admit finite active-formula descriptions. The procedure is as
follows.

\begin{enumerate}
    \item \textit{Construct a common cell decomposition of the label space.}
    For each \(m\in[M]\), the fixed-\(\theta_m\) augmented threshold
    \(\lambda_m(y)=\min\{b_m,\max\{v_m(y),a_m\}\}\) changes formula only when
    \(v_m(y)\) crosses \(a_m\) or \(b_m\). We therefore first split \(\gY\) into
    the three regimes
    \[
        \begin{aligned}
            &\mathcal R_m^-=\{y:v_m(y)\le a_m\},\quad
            \mathcal R_m^0=\{y:a_m<v_m(y)<b_m\},\\
            &\mathcal R_m^+=\{y:v_m(y)\ge b_m,\ v_m(y)>a_m\}.
        \end{aligned}
    \]
    On these regimes,
    \[
        \lambda_m(y)=
        \begin{cases}
        a_m, & y\in\mathcal R_m^-,\\
        v_m(y), & y\in\mathcal R_m^0,\\
        b_m, & y\in\mathcal R_m^+.
        \end{cases}
    \]
    Taking the common refinement of these regimes over \(m=1,\ldots,M\) gives
    regions on which all augmented thresholds have fixed formulas. We then
    further refine these regions, if necessary, along the boundaries where the
    inner value functions \(\Psi_{i,m}(\lambda) = \min_{u\in D_{\theta_m}(X_i;\lambda)}\phi(u)\)   change their active form. Denote the resulting finite collection of refined
    cells by \(\mathfrak{H}\). Thus, on each \(H\in\mathfrak{H}\), every threshold
    \(\lambda_m(y)\) and every objective term
    \(\Psi_{i,m}(\lambda_m(y))\) has a fixed explicit formula. We denote these
    formulas by
    \[
        \lambda_{m,H}(y)
        \quad\text{and}\quad
        \widetilde R_{m,H}(y)
        =
        \frac1{n+1}
        \sum_{i=1}^{n+1}
        \Psi_{i,m}(\lambda_{m,H}(y)).
    \]

    \item \textit{Assign each cell to the selected parameter.}
    On a fixed cell \(H\in\mathfrak{H}\), the augmented ERM reduces to comparing
    the finite list of explicit objective formulas
    \(\widetilde R_{1,H}(y),\ldots,\widetilde R_{M,H}(y)\). With the
    minimum-index tie-breaking rule, the subset of \(H\) assigned to
    \(\theta_m\) is
    \[
        H^{(m)}
        =
        H
        \cap
        \bigcap_{\ell>m}
        \{y:\widetilde R_{m,H}(y)\le \widetilde R_{\ell,H}(y)\}
        \cap
        \bigcap_{\ell<m}
        \{y:\widetilde R_{m,H}(y)< \widetilde R_{\ell,H}(y)\}.
    \]
    The selected-parameter cell for \(\theta_m\) is therefore \(\mathcal C_m = \bigcup_{H\in\mathfrak{H}} H^{(m)}\). Hence \(\hat\theta^y=\theta_m\) for all \(y\in\mathcal C_m\), while the corresponding threshold \(\lambda_m(y)\) may still vary over
    \(\mathcal C_m\).

    \item \textit{Compute the largest threshold needed for each selected parameter.}
    Since \(D_{\theta_m}(x;\lambda)\) is nested decreasing in \(\lambda\), the
    intersection over all labels assigned to \(\theta_m\) is determined by the
    largest threshold appearing on \(\mathcal C_m\). Define
    \[
        \Lambda_m
        =
        \sup_{y\in\mathcal C_m}\lambda_m(y)
        =
        \sup_{H\in\mathfrak{H}:\,H^{(m)}\ne\emptyset}
        \sup_{y\in H^{(m)}}\lambda_{m,H}(y).
    \]
    The exact optimized DISC subset is then
    \[
        \widehat D^{\texttt{FO-DISC}}(X_{n+1})
        =
        \bigcap_{m:\mathcal C_m\ne\emptyset}
        D_{\theta_m}
        \left(
        X_{n+1};
        \Lambda_m
        \right),
    \]
    whenever the supremum is attained, or more generally when the nested family
    is closed under these suprema. Otherwise, the displayed subset is a
    conservative replacement.
\end{enumerate}

A simpler conservative implementation avoids the explicit construction of the selected cells. Since \(\lambda_m(y)\le b_m=c_{(k),m}\) whenever \(k\le n\), one may upper bound the cellwise supremum by \(b_m\) and output
\[
    \widehat D^{\rm cons}(X_{n+1})
    =
    \bigcap_{m=1}^M
    D_{\theta_m}(X_{n+1};b_m).
\]
This subset may be smaller than the exact optimized subset, but it is easy to compute: it only requires sorting the fixed-\(\theta_m\) calibration scores once for each \(m\), followed by a finite intersection over the candidate parameters.

\section{Preliminary lemmas and their proofs}


\begin{lemma}[Perturbation of critical thresholds]\label{lem:score_perturbation_selected}
For \(\theta\in\Theta\), define
\[
    g_\theta(\lambda;z)
    :=
    \sup_{u\in D_\theta(x;\lambda)} f(u;x,y),
    \qquad
    V_\theta(z)
    :=
    \inf\{\lambda\in\sR:g_\theta(\lambda;z)\le 0\},
    \qquad z=(x,y).
\]
Let \(\theta,\theta'\in\Theta\). Suppose that, on the localized interval
\([-\Lambda-t_0,\Lambda+t_0]\), there exists \(\delta\ge0\) such that
\[
    \sup_{\lambda\in[-\Lambda-t_0,\Lambda+t_0]}
    |g_\theta(\lambda;z)-g_{\theta'}(\lambda;z)|
    \le \delta .
\]
Assume further that both threshold maps cross zero with margin \(m_g>0\), in the sense that for every \(\vartheta\in\{\theta,\theta'\}\) and every \(s\ge0\) with
\(V_\vartheta(z)+s\in[-\Lambda-t_0,\Lambda+t_0]\),
\[
    g_\vartheta(V_\vartheta(z)+s;z)\le -m_gs .
\]
If \(|V_\theta(z)|\vee |V_{\theta'}(z)|\le \Lambda\) and \(\delta/m_g\le t_0\), then
\[
    |V_\theta(z)-V_{\theta'}(z)|
    \le
    \frac{\delta}{m_g}.
\]
\end{lemma}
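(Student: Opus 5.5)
By symmetry it suffices to show $V_{\theta'}(z)\le V_\theta(z)+\delta/m_g$; exchanging the roles of $\theta$ and $\theta'$ then gives the absolute-value bound. The plan is to evaluate $g_{\theta'}$ at the shifted level $\lambda^\star:=V_\theta(z)+\delta/m_g$. We then use the margin condition for $\theta$ together with the uniform $\delta$-closeness to certify $g_{\theta'}(\lambda^\star;z)\le0$. The definition of $V_{\theta'}$ as an infimum then immediately yields $V_{\theta'}(z)\le\lambda^\star$.

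Concretely, set $s=\delta/m_g\ge0$. Since $|V_\theta(z)|\le\Lambda$ and $s\le t_0$, we have $\lambda^\star\in[-\Lambda-t_0,\Lambda+t_0]$. This membership is exactly what allows both the margin hypothesis and the uniform perturbation bound to be applied at $\lambda^\star$. The margin condition for $\vartheta=\theta$ gives $g_\theta(\lambda^\star;z)\le -m_g s=-\delta$. The closeness hypothesis then gives
\[
g_{\theta'}(\lambda^\star;z)\le g_\theta(\lambda^\star;z)+\delta\le 0 .
\]
Hence $\lambda^\star$ belongs to the set $\{\lambda:g_{\theta'}(\lambda;z)\le0\}$, so $V_{\theta'}(z)\le V_\theta(z)+\delta/m_g$. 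Swapping $\theta$ and $\theta'$ uses the margin condition for $\vartheta=\theta'$ and the bound $|V_{\theta'}(z)|\le\Lambda$, and gives the reverse inequality.

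The only delicate point is the localization: every evaluation must take place inside $[-\Lambda-t_0,\Lambda+t_0]$. This is precisely why the statement assumes $|V_\theta|\vee|V_{\theta'}|\le\Lambda$ and $\delta/m_g\le t_0$. I expect no real obstacle beyond checking this membership. The argument needs neither monotonicity of $g$ nor attainment of the infimum, because the bound only uses that the infimum is at most any feasible point. The case $\delta=0$ is trivial and is covered by the same argument.
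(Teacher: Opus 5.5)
Your proposal is correct and follows essentially the same argument as the paper. Both evaluate $g_{\theta'}$ at $V_\theta(z)+\delta/m_g$ after checking that this point lies in the localized interval. Both then combine the margin condition with $\delta$-closeness to get a nonpositive value, conclude via the infimum definition, and finish by swapping $\theta$ and $\theta'$.
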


\begin{proof}
Set \(\Delta=\delta/m_g\). We first show that
\(V_{\theta'}(z)\le V_\theta(z)+\Delta\). Since \(|V_\theta(z)|\le\Lambda\)
and \(\Delta\le t_0\), the point \(V_\theta(z)+\Delta\) lies in the localized
interval. By the crossing-margin condition,
\(g_\theta(V_\theta(z)+\Delta;z)\le -m_g\Delta=-\delta\). The perturbation
bound then gives
\(g_{\theta'}(V_\theta(z)+\Delta;z)\le g_\theta(V_\theta(z)+\Delta;z)+\delta
\le0\). Hence, by the definition of \(V_{\theta'}(z)\),
\(V_{\theta'}(z)\le V_\theta(z)+\Delta\). Interchanging the roles of
\(\theta\) and \(\theta'\) gives
\(V_\theta(z)\le V_{\theta'}(z)+\Delta\). Therefore
\(|V_\theta(z)-V_{\theta'}(z)|\le \Delta=\delta/m_g\).
\end{proof}

\begin{lemma}[Perturbation of sample quantiles]\label{lem:quantile_perturbation_selected}
Let \(1\le \lceil (n+1)(1-\alpha)\rceil\le n\). For two vectors \(a=(a_1,\ldots,a_n)\) and \(b=(b_1,\ldots,b_n)\), define
\[
    Q(a)=\mathsf{Q}_{1-\alpha}\left(\{a_1,\ldots,a_n\}, \infty\right),
    \qquad
    Q(b)=\mathsf{Q}_{1-\alpha}\left(\{b_1,\ldots,b_n\}, \infty\right).
\]
If \(\max_{j\in[n]}|a_j-b_j|\le \delta\), then
\[
    |Q(a)-Q(b)|\le \delta .
\]
\end{lemma}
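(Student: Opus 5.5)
The plan is to write the empirical quantile as an order statistic and then use the monotonicity of order statistics under coordinatewise perturbation. With $k=\lceil (n+1)(1-\alpha)\rceil\le n$, the quantile $\mathsf{Q}_{1-\alpha}(\{a_j\}_{j=1}^n,\infty)$ is the $k$th smallest element of the augmented multiset $\{a_1,\ldots,a_n,\infty\}$. Because $k\le n$, the appended $\infty$ is never selected, so $Q(a)=a_{(k)}$, the $k$th order statistic of $a_1,\ldots,a_n$. The same identity gives $Q(b)=b_{(k)}$. The assumption $k\le n$ is exactly what keeps both quantiles finite and makes the difference meaningful.

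The main step is to show $|a_{(k)}-b_{(k)}|\le\delta$. The hypothesis $\max_j|a_j-b_j|\le\delta$ gives $b_j\le a_j+\delta$ for every $j$. The order-statistic map is coordinatewise nondecreasing and commutes with adding a constant to all coordinates. Hence
\[
b_{(k)}\le (a+\delta\1)_{(k)}=a_{(k)}+\delta.
\]
To see the monotonicity concretely, consider the $k$ indices $j$ with $a_j\le a_{(k)}$. Each of these satisfies $b_j\le a_{(k)}+\delta$. So at least $k$ of the $b_j$ are at most $a_{(k)}+\delta$, which forces $b_{(k)}\le a_{(k)}+\delta$. Swapping the roles of $a$ and $b$ gives $a_{(k)}\le b_{(k)}+\delta$, and combining the two inequalities yields the claim.

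No step here is a real obstacle. The only point needing care is the convention for $\mathsf{Q}_{1-\alpha}(\cdot,\infty)$, namely that it equals the $\lceil(n+1)(1-\alpha)\rceil$th smallest of the $n+1$ values, and confirming that the extra $\infty$ plays no role when $k\le n$. Ties among the $a_j$ cause no difficulty, because the counting argument only uses inequalities.
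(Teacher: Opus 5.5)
Your proposal is correct and follows essentially the same route as the paper: use $k\le n$ to reduce both quantiles to the order statistics $a_{(k)}$ and $b_{(k)}$, count that at least $k$ of the $b_j$ are at most $a_{(k)}+\delta$, and conclude by symmetry. One small wording point: with ties there may be more than $k$ indices with $a_j\le a_{(k)}$, so ``at least $k$ indices'' is the precise phrasing, and your counting argument already handles this.
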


\begin{proof}
Let \(k=\lceil (n+1)(1-\alpha)\rceil\). Since \(1\le k\le n\), the additional
element \(\infty\) does not affect the finite \(k\)-th order statistic, so
\(Q(a)=a_{(k)}\) and \(Q(b)=b_{(k)}\). The assumption
\(\max_j|a_j-b_j|\le\delta\) implies \(b_j-\delta\le a_j\le b_j+\delta\) for
all \(j\). Hence, for every \(t\in\sR\), if \(a_j\le t\), then
\(b_j\le t+\delta\). Taking \(t=Q(a)\), at least \(k\) components of \(b\) are
no larger than \(Q(a)+\delta\), and thus \(Q(b)\le Q(a)+\delta\). By symmetry,
\(Q(a)\le Q(b)+\delta\). Therefore \(|Q(a)-Q(b)|\le\delta\).
\end{proof}

\begin{lemma}[Replace-one stability]\label{prop:sufficient_stability_bound}
Suppose Assumption \ref{ass:primitive_stability_selected_rule} holds and let
\(k=\lceil(n+1)(1-\alpha)\rceil\le n\). Let
\(S=(Z_1,\ldots,Z_n)\), where \(Z_i=(X_i,Y_i)\), and let
\(S^{(i)}=(Z_1,\ldots,Z_{i-1},Z_i',Z_{i+1},\ldots,Z_n)\), where \(Z_i'\) is an independent copy of \(Z_i\). Define
\[
    \hat\lambda_{S}
    =
    \mathsf{Q}_{1-\alpha}
    \left(
    \{V_{\widehat\theta_S}(Z_j)\}_{j=1}^n,\infty
    \right),
\]
and
\[
    \hat\lambda_{S,i}'
    =
    \mathsf{Q}_{1-\alpha}
    \left(
    \{V_{\widehat\theta_{S^{(i)}}}(Z_j)\}_{j\neq i}
    \cup
    \{V_{\widehat\theta_S}(Z_i),\infty\}
    \right).
\]
Then
\[
    \frac1n\sum_{i=1}^n
    \E_{S,Z_i'}
    \left|
    \mathbbm{1}\left\{
    V_{\widehat\theta_S}(Z_i')\le \hat\lambda_S
    \right\}
    -
    \mathbbm{1}\left\{
    V_{\widehat\theta_{S^{(i)}}}(Z_i')\le \hat\lambda_{S,i}'
    \right\}
    \right|
    \le
    4c_{\mathrm{dens}}\varepsilon_{\rm stab} .
\]
\end{lemma}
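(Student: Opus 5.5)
The plan is to reduce the indicator discrepancy to a small-ball event for a single score evaluated at the fresh point \(Z_i'\), and then bound that event with a density bound. I read Assumption~\ref{ass:primitive_stability_selected_rule} as supplying two primitive ingredients, and I would state them explicitly at the start of the proof. The first is a replace-one score stability property: the two selected rules satisfy \(\sup_{z}|V_{\widehat\theta_S}(z)-V_{\widehat\theta_{S^{(i)}}}(z)|\le\varepsilon_{\rm stab}\), either almost surely or on an event whose complement is absorbed into \(\varepsilon_{\rm stab}\). The second is a density bound: for every fixed \(\theta\), the law of \(V_\theta(Z)\) has density at most \(c_{\mathrm{dens}}\). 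If the assumption is instead phrased through the perturbation \(\sup_\lambda|g_\theta-g_{\theta'}|\) together with a crossing margin, I would first apply Lemma~\ref{lem:score_perturbation_selected} to convert it into the uniform score bound above.

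\emph{Step 1 (threshold perturbation).} Compare \(\hat\lambda_S\) with \(\hat\lambda'_{S,i}\). Both are the \((1-\alpha)\)-quantile of an \(n\)-vector augmented with \(\infty\).
\begin{itemize}
\item The \(i\)th entries coincide, since both equal \(V_{\widehat\theta_S}(Z_i)\).
\item For \(j\ne i\), the entries are \(V_{\widehat\theta_S}(Z_j)\) and \(V_{\widehat\theta_{S^{(i)}}}(Z_j)\), which differ by at most \(\varepsilon_{\rm stab}\) by the stability property.
\end{itemize}
Since \(k\le n\), Lemma~\ref{lem:quantile_perturbation_selected} then gives \(|\hat\lambda_S-\hat\lambda'_{S,i}|\le\varepsilon_{\rm stab}\).

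\emph{Step 2 (reduction to a margin event).} Write \(W=V_{\widehat\theta_S}(Z_i')\) and \(W'=V_{\widehat\theta_{S^{(i)}}}(Z_i')\). The stability property gives \(|W-W'|\le\varepsilon_{\rm stab}\). If the two indicators \(\mathbbm 1\{W\le\hat\lambda_S\}\) and \(\mathbbm 1\{W'\le\hat\lambda'_{S,i}\}\) differ, then \(W-\hat\lambda_S\) and \(W'-\hat\lambda'_{S,i}\) lie on opposite sides of zero. These two differences are within \(2\varepsilon_{\rm stab}\) of each other, so the indicator difference is bounded by \(\mathbbm 1\{|W-\hat\lambda_S|\le2\varepsilon_{\rm stab}\}\).

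\emph{Step 3 (density bound).} Condition on \(S\). Then \(\widehat\theta_S\) and \(\hat\lambda_S\) are fixed, while \(Z_i'\) is an independent copy of \(Z_i\). Hence
\[
\sP\bigl\{|V_{\widehat\theta_S}(Z_i')-\hat\lambda_S|\le 2\varepsilon_{\rm stab}\,\big|\,S\bigr\}\le 4c_{\mathrm{dens}}\varepsilon_{\rm stab},
\]
because the window has length \(4\varepsilon_{\rm stab}\) and the density is at most \(c_{\mathrm{dens}}\). Taking expectations over \(S\) and averaging over \(i\in[n]\) yields the claimed bound \(4c_{\mathrm{dens}}\varepsilon_{\rm stab}\).

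The main obstacle is Step 3. It is essential that we evaluate at the fresh point \(Z_i'\), which is independent of the data-dependent pair \((\widehat\theta_S,\hat\lambda_S)\); this independence is what lets a density bound for a \emph{fixed} \(\theta\) apply to the selected \(\widehat\theta_S\). If the assumption only provides \(\E\sup_z|V_{\widehat\theta_S}(z)-V_{\widehat\theta_{S^{(i)}}}(z)|\le\varepsilon_{\rm stab}\) rather than an almost-sure bound, I would keep the random perturbation size \(\delta_i\) inside the conditional probability. That gives the bound \(4c_{\mathrm{dens}}\delta_i\), and taking expectations of \(\delta_i\) afterward recovers the same constant.
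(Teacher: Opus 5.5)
Your proof is the paper's argument step for step: Lemma~\ref{lem:quantile_perturbation_selected} gives $|\hat\lambda_S-\hat\lambda'_{S,i}|\le\varepsilon_{\rm stab}$, the sign-change argument bounds the indicator gap by $\mathbbm{1}\{|V_{\widehat\theta_S}(Z_i')-\hat\lambda_S|\le 2\varepsilon_{\rm stab}\}$, and conditioning on $S$ with the density bound gives $4c_{\mathrm{dens}}\varepsilon_{\rm stab}$. Your global density reading is implied by parts (i) and (iii) of Assumption~\ref{ass:primitive_stability_selected_rule}, since the score lies in $[-\Lambda,\Lambda]$ almost surely, whereas the paper instead uses $\hat\lambda_S\in[-\Lambda,\Lambda]$ and $t_0\ge 2\varepsilon_{\rm stab}$ to keep the window inside $[-\Lambda-t_0,\Lambda+t_0]$; separately, your closing fallback for an in-expectation stability bound would not work as stated, because $\delta_i$ depends on $Z_i'$ through $\widehat\theta_{S^{(i)}}$ and is therefore not fixed when you condition on $S$ (a Markov-type split gives only a square-root rate), but the assumption is almost sure, so the main argument does not need it.
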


\begin{proof}
Fix \(i\in[n]\). By Assumption~\ref{ass:primitive_stability_selected_rule}(ii),
\(\sup_z|V_{\widehat\theta_S}(z)-V_{\widehat\theta_{S^{(i)}}}(z)|
\le \varepsilon_{\rm stab}\). Applying Lemma~\ref{lem:quantile_perturbation_selected} to the
two quantile vectors, whose entries differ by at most \(\varepsilon_{\rm stab}\) except for
the common finite entry \(V_{\widehat\theta_S}(Z_i)\), gives
\(|\hat\lambda_S-\hat\lambda_{S,i}'|\le\varepsilon_{\rm stab}\).

Write \(a=V_{\widehat\theta_S}(Z_i')\), \(a'=V_{\widehat\theta_{S^{(i)}}}(Z_i')\),
\(b=\hat\lambda_S\), and \(b'=\hat\lambda_{S,i}'\). If
\(\mathbbm{1}\{a\le b\}\neq\mathbbm{1}\{a'\le b'\}\), then \(0\) lies between
\(a-b\) and \(a'-b'\). Consequently, \(|a-b|\le |(a-b)-(a'-b')|\le |a-a'|+|b-b'|\le 2\varepsilon_{\rm stab}\).
It follows that the absolute difference between the two indicators is bounded
by \(\mathbbm{1}\{|V_{\widehat\theta_S}(Z_i')-\hat\lambda_S|
\le 2\varepsilon_{\rm stab}\}\). Conditioning on \(S\), Assumption~\ref{ass:primitive_stability_selected_rule}(i)
implies \(\hat\lambda_S\in[-\Lambda,\Lambda]\), and \(t_0\ge2\varepsilon_{\rm stab}\)
ensures that the interval
\([\hat\lambda_S-2\varepsilon_{\rm stab},\hat\lambda_S+2\varepsilon_{\rm stab}]\) is contained in
\([-\Lambda-t_0,\Lambda+t_0]\). Hence, by the density upper bound in
Assumption~\ref{ass:primitive_stability_selected_rule}(iii),
\[
    \E_{Z_i'}\!\left[
    \left|
    \mathbbm{1}\{V_{\widehat\theta_S}(Z_i')\le\hat\lambda_S\}
    -
    \mathbbm{1}\{V_{\widehat\theta_{S^{(i)}}}(Z_i')\le\hat\lambda_{S,i}'\}
    \right|
    \,\middle|\,S
    \right]
    \le 4c_{\mathrm{dens}}\varepsilon_{\rm stab} .
\]
Averaging over \(S\) and over \(i\in[n]\) proves the claim.
\end{proof}

\begin{lemma}[Uniform error bound for conformal critical quantiles]\label{lem:lambda_error_bound}
Suppose Assumption \ref{ass:threshold_complexity_inclusion} and
Assumption \ref{ass:uniform_quantile_regularity}(i) hold, and let
\(k_n=\lceil (n+1)(1-\alpha)\rceil\le n\). Define
\[
    \Delta_n(\eta)
    :=
    \mathfrak C
    \sqrt{
    \frac{d_{\gH}\log(en/d_{\gH})+\log(4/\eta)}{n}
    }
    +
    \frac{2}{n}.
\]
Then, with probability at least \(1-\eta/2\), if
\(\Delta_n(\eta)\le c_q\epsilon_q\), we have
\[
    \sup_{\theta\in\Theta}
    |\hat\lambda_\theta-\lambda_\theta^*|
    \le
    \frac{\Delta_n(\eta)}{c_q}.
\]
\end{lemma}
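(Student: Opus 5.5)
The plan is to combine a uniform VC bound for the empirical CDFs of the scores with the local growth condition in Assumption~\ref{ass:uniform_quantile_regularity}(i).

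\emph{Uniform CDF deviation.} For each $\theta$ write $F_\theta(t)=\sP\{V_\theta(X,Y)\le t\}$ and $\widehat F_\theta(t)=n^{-1}\sum_{i=1}^n\mathbbm{1}\{V_\theta(X_i,Y_i)\le t\}$. Assumption~\ref{ass:threshold_complexity_inclusion} says the indicator class $\gH$ has VC dimension $d_{\gH}$. The standard VC uniform deviation inequality, applied at confidence $\eta/2$, gives an event $\mathcal E$ with $\sP(\mathcal E)\ge1-\eta/2$. On $\mathcal E$,
\[
\sup_{\theta\in\Theta,\,t\in\sR}|\widehat F_\theta(t)-F_\theta(t)|\le \mathfrak C\sqrt{\{d_{\gH}\log(en/d_{\gH})+\log(4/\eta)\}/n}=:\varepsilon_n .
\]
This event does not depend on $\theta$, so it controls every family simultaneously. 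That uniformity is exactly what is needed later, because $\hat\theta$ is data-dependent.

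\emph{Quantile characterization.} Let $k_n=\lceil(n+1)(1-\alpha)\rceil\le n$. Then $\hat\lambda_\theta=V_{\theta,(k_n)}$, the $k_n$-th order statistic, and it is characterized by two facts:
\[
\widehat F_\theta(\hat\lambda_\theta)\ge k_n/n\ge 1-\alpha, \qquad \widehat F_\theta(t)\le (k_n-1)/n \text{ for all } t<\hat\lambda_\theta .
\]
Since $(k_n-1)/n< (n+1)(1-\alpha)/n\le 1-\alpha+1/n$, the second fact gives $\widehat F_\theta(t)<1-\alpha+2/n$ for $t<\hat\lambda_\theta$. The $2/n$ term is a slack that absorbs this discretization.

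\emph{Inverting the CDF.} Put $t=\Delta_n(\eta)/c_q=(\varepsilon_n+2/n)/c_q$. The hypothesis $\Delta_n(\eta)\le c_q\epsilon_q$ gives $t\le\epsilon_q$, so Assumption~\ref{ass:uniform_quantile_regularity}(i) applies at this $t$.

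\emph{Upper bound.} On $\mathcal E$,
\[
\widehat F_\theta(\lambda_\theta^*+t)\ge F_\theta(\lambda_\theta^*+t)-\varepsilon_n\ge 1-\alpha+c_qt-\varepsilon_n=1-\alpha+2/n .
\]
By the second characterization fact, this forces $\lambda_\theta^*+t\ge\hat\lambda_\theta$. The main care point is the boundary: one needs $\widehat F_\theta(\lambda_\theta^*+t)\ge 1-\alpha+2/n$ to exclude $\lambda_\theta^*+t<\hat\lambda_\theta$, and this uses the strict inequality $(k_n-1)/n<1-\alpha+2/n$.

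\emph{Lower bound.} On $\mathcal E$,
\[
\widehat F_\theta(\lambda_\theta^*-t)\le 1-\alpha-c_qt+\varepsilon_n=1-\alpha-2/n<k_n/n .
\]
Because $\widehat F_\theta(\hat\lambda_\theta)\ge k_n/n$ and $\widehat F_\theta$ is monotone, $\hat\lambda_\theta>\lambda_\theta^*-t$.

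\emph{Conclusion.} Combining the two bounds, $|\hat\lambda_\theta-\lambda_\theta^*|\le t=\Delta_n(\eta)/c_q$ for every $\theta$ simultaneously on $\mathcal E$. Taking the supremum over $\Theta$ gives the stated bound with probability at least $1-\eta/2$.

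The only nonroutine step is matching the universal constant $\mathfrak C$ in the VC inequality to the form $\log(4/\eta)$. I would simply cite the standard bound at confidence level $\eta/2$, which produces $\log(2/(\eta/2))=\log(4/\eta)$, consistent with Theorem~\ref{thm:direct_concentration_inclusion}.
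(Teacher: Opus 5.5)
Your proposal is correct and matches the paper's proof essentially step for step. Both arguments use the same VC event at confidence $\eta/2$, uniform over $(\theta,t)$; the same choice $\Delta_n(\eta)/c_q\le\epsilon_q$; and the same two-sided inversion of Assumption~\ref{ass:uniform_quantile_regularity}(i), with the $2/n$ slack absorbing the rank correction. The only difference is cosmetic: for the upper bound you use $\widehat F_\theta(t)\le (k_n-1)/n$ for $t<\hat\lambda_\theta$, whereas the paper uses $\hat\lambda_\theta=\inf\{t:\widehat F_\theta(t)\ge k_n/n\}$ together with $k_n/n\le 1-\alpha+2/n$, and the two are equivalent.
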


\begin{proof}
For \(\theta\in\Theta\), write
\(F_\theta(t)=\Prob(V_\theta(X,Y)\le t)\) and
\(\widehat F_{n,\theta}(t)=n^{-1}\sum_{i=1}^n
\mathbbm{1}\{V_\theta(X_i,Y_i)\le t\}\). By the VC inequality applied to
\(\gH\), with probability at least \(1-\eta/2\),
\[
    \sup_{\theta\in\Theta,\ t\in\sR}
    |\widehat F_{n,\theta}(t)-F_\theta(t)|
    \le
    \varepsilon_{\gH, n}(\eta)
    :=
    \mathfrak C
    \sqrt{
    \frac{d_{\gH}\log(en/d_{\gH})+\log(4/\eta)}{n}
    } .
\]
Thus \(\Delta_n(\eta)=\varepsilon_{\gH, n}(\eta)+2/n\). The term \(2/n\) accounts for the
rank correction of the conformal quantile: with
\(k_n=\lceil (n+1)(1-\alpha)\rceil\), one has
\(\hat\lambda_\theta=\inf\{t:\widehat F_{n,\theta}(t)\ge k_n/n\}\) and
\(0\le k_n/n-(1-\alpha)\le2/n\). Let
\(\delta_n=\Delta_n(\eta)/c_q\), so that \(\delta_n\le\epsilon_q\).

For the upper bound, Assumption~\ref{ass:uniform_quantile_regularity}(i) gives
\(F_\theta(\lambda_\theta^*+\delta_n)\ge
1-\alpha+c_q\delta_n=1-\alpha+\varepsilon_{\gH, n}(\eta)+2/n\). On the VC event,
\(\widehat F_{n,\theta}(\lambda_\theta^*+\delta_n)\ge
1-\alpha+2/n\ge k_n/n\), and therefore
\(\hat\lambda_\theta\le\lambda_\theta^*+\delta_n\). For the lower bound,
the same regularity assumption gives
\(F_\theta(\lambda_\theta^*-\delta_n)\le
1-\alpha-c_q\delta_n=1-\alpha-\varepsilon_{\gH, n}(\eta)-2/n\). On the VC event,
\(\widehat F_{n,\theta}(\lambda_\theta^*-\delta_n)\le
1-\alpha-2/n<k_n/n\), and hence
\(\hat\lambda_\theta>\lambda_\theta^*-\delta_n\). Combining the two bounds
yields
\(|\hat\lambda_\theta-\lambda_\theta^*|\le\delta_n
=\Delta_n(\eta)/c_q\). Since the argument is uniform over \(\theta\), the proof
is complete.
\end{proof}

\section{Additional Theoretical Results}

\subsection{Stability analysis of inclusion gap}\label{appen:stability_bound}

The next result provides a sharper bound under a replace-one stability condition. For each $i\in[n]$, let $\hat\theta^{(i)}$ be the empirical minimizer of \eqref{eq:empirical_model_selection} computed after replacing $(X_i,Y_i)$ by an independent copy $(X_i',Y_i')$.

\begin{assumption}\label{ass:primitive_stability_selected_rule}
Assume there exist constants $\Lambda>0$, $c_{\mathrm{dens}}>0$, $t_0>0$, and a nonnegative number $\varepsilon_{\rm stab}$ such that: (i) for any $\theta\in\Theta$, $|V_\theta(X,Y)|\le \Lambda$ almost surely; (ii) the selected critical-threshold mapping is replace-one stable, in the sense that $\max_{i\in[n]}\sup_{(x,y)}|V_{\hat\theta}(x,y)-V_{\hat\theta^{(i)}}(x,y)|\le \varepsilon_{\rm stab}$ almost surely; and (iii) for any $\theta\in\Theta$, the distribution of $V_\theta(X,Y)$ admits a density bounded by $c_{\mathrm{dens}}$ on $[-\Lambda-t_0,\Lambda+t_0]$, where $t_0\ge 2\varepsilon_{\rm stab}$.
\end{assumption}

Assumption \ref{ass:primitive_stability_selected_rule} separates the two quantities needed for the refinement. The boundedness of the \texttt{DISC} score in $(i)$ is mild. The stability radius $\varepsilon_{\rm stab}$ in $(ii)$ measures the data-adaptivity cost of selecting $\hat\theta$ from the calibration data. The density upper bound is a local anti-concentration condition, ensuring that a perturbation of the critical score by at most $\varepsilon_{\rm stab}$ changes the corresponding probability mass by order $\varepsilon_{\rm stab}$. For stable implementations of \eqref{eq:empirical_model_selection} that have uniform stability \citep{feldman2018generalization}, $\varepsilon_{\rm stab}$ is of order $1/n$, and the next theorem leads to a sharper inclusion gap than that in Theorem~\ref{thm:direct_concentration_inclusion}.

\begin{theorem}\label{thm:stability_inclusion_coverage}
Suppose Assumption \ref{ass:primitive_stability_selected_rule} holds. Then $\Prob\!\left\{\widehat D^{\texttt{O-DISC}}(X_{n+1})\subseteq A(X_{n+1},Y_{n+1})\right\} \ge 1-\alpha-\frac1n-4c_{\mathrm{dens}}\cdot\varepsilon_{\rm stab}$.
\end{theorem}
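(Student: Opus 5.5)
The plan is a replace-one (leave-one-out swap) argument that compares the O-DISC event with an exchangeable surrogate and then charges the discrepancy to Lemma~\ref{prop:sufficient_stability_bound}.

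\textbf{Step 1: reduce to a score event.} By the attainment assumption on \eqref{eq:opt_disc_score}, inclusion holds exactly when the score is below the threshold:
\[
\bigl\{\widehat D^{\texttt{O-DISC}}(X_{n+1})\subseteq A(X_{n+1},Y_{n+1})\bigr\}=\bigl\{V_{\hat\theta}(Z_{n+1})\le\hat\lambda_S\bigr\},
\]
where $Z_{n+1}=(X_{n+1},Y_{n+1})$ is independent of $S=(Z_1,\dots,Z_n)$. So it suffices to lower bound $\Prob\{V_{\widehat\theta_S}(Z_{n+1})\le\hat\lambda_S\}$.

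\textbf{Step 2: in-sample quantity.} Let $k=\lceil(n+1)(1-\alpha)\rceil$. If $k>n$, then $\hat\lambda_S=\infty$ and the claim is trivial, so assume $k\le n$. Then $\hat\lambda_S$ is the $k$-th order statistic of $\{V_{\widehat\theta_S}(Z_j)\}_{j=1}^n$. Deterministically,
\[
\frac1n\sum_{i=1}^n\mathbbm 1\{V_{\widehat\theta_S}(Z_i)\le\hat\lambda_S\}\ge \frac{k}{n}\ge 1-\alpha.
\]
The quantity $\hat\lambda_{S,i}'$ in Lemma~\ref{prop:sufficient_stability_bound} is the same $k$-th order statistic computed after replacing the $i$-th entry, with $V_{\widehat\theta_S}(Z_i)$ kept in the pool. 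The extra $\infty$ is harmless because $k\le n$.

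\textbf{Step 3: swap symmetry.} Since $Z_i$ and $Z_i'$ are i.i.d., exchanging them maps $S$ to $S^{(i)}$ and leaves the joint law invariant. Under this swap, the term $\mathbbm 1\{V_{\widehat\theta_S}(Z_i)\le\hat\lambda_S\}$ corresponds in distribution to
\[
\mathbbm 1\{V_{\widehat\theta_{S^{(i)}}}(Z_i')\le \tilde\lambda_i\},
\]
where $\tilde\lambda_i$ is the $k$-th order statistic of $\{V_{\widehat\theta_{S^{(i)}}}(Z_j)\}_{j\ne i}\cup\{V_{\widehat\theta_{S^{(i)}}}(Z_i')\}$.

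I need to match $\tilde\lambda_i$ with $\hat\lambda'_{S,i}$. The two pools differ only in one entry: $V_{\widehat\theta_{S^{(i)}}}(Z_i')$ versus $V_{\widehat\theta_S}(Z_i)$. For a single entry, the $k$-th order statistic computed with the test point included satisfies the standard conformal comparison: replacing the test point's own entry changes whether the test point is $\le$ the quantile by at most the one-slot rank correction. This step is where the $1/n$ term is paid.

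Concretely, I use the deterministic fact that $\mathbbm 1\{a\le Q(\text{pool with }a)\}\ge \mathbbm 1\{a\le Q^{(k-1)}(\text{pool without }a)\}$. Comparing the $k$ versus $k-1$ counts averaged over $i$ costs at most $1/n$ in expectation.

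\textbf{Step 4: combine.} Since $Z_i'$ and $Z_{n+1}$ are both independent of $S$ and identically distributed,
\[
\E\bigl[\mathbbm 1\{V_{\widehat\theta_S}(Z_i')\le\hat\lambda_S\}\bigr]=\Prob\{V_{\widehat\theta_S}(Z_{n+1})\le\hat\lambda_S\}.
\]
Averaging over $i$, Lemma~\ref{prop:sufficient_stability_bound} bounds the difference between this out-of-sample indicator and its swapped counterpart by $4c_{\mathrm{dens}}\varepsilon_{\rm stab}$. Putting the pieces together gives
\[
\Prob\{V_{\widehat\theta_S}(Z_{n+1})\le\hat\lambda_S\}\ge \frac{k}{n}-\frac1n-4c_{\mathrm{dens}}\varepsilon_{\rm stab}\ge 1-\alpha-\frac1n-4c_{\mathrm{dens}}\varepsilon_{\rm stab}.
\]

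\textbf{Main obstacle.} The delicate step is Step 3: aligning the swapped in-sample threshold $\tilde\lambda_i$ with the $\hat\lambda'_{S,i}$ used in the lemma. Their pools differ by the $i$-th entry. I would handle this with the rank-count inequality: removing one entry from the pool and lowering the order index by one yields a smaller threshold, which is what restricts the total loss to $1/n$.
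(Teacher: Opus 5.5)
Your skeleton matches the paper's. You reduce inclusion to $\{V_{\widehat\theta_S}(Z_{n+1})\le\hat\lambda_S\}$, write its probability as $\frac1n\sum_i\E\,\mathbbm{1}\{V_{\widehat\theta_S}(Z_i')\le\hat\lambda_S\}$, use Lemma~\ref{prop:sufficient_stability_bound} to pass to $\mathbbm{1}\{a_i\le\hat\lambda_{S,i}'\}$ with $a_i=V_{\widehat\theta_{S^{(i)}}}(Z_i')$, and pay $1/n$ for a rank correction. The gap is in Step~3, which is exactly where that $1/n$ must be earned. You need $\frac1n\sum_i\E\,\mathbbm{1}\{a_i\le\hat\lambda_{S,i}'\}\ge (k-1)/n$, and your argument does not deliver it.

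Your displayed ``deterministic fact'' lower-bounds $\mathbbm{1}\{a\le\tilde\lambda_i\}$, the swapped in-sample indicator you already control, so it points the wrong way. More seriously, the claim that going from the $k$-count to the $(k-1)$-count ``costs at most $1/n$'' is false in general. After using $S^{(i)}\stackrel{d}{=}S$, that cost is $\frac1n\E\,\#\{i:\text{exactly }k-1\text{ of the scores }V_{\widehat\theta_S}(Z_j),\ j\ne i,\text{ lie strictly below }V_{\widehat\theta_S}(Z_i)\}$. This is the multiplicity of a tied value sitting at rank $k$, and it can exceed one. Nothing in Assumption~\ref{ass:primitive_stability_selected_rule} excludes such ties. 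Part (iii) gives a density for each fixed $\theta$, but the in-sample scores are evaluated at the data-dependent $\widehat\theta_S$; for continuous $\Theta$, minimizers of an empirical-quantile objective naturally sit at kinks where order statistics coincide. So ``in-sample coverage $\ge k/n$, minus $1/n$'' is not a valid chain. The final number $(k-1)/n$ is right only because ties also inflate the in-sample coverage, which your accounting does not track.

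The repair is to drop the comparison with the $k/n$ count and count directly. For $k\ge 2$ (the case $k=1$ is trivial), let $q^{(i)}_{(k-1)}$ be the $(k-1)$-th smallest element of $\{V_{\widehat\theta_{S^{(i)}}}(Z_j)\}_{j\ne i}\cup\{\infty\}$.

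\begin{itemize}
\item Adding one element to a multiset can lower its $k$-th order statistic at most to the old $(k-1)$-th one. Hence $\hat\lambda_{S,i}'\ge q^{(i)}_{(k-1)}$, and so $\mathbbm{1}\{a_i\le\hat\lambda_{S,i}'\}\ge\mathbbm{1}\{a_i\le q^{(i)}_{(k-1)}\}$.
\item The right-hand indicator depends on $S^{(i)}$ only. Its expectation therefore equals that of $\mathbbm{1}\{V_{\widehat\theta_S}(Z_i)\le p^{(i)}_{(k-1)}\}$, where $p^{(i)}_{(k-1)}$ is the $(k-1)$-th smallest element of $\{V_{\widehat\theta_S}(Z_j)\}_{j\ne i}\cup\{\infty\}$.
\item For every realization, the indices in the first $k-1$ positions of any sorted order of $\{V_{\widehat\theta_S}(Z_j)\}_{j=1}^n$ have at most $k-2$ other scores strictly below them. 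So at least $k-1$ of these indicators equal one, giving $(k-1)/n\ge 1-\alpha-1/n$ irrespective of ties.
\end{itemize}

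The paper does the same count in the other order. It shows deterministically that at least $k-1$ indices satisfy $V_{\widehat\theta_S}(Z_i)\le\hat\lambda_{S,i}$, the quantile with the $i$-th entry replaced by $V_{\widehat\theta_{S^{(i)}}}(Z_i')$. It then swaps $Z_i\leftrightarrow Z_i'$ to turn $\hat\lambda_{S,i}$ into $\hat\lambda_{S,i}'$.
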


For stable implementations of the parameter optimization step that have uniform stability \citep{feldman2018generalization}, $\varepsilon_{\rm stab}$ is of order $1/n$, leading to a sharper marginal inclusion gap than that in Theorem~\ref{thm:direct_concentration_inclusion}.

\begin{proof}
Let \(S=(Z_1,\ldots,Z_n)\), where \(Z_i=(X_i,Y_i)\), and write
\(\widehat D_S(x)=D_{\widehat\theta_S}(x;\hat\lambda_S)\), where
\(\hat\lambda_S\) is the conformal quantile of
\(\{V_{\widehat\theta_S}(Z_j)\}_{j=1}^n\) augmented with \(\infty\). Define
\(\ell_S(z)=\mathbbm{1}\{V_{\widehat\theta_S}(z)\le\hat\lambda_S\}\) and
\(L(S)=\Prob(\widehat D_S(X)\subseteq A(X,Y)\mid S)\) for an independent test
point \(Z=(X,Y)\). By attainability and the definition of the critical threshold,
\(\ell_S(z)=\mathbbm{1}\{\widehat D_S(x)\subseteq A(x,y)\}\), and therefore
\(L(S)=\E_Z[\ell_S(Z)]\) and
\(\Prob(\widehat D_S(X)\subseteq A(X,Y))=\E_S[L(S)]\).

For each \(i\), let \(S^{(i)}\) be obtained from \(S\) by replacing \(Z_i\) with
an independent copy \(Z_i'\). Define \(\hat\lambda_{S,i}\) as the conformal
quantile formed from \(\{V_{\widehat\theta_S}(Z_j)\}_{j\neq i}\), the replaced
score \(V_{\widehat\theta_{S^{(i)}}}(Z_i')\), and \(\infty\). Similarly, define
\(\hat\lambda_{S,i}'\) as the conformal quantile formed from
\(\{V_{\widehat\theta_{S^{(i)}}}(Z_j)\}_{j\neq i}\), the original score
\(V_{\widehat\theta_S}(Z_i)\), and \(\infty\). Set
\[
    \widehat L(S,Z_1',\ldots,Z_n')
    :=
    \frac1n\sum_{i=1}^n
    \mathbbm{1}\{
    V_{\widehat\theta_S}(Z_i)\le \hat\lambda_{S,i}
    \}.
\]
For every realization of \(S,Z_1',\ldots,Z_n'\), at least \(k-1\) of the
original scores \(\{V_{\widehat\theta_S}(Z_i)\}_{i=1}^n\) are no larger than
their corresponding perturbed quantiles, where
\(k=\lceil(n+1)(1-\alpha)\rceil\). Indeed, each of the \(k-1\) smallest
original scores has at most \(k-1\) elements below it after one replacement and
the extra \(\infty\) is added. Hence
\(\widehat L(S,Z_1',\ldots,Z_n')\ge(k-1)/n\ge1-\alpha-1/n\).

By exchangeability of \(Z_i\) and \(Z_i'\),
\[
    \E[\widehat L(S,Z_1',\ldots,Z_n')]
    =
    \frac1n\sum_{i=1}^n
    \E_{S,Z_i'}
    \mathbbm{1}\{
    V_{\widehat\theta_{S^{(i)}}}(Z_i')
    \le
    \hat\lambda_{S,i}'
    \}.
\]
On the other hand,
\[
    \E_S[L(S)]
    =
    \frac1n\sum_{i=1}^n
    \E_{S,Z_i'}
    \mathbbm{1}\{
    V_{\widehat\theta_S}(Z_i')
    \le
    \hat\lambda_S
    \}.
\]
By Lemma~\ref{prop:sufficient_stability_bound},
\[
    \left|
    \E_S[L(S)]
    -
    \E[\widehat L(S,Z_1',\ldots,Z_n')]
    \right|
    \le
    4c_{\mathrm{dens}}\varepsilon_{\rm stab} .
\]
Combining the last display with
\(\E[\widehat L(S,Z_1',\ldots,Z_n')]\ge1-\alpha-1/n\) yields
\[
    \Prob(
    \widehat D_S(X)\subseteq A(X,Y)
    )
    =
    \E_S[L(S)]
    \ge
    1-\alpha-\frac1n-4c_{\mathrm{dens}}\varepsilon_{\rm stab} .
\]
This completes the proof.
\end{proof}

\subsection{Verification of Assumption~\ref{ass:uniform_quantile_regularity}(ii)}\label{appen:verification_lipschitz}

The sufficient condition in the next proposition follows a standard
perturbation-analysis argument for parametric optimization problems
\citep{bonnans2000perturbation,dontchev2014implicit}.

\begin{proposition}[A sufficient condition for Lipschitz stability]\label{pro:lipschitz_stability_sufficient}
For \(\theta\in\Theta\), write \(D_\theta(x;\lambda)=\{u\in\gU:L_{\theta,\lambda}(u;x)\le 0\}\).
Suppose that the following conditions hold uniformly over \(\theta\in\Theta\)
and all \(\lambda\) satisfying \(|\lambda-\lambda_\theta^*|\le\epsilon_q\).
\begin{itemize}
    \item[(1)] \(D_\theta(X;\lambda)\) is nonempty and compact almost surely.

    \item[(2)] \(\phi\) is \(L_\phi\)-Lipschitz on \(\gU\).

    \item[(3)] There exists \(S_\lambda>0\) such that \(|L_{\theta,\lambda'}(u;x)-L_{\theta,\lambda}(u;x)|\le S_\lambda|\lambda'-\lambda|\) for all \(u\in\gU\), \(x\), \(\theta\in\Theta\), and all relevant \(\lambda,\lambda'\).

    \item[(4)] There exists \(\kappa>0\) such that, almost surely,
    \[
        \mathsf{dist}\bigl(u,D_\theta(X;\lambda)\bigr)
        \le
        \kappa [L_{\theta,\lambda}(u;X)]_+
        \qquad
        \text{for all }u\in\gU .
    \]

    \item[(5)] There exists \(B>0\) such that \(\left|\min_{u\in D_\theta(X;\lambda)}\phi(u)\right|\le B\) almost surely for all relevant \(\theta\) and \(\lambda\).
\end{itemize}
Then Assumption~\ref{ass:uniform_quantile_regularity}(ii) holds with
\(\Gamma=L_\phi\kappa S_\lambda\) and the constant \(B\) in condition (5).
\end{proposition}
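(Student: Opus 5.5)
Let $\Psi_\theta(x;\lambda)=\min_{u\in D_\theta(x;\lambda)}\phi(u)$; by condition (1) and continuity of $\phi$ the minimum is attained. The plan is to prove the pointwise bound
\[
|\Psi_\theta(x;\lambda)-\Psi_\theta(x;\lambda')|\le L_\phi\kappa S_\lambda|\lambda-\lambda'|
\]
for almost every $x$ and all relevant $\theta,\lambda,\lambda'$. Taking expectations and using $|\E W-\E W'|\le\E|W-W'|$ then gives the Lipschitz part of Assumption~\ref{ass:uniform_quantile_regularity}(ii) with $\Gamma=L_\phi\kappa S_\lambda$. Integrability is not an issue because condition (5) bounds both values by $B$, and that same condition supplies the boundedness part of the assumption directly.

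For the pointwise bound, fix $x$, $\theta$ and, without loss of generality, $\lambda\le\lambda'$. Nestedness gives $D_\theta(x;\lambda')\subseteq D_\theta(x;\lambda)$, so $\Psi_\theta(x;\lambda)\le\Psi_\theta(x;\lambda')$, and only the reverse inequality needs work.

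\begin{itemize}
\item Let $u^\star\in D_\theta(x;\lambda)$ attain $\Psi_\theta(x;\lambda)$, so $L_{\theta,\lambda}(u^\star;x)\le0$.
\item By condition (3), $L_{\theta,\lambda'}(u^\star;x)\le L_{\theta,\lambda}(u^\star;x)+S_\lambda(\lambda'-\lambda)\le S_\lambda(\lambda'-\lambda)$, hence $[L_{\theta,\lambda'}(u^\star;x)]_+\le S_\lambda|\lambda'-\lambda|$.
\item Apply the error bound (4) at level $\lambda'$: $\mathsf{dist}(u^\star,D_\theta(x;\lambda'))\le\kappa S_\lambda|\lambda'-\lambda|$. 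Since $D_\theta(x;\lambda')$ is compact and nonempty, some $v\in D_\theta(x;\lambda')$ attains this distance.
\item By condition (2), $\Psi_\theta(x;\lambda')\le\phi(v)\le\phi(u^\star)+L_\phi\|v-u^\star\|\le\Psi_\theta(x;\lambda)+L_\phi\kappa S_\lambda|\lambda'-\lambda|$.
\end{itemize}
Combining this with the monotone direction gives the pointwise bound.

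The argument is short; the main point needing care is the quantifiers. Conditions (3)–(5) hold only almost surely, only for $\lambda$ in the $\epsilon_q$-window around $\lambda_\theta^*$, and the null set may depend on $\lambda$. I would handle this by reading the hypotheses, as the proposition intends, as holding on a single full-measure event simultaneously for all relevant $(\theta,\lambda)$. Alternatively, it suffices that they hold for the given pair $(\lambda,\lambda')$: the expectation bound is needed only one pair at a time, so a pair-specific null set is harmless.

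The error-bound step uses (4) at $\lambda'$, which lies in the window. Condition (3) is applied only along the segment between $\lambda$ and $\lambda'$, which also stays in the window because the window is an interval.
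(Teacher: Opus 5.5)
Your proof is correct and follows the paper's argument. You take a minimizer at level $\lambda$, use condition (3) to bound its margin at $\lambda'$, project onto $D_\theta(x;\lambda')$ via the error bound (4), pass the cost through the Lipschitz property of $\phi$, and then take expectations. The only cosmetic difference is that you get the easy direction from nestedness of $D_\theta(x;\cdot)$, whereas the paper reruns the same argument with $\lambda$ and $\lambda'$ exchanged, which does not need monotonicity of $L_{\theta,\lambda}$ in $\lambda$.
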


\begin{proof}
For simplicity, write
\(M_{\theta,\lambda}(x)=\min_{u\in D_\theta(x;\lambda)}\phi(u)\). Fix
\(\theta\in\Theta\) and two local calibration levels \(\lambda,\lambda'\), and
let \(u_\lambda(x)\in\argmin_{u\in D_\theta(x;\lambda)}\phi(u)\), which exists
by compactness and continuity. Since \(u_\lambda(x)\in D_\theta(x;\lambda)\),
we have \(L_{\theta,\lambda}(u_\lambda(x);x)\le0\). By condition (3),
\(L_{\theta,\lambda'}(u_\lambda(x);x)\le S_\lambda|\lambda'-\lambda|\). The
error bound in condition (4) then gives
\[
    \mathsf{dist}\bigl(u_\lambda(x),D_\theta(x;\lambda')\bigr)
    \le
    \kappa S_\lambda|\lambda'-\lambda| .
\]
Thus there exists \(v_{\lambda'}\in D_\theta(x;\lambda')\) with
\(\|u_\lambda(x)-v_{\lambda'}\|\le\kappa S_\lambda|\lambda'-\lambda|\). By the
Lipschitz continuity of \(\phi\),
\[
    M_{\theta,\lambda'}(x)
    \le
    \phi(v_{\lambda'})
    \le
    M_{\theta,\lambda}(x)
    +
    L_\phi\kappa S_\lambda|\lambda'-\lambda|.
\]
Exchanging the roles of \(\lambda\) and \(\lambda'\) gives \(|M_{\theta,\lambda'}(x)-M_{\theta,\lambda}(x)|\le L_\phi\kappa S_\lambda|\lambda'-\lambda|\).
Taking expectation over \(X\) yields the Lipschitz condition in
Assumption~\ref{ass:uniform_quantile_regularity}(ii) with
\(\Gamma=L_\phi\kappa S_\lambda\). Condition (5) is exactly the boundedness
condition in Assumption~\ref{ass:uniform_quantile_regularity}(ii).
\end{proof}

\subsection{Verification of Assumption~\ref{ass:uniform_quantile_regularity}(iii)}\label{appen:verification_complexity}

\begin{proposition}[A sufficient condition for bounded \(\mathsf{Pdim}(\mathcal M_\delta)\)]
Suppose that $\gX \subseteq \sR^{d_{x}}$, \(\mathcal U\subseteq\sR^{d_u}\) and \(\Theta\subseteq\sR^{d_\theta}\)
are compact semi-algebraic sets. Assume that \(L_{\theta,\lambda}(u;x)\) and
\(\phi(u)\) are polynomial functions of their arguments with degree uniformly
bounded by a constant \(r\). Assume also that, for all relevant
\((\theta,\lambda)\), the feasible set
\(D_\theta(x;\lambda)=\{u\in\mathcal U:L_{\theta,\lambda}(u;x)\le0\}\) is
nonempty, and that the local calibration levels satisfying
\(|\lambda-\lambda_\theta^*|\le\epsilon_q\) lie in a bounded interval. Then,
for every \(0<\delta\le\epsilon_q\), the decision risk class
\[
    \mathcal M_\delta
    =
    \left\{
    x\mapsto
    \min_{u\in D_\theta(x;\lambda)}\phi(u):
    \theta\in\Theta,\ |\lambda-\lambda_\theta^*|\le\delta
    \right\}
\]
has finite pseudo-dimension. In particular, there exists a constant \(C\)
depending only on \(d_x,d_u,r\) and the number of defining constraints such
that
\[
    \mathsf{Pdim}(\mathcal M_\delta)
    \le
    C(d_\theta+1)\log(d_\theta+2).
\]
\end{proposition}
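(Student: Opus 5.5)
The plan is to reduce the claim to a pseudo-dimension bound for a class of indicator functions defined by Boolean combinations of polynomial sign conditions, and then apply a Warren/Goldberg--Jerrum type bound. By definition of pseudo-dimension, it suffices to bound the VC dimension of the class of subgraph indicators
\[
(x,s)\mapsto \mathbbm{1}\Bigl\{\min_{u\in D_\theta(x;\lambda)}\phi(u) > s\Bigr\},
\]
indexed by $(\theta,\lambda)$ with $\theta\in\Theta$ and $|\lambda-\lambda_\theta^*|\le\delta$. Since this is a subfamily of the class indexed by all $(\theta,\lambda)\in\Theta\times I$, where $I$ is the bounded interval containing the local levels, I will bound the larger class. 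That larger class has $d_\theta+1$ real parameters and no longer depends on the awkward map $\theta\mapsto\lambda_\theta^*$. This also explains why the bound is uniform in $\delta$.

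The key step is to express the event $\{\min_{u\in D_\theta(x;\lambda)}\phi(u)>s\}$ through a first-order formula over the reals. It reads
\[
\forall u\in\gU:\ L_{\theta,\lambda}(u;x)\le 0 \ \Rightarrow\ \phi(u)>s.
\]
Here $\gU$ is described by finitely many polynomial inequalities, and $L$ and $\phi$ are polynomials of degree at most $r$. Nonemptiness and compactness guarantee that the minimum is attained, so the strict inequality is exactly right. By Tarski--Seidenberg quantifier elimination, with effective versions such as Basu--Pollack--Roy, the set of $(x,s,\theta,\lambda)$ satisfying this formula is semi-algebraic. Moreover, it is defined by a Boolean combination of at most $N$ polynomial sign conditions of degree at most $D$, where $N$ and $D$ depend only on $d_x$, $d_u$, $r$ and the number of defining constraints. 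The essential point is that the eliminated variable $u$ has dimension $d_u$, so the complexity is controlled independently of $d_\theta$ if one uses the block-structured elimination bounds in which the parameters $(\theta,\lambda)$ are free variables. Alternatively, and perhaps more safely, I can treat $(x,s)$ as the fixed inputs and keep only the $d_\theta+1$ parameters free, so that only the degree in the parameters matters.

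Next I apply the Goldberg--Jerrum theorem, or Warren's bound on the number of sign patterns. For a concept class parameterized by $k=d_\theta+1$ reals, where membership is a Boolean formula in $N$ polynomial predicates of degree at most $D$ in the parameters, the VC dimension is at most $2k\log_2(8eND)$. Substituting gives $O\bigl((d_\theta+1)\log(ND)\bigr)$. To obtain the stated form $C(d_\theta+1)\log(d_\theta+2)$, I note that $\log(8eND)$ is a constant depending only on $d_x$, $d_u$, $r$ and the constraint count, and I absorb it into $C$, using $\log(d_\theta+2)\ge\log 2$.

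The main obstacle is the quantifier-elimination step: I must ensure that the number and degree of the resulting polynomials depend on $(\theta,\lambda)$ only polynomially, with exponents independent of $d_\theta$. The generic bounds give degree $D^{O(d_u)}$ and count $N^{O(d_u(d_\theta+d_x))}$. The count grows exponentially in $d_\theta$, and its logarithm grows linearly in $d_\theta$, which would spoil the bound by giving $d_\theta^2$. My first attempt would be to use the version of Warren's bound that counts sign conditions realized by the polynomials in the parameters for each fixed sample point. For a fixed $(x,s)$ the number of eliminating polynomials is $N^{O(d_u)}$, and this does not depend on $d_\theta$. This restores the linear dependence on $d_\theta$.
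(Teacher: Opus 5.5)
Your route is the paper's route. You pass to the subgraph class and enlarge the index set to $\Theta\times I$ so that $\theta\mapsto\lambda_\theta^*$ disappears. You encode the subgraph event as a first-order formula in which $u$ is quantified, apply Tarski--Seidenberg, and finish with a Warren/Goldberg--Jerrum bound in the $d_\theta+1$ parameters. The paper simply asserts that the quantifier-free formula has a number and degree of polynomials depending only on structural constants. You correctly notice that this is exactly where a $d_\theta$-dependence can enter, because the effective elimination bounds are exponential in the number of free variables.

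Your proposed remedy, however, does not work as stated. Fixing the sample point $(x,s)$ removes only the $d_x+1$ input variables. If you then eliminate $u$ from a formula whose free variables are $(\theta,\lambda)$, the available effective bounds still give a polynomial count whose logarithm is linear in $d_\theta$, so you are back to a $d_\theta^2$ bound. The assertion that the count is $N^{O(d_u)}$ independently of $d_\theta$ needs an argument you have not supplied.

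The missing idea is that $(\theta,\lambda)$ enter the inner problem only through finitely many coefficients. Write $L_{\theta,\lambda}(u;x)=\sum_{|\beta|\le r}c_\beta(\theta,\lambda,x)\,u^\beta$, with $M=\binom{d_u+r}{r}$ coefficient maps, each a polynomial of degree at most $r$. Eliminate $u$ once from the universal formula $\exists u\in\gU:\ \sum_\beta c_\beta u^\beta\le 0,\ \phi(u)\le s$, whose free variables are $(c,s)\in\sR^{M+1}$. This yields $N'$ polynomials of degree at most $D'$, with $N'$ and $D'$ depending only on $d_u$, $r$ and the description of $\gU$.

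For each sample point, substitute $c=c(\theta,\lambda,x_j)$ and $s=s_j$. This gives a Boolean combination of $N'$ sign conditions on polynomials of degree at most $rD'$ in $(\theta,\lambda)$. Goldberg--Jerrum then gives $\mathsf{VCdim}\le 2(d_\theta+1)\log_2(8e\,rD'N')$. That is in fact linear in $d_\theta$, with a constant that does not depend on $d_x$, so it implies the stated $C(d_\theta+1)\log(d_\theta+2)$ bound.
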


\begin{proof}
For any real-valued function class \(\mathcal F\),
\(\mathsf{Pdim}(\mathcal F)=\mathsf{VCdim}(\mathsf{sub}(\mathcal F))\), where
\(\mathsf{sub}(\mathcal F)=\{(x,t)\mapsto
\mathbbm{1}\{f(x)\le t\}:f\in\mathcal F\}\). Hence it suffices to bound the VC
dimension of \(\mathsf{sub}(\mathcal M_\delta)\).

Fix \(0<\delta\le\epsilon_q\). By hypothesis, all relevant \(\lambda\)'s belong
to a fixed bounded interval \(I\). Define the larger class
\[
    \widetilde{\mathcal M}
    =
    \left\{
    x\mapsto
    \min_{u\in D_\theta(x;\lambda)}\phi(u):
    \theta\in\Theta,\ \lambda\in I
    \right\}.
\]
Since \(\mathcal M_\delta\subseteq\widetilde{\mathcal M}\), it is enough to
bound \(\mathsf{VCdim}(\mathsf{sub}(\widetilde{\mathcal M}))\). For any
\(\theta,\lambda\), the minimum
\(M_{\theta,\lambda}(x)=\min_{u\in D_\theta(x;\lambda)}\phi(u)\) is attained by
compactness and continuity. Moreover,
\(M_{\theta,\lambda}(x)\le z\) if and only if there exists \(u\in\mathcal U\)
such that \(L_{\theta,\lambda}(u;x)\le0\) and \(\phi(u)\le z\). This is a
first-order formula over the reals with existential variable \(u\), free
variables \((x,z)\), and parameters \((\theta,\lambda)\). Since
\(\mathcal U\) and \(\Theta\) are semi-algebraic and
\(L_{\theta,\lambda}\) and \(\phi\) are polynomials of uniformly bounded degree,
the Tarski--Seidenberg theorem implies that this formula is equivalent to a
quantifier-free Boolean combination of polynomial inequalities whose degrees
and number depend only on the structural constants of the problem.

Therefore each subgraph indicator in
\(\mathsf{sub}(\widetilde{\mathcal M})\) is definable by a Boolean combination
of a fixed number of polynomial inequalities with parameter vector
\((\theta,\lambda)\in\sR^{d_\theta+1}\). By standard VC bounds for
semi-algebraic concept classes, there exists a constant \(C\), depending only
on \(d_x,d_u,r\) and the number of defining constraints, such that
\[
    \mathsf{VCdim}(\mathsf{sub}(\widetilde{\mathcal M}))
    \le
    C(d_\theta+1)\log(d_\theta+2).
\]
Since \(\mathsf{sub}(\mathcal M_\delta)\subseteq
\mathsf{sub}(\widetilde{\mathcal M})\), the same bound applies to
\(\mathsf{Pdim}(\mathcal M_\delta)\).
\end{proof}

\section{Proof of main results}

\subsection{Proof of Theorem \ref{thm:inclusion}}

\begin{proof}
Since the score $V_i=\inf\left\{\lambda\in\sR:\sup_{u\in D(X_i;\lambda)} f(u;X_i,Y_i)\leq 0\right\}$ is a measurable function of $(X_i,Y_i)$, and $(X_i,Y_i)$ are exchangeable for $i\in[n+1]$, the random variables $V_1,\ldots,V_{n+1}$ are also exchangeable.

By nestedness and the attainability of \(V_{n+1}\), the set of safe thresholds
is an upper set with critical value \(V_{n+1}\). Thus, for any \(\lambda\),
\(D(X_{n+1};\lambda)\subseteq A(X_{n+1},Y_{n+1})\) holds if and only if
\(V_{n+1}\le\lambda\). Taking \(\lambda=\hat\lambda\), the inclusion event
is exactly \(\{V_{n+1}\le\hat\lambda\}\). Since \(\hat\lambda\) is the
conformal quantile of \(\{V_i\}_{i=1}^n\) augmented with \(\infty\), and
\(\{V_i\}_{i=1}^{n+1}\) are exchangeable, the standard conformal argument gives
\(\Prob\{V_{n+1}\le\hat\lambda\}\ge1-\alpha\). If the scores are distinct
with probability one, the usual upper bound
\(\Prob\{V_{n+1}\le\hat\lambda\}\le 1-\alpha+1/(n+1)\) also follows.
\end{proof}

\subsection{Proofs of Propositions \ref{prop:dir_score_PS} and \ref{prop:dir_score_add}}

\begin{proof}
Recall the definition $A^c(X_i,Y_i) = \{u\in \gU: f(u;X_i,Y_i) > 0\}$.
    \begin{itemize}
        \item For a fixed \(u\in A^c(X_i,Y_i)\), exclusion from \(D(X_i;\lambda)\) is equivalent to \(\sup_{y\in C(X_i;\lambda)}f(u;X_i,y)>0\). Since \(C(X_i;\lambda)=\{y:s(X_i,y)\leq\lambda\}\), this occurs precisely when the prediction set contains some \(y\) satisfying \(f(u;X_i,y)>0\). The smallest such level is \(\inf_{y:f(u;X_i,y)>0}s(X_i,y)\). Taking the supremum over all decisions that are unsafe under \(Y_i\) gives \eqref{eq:dir_score_PS}.

        \item For a fixed unsafe decision \(u\in A^c(X_i,Y_i)\), exclusion from \(D(X_i;\lambda)\) is equivalent to \(\hat f(u;X_i)+\mu(u)+\lambda\sigma(u)>0\). Since \(\sigma(u)>0\), this holds precisely when \(\lambda>-\{\hat f(u;X_i)+\mu(u)\}/\sigma(u)\). Therefore, the smallest level at which all unsafe decisions are excluded is the supremum of these exclusion levels, which gives \eqref{eq:dir_score_add}.
    \end{itemize}
\end{proof}

\subsection{Proof of Proposition \ref{prop:ps_scp_comparison}}

\begin{proof}
If \(\lambda\ge V_i^{\texttt{Base-PS}}\), then \(Y_i\in C(X_i;\lambda)\).
Hence, for every \(u\in D(X_i;\lambda)\),
\(f(u;X_i,Y_i)\le \sup_{y\in C(X_i;\lambda)} f(u;X_i,y)\le0\), which implies
\(D(X_i;\lambda)\subseteq A(X_i,Y_i)\). Therefore
\(V_i\le V_i^{\texttt{Base-PS}}\). Equivalently, in the pointwise formula for
\(V_i\), whenever \(f(u;X_i,Y_i)>0\), the realized label \(Y_i\) is feasible
for the inner minimization, so \(\inf_{\{y\in \gY:f(u;X_i,y)>0\}}s(X_i,y)\le s(X_i,Y_i)=V_i^{\texttt{Base-PS}}\).
By monotonicity of the conformal quantile,
\(\hat\lambda\le\hat\lambda^{\texttt{Base-PS}}\). Since
\(D(X;\lambda)\) is decreasing in \(\lambda\), the inclusion of safe subsets follows.
\end{proof}

\subsection{Proof of Proposition \ref{prop:additive_unif_comparison}}

\begin{proof}
Let $R_i(u) = f(u;X_i,Y_i)-\hat f(u;X_i)$. For any \(u\in D(X_i;V_i^{\texttt{Base-Add}})\), we have
\(f(u;X_i,Y_i)=\hat f(u;X_i)+R_i(u)
\le \hat f(u;X_i)+\mu(u)+V_i^{\texttt{Base-Add}}\sigma(u)\le0\). Hence
\(D(X_i;V_i^{\texttt{Base-Add}})\subseteq A(X_i,Y_i)\), which implies
\(V_i\le V_i^{\texttt{Base-Add}}\). Equivalently, this follows from the
pointwise formula because, on the infeasible set \(f(u;X_i,Y_i)>0\),
\(-\{\hat f(u;X_i)+\mu(u)\}/\sigma(u)\le\{R_i(u)-\mu(u)\}/\sigma(u)\).
By monotonicity of the conformal quantile,
\(\hat\lambda\le\hat\lambda^{\texttt{Base-Add}}\). Since
\(D(X;\lambda)\) is decreasing in \(\lambda\), the inclusion of safe subsets follows.
\end{proof}

\subsection{Proof of Theorem \ref{thm:inclusion_certificate}}

\begin{proof}
If \(\widetilde g_i(t)\le0\) for all \(t\ge\lambda\), then
\(g_i(t)\le\widetilde g_i(t)\le0\) for all \(t\ge\lambda\). Hence every
feasible level for \(\widetilde V_i\) is feasible for \(V_i\), and therefore
\(\widetilde V_i\ge V_i\). By exchangeability of the conservatively computed
scores, the conformal quantile satisfies
\(\Prob\{\widetilde V_{n+1}\le\widetilde\lambda\}\ge1-\alpha\). On this event,
attainability of the conservative threshold gives
\(\widetilde g_{n+1}(\widetilde\lambda)\le0\), and hence
\(g_{n+1}(\widetilde\lambda)\le0\). By the definition of \(g_{n+1}\), this is
equivalent to
\(D(X_{n+1};\widetilde\lambda)\subseteq A(X_{n+1},Y_{n+1})\).
\end{proof}

\subsection{Proof of Theorem~\ref{thm:direct_concentration_inclusion}}

\begin{proof}
For any \(\theta\in\Theta\), let \(F_\theta(t)=\Prob\{V_\theta(Z)\le t\}\)
and
\(\widehat F_{n,\theta}(t)=n^{-1}\sum_{i=1}^n
\mathbbm{1}\{V_\theta(Z_i)\le t\}\), where \(Z=(X,Y)\) is an independent test
point and \(Z_i=(X_i,Y_i)\). By Assumption~\ref{ass:threshold_complexity_inclusion},
the class
\[
    \gH
    =
    \left\{
    \mathbbm{1}\{V_\theta(x,y)\le t\}:
    \theta\in\Theta,\ t\in\sR
    \right\}
\]
has VC dimension at most \(d_{\gH}\). Therefore, by the standard VC uniform
convergence inequality, with probability at least \(1-\eta\),
\[
    \sup_{\theta\in\Theta,\ t\in\sR}
    |\widehat F_{n,\theta}(t)-F_\theta(t)|
    \le
    \varepsilon_{\gH,n}(\eta),
    \qquad
    \varepsilon_{\gH,n}(\eta)
    =
    \mathfrak C
    \sqrt{
    \frac{d_{\gH}\log(en/d_{\gH})+\log(2/\eta)}{n}
    } .
\]
Denote this event by \(\mathcal E_n\). On \(\mathcal E_n\), the bound holds
uniformly and can be applied to the data-dependent choice \(\hat\theta\).
Since \(\hat\lambda_{\hat\theta}\) is the conformal quantile and
\(k=\lceil(n+1)(1-\alpha)\rceil\le n\), we have
\(\widehat F_{n,\hat\theta}(\hat\lambda_{\hat\theta})\ge k/n
\ge 1-\alpha\). Hence, on \(\mathcal E_n\),
\[
    F_{\hat\theta}(\hat\lambda_{\hat\theta})
    \ge
    \widehat F_{n,\hat\theta}(\hat\lambda_{\hat\theta})
    -
    \varepsilon_{\gH,n}(\eta)
    \ge
    1-\alpha-\varepsilon_{\gH,n}(\eta).
\]
Equivalently, conditional on \(Z_1,\ldots,Z_n\), the probability of
\(\{V_{\hat\theta}(Z_{n+1})\le\hat\lambda_{\hat\theta}\}\) is at
least \(1-\alpha-\varepsilon_{\gH,n}(\eta)\). By attainability and the
definition of the critical threshold, this event implies
\(D_{\hat\theta}(X_{n+1};\hat\lambda_{\hat\theta})
\subseteq A(X_{n+1},Y_{n+1})\). Since
\(\widehat D^{\texttt{O-DISC}}(X_{n+1})
=D_{\hat\theta}(X_{n+1};\hat\lambda_{\hat\theta})\), we obtain the
stated conditional guarantee on \(\mathcal E_n\).
\end{proof}

\subsection{Proof of Proposition~\ref{prop:dc_cvar_equivalence}}

\begin{proof}
Let \(m_1=n-k+1\) and \(m_0=n-k\), with the convention that
\(m_0\widehat{\mathsf{CVaR}}_{m_0}=0\) when \(m_0=0\). For any \(a\in\sR^n\),
\(m\widehat{\mathsf{CVaR}}_m(a)\) equals the sum of the largest \(m\)
components of \(a\). Hence
\(m_1\widehat{\mathsf{CVaR}}_{m_1}(a)
-m_0\widehat{\mathsf{CVaR}}_{m_0}(a)=a_{(k)}\). Applying this identity to
\(a=\vg(\lambda;\theta)\), the second constraint in \eqref{eq:dc_cvar_erm} is
equivalent to \(g_{(k)}(\lambda;\theta)\le0\), namely, at least \(k\) of the
verification values \(g_i(\lambda;\theta)\) are nonpositive.

By the closed upper-ray property of the inclusion certificate,
\(g_i(\lambda;\theta)\le0\) if and only if
\(V_\theta(X_i,Y_i)\le\lambda\). Therefore,
\(g_{(k)}(\lambda;\theta)\le0\) is equivalent to requiring at least \(k\) of
the critical thresholds \(V_\theta(X_i,Y_i)\) to be no larger than \(\lambda\),
which is exactly the sample quantile constraint
\(\hat\lambda_\theta\le\lambda\). Thus the difference-of-CVaR constraint
in \eqref{eq:dc_cvar_erm} is equivalent to the quantile constraint in
\eqref{eq:erm_quantile_constraint}.

It remains to compare the objectives. For fixed \((\theta,\lambda)\), let
\(M_i(\theta,\lambda)=\min\{\phi(u):u\in D_\theta(X_i;\lambda)\}\). Since \(D_\theta(X_i;\lambda)\)
shrinks as \(\lambda\) increases, \(M_i(\theta,\lambda)\) is nondecreasing in
\(\lambda\). Hence, for each fixed \(\theta\), the smallest feasible
calibration level \(\lambda=\hat\lambda_\theta\) attains the minimum over
all \(\lambda\ge\hat\lambda_\theta\), and the resulting objective is
precisely the empirical risk in \eqref{eq:empirical_model_selection}.
Minimizing over \(\theta\in\Theta\) therefore gives the same minimum objective
value and the same optimal family parameters. Finally, if the objective is
flat for some \(\lambda>\hat\lambda_\theta\), such a \(\lambda\) may also be
optimal in \eqref{eq:dc_cvar_erm}; nevertheless,
\(\lambda=\hat\lambda_\theta\) always extends any optimal
\(\hat\theta\) of \eqref{eq:empirical_model_selection} to an optimal
solution of \eqref{eq:dc_cvar_erm}.
\end{proof}

\subsection{Proof of Theorem~\ref{thm:excess_risk_bound}}

\begin{proof}[Proof of Theorem~\ref{thm:excess_risk_bound}]
For $\theta\in\Theta$ and $\lambda\in\sR$, write
$m_{\theta,\lambda}(x)
:=\min_{u\in D_\theta(x;\lambda)}\phi(u)$,
and let
$R(\theta,\lambda):=\E[m_{\theta,\lambda}(X)]$ and
$\widehat R_n(\theta,\lambda)
:=n^{-1}\sum_{i=1}^n m_{\theta,\lambda}(X_i)$. Denote $\theta^* \in \argmin_{\theta\in\Theta}R(\theta,\lambda_{\theta}^*)$.

By Assumption~\ref{ass:threshold_complexity_inclusion} and
Assumption~\ref{ass:uniform_quantile_regularity}(i), the uniform quantile bound in Lemma \ref{lem:lambda_error_bound} implies
\[
\sup_{\theta\in\Theta}
\left|
\hat\lambda_\theta-\lambda_\theta^*
\right|
\leq
\frac{\Delta_n(\eta)}{c_q}
\leq\epsilon_q
\]
with the required probability. Assumption~\ref{ass:uniform_quantile_regularity}(ii) and (iii), together with uniform convergence over
$\gM_{\Delta_n(\eta)/c_q}$, further give
\[
\sup_{\substack{\theta\in\Theta:\\
|\lambda-\lambda_\theta^*|
\leq\Delta_n(\eta)/c_q}}
\left|
\widehat R_n(\theta,\lambda)-R(\theta,\lambda)
\right|
\leq
\mathfrak{C}B
\sqrt{
\frac{
d_{\gM}\log(en/d_{\gM})+\log(4/\eta)
}{n}
}.
\]
Both bounds hold simultaneously with probability at least $1-\eta$.

Let the right-hand side of the last display be $\varepsilon_{\gM,n}$. By the
definition of $\hat\theta$ and the uniform bound above,
\begin{align*}
R(\hat\theta,\hat\lambda_{\hat\theta})
&\leq
\widehat R_n(\hat\theta,\hat\lambda_{\hat\theta})
+\varepsilon_{\gM,n}\\
&\leq
\widehat R_n(\theta^*,\hat\lambda_{\theta^*})
+\varepsilon_{\gM,n}\\
&\leq
R(\theta^*,\hat\lambda_{\theta^*})
+2\varepsilon_{\gM,n}\\
&\leq
R(\theta^*,\lambda_{\theta^*}^*)
+2\varepsilon_{\gM,n}
+\frac{\Gamma}{c_q}\Delta_n(\eta),
\end{align*}
where the last inequality follows from
Assumption~\ref{ass:uniform_quantile_regularity}(ii). Finally, since
$\widehat D^{\texttt{O-DISC}}(X)
=D_{\hat\theta}(X;\hat\lambda_{\hat\theta})$
and $X_{n+1}$ is independent of $\gD_n$, the first and last population
values in the preceding display equal the two expectations in the theorem.
\end{proof}

\subsection{Proof of Theorem \ref{thm:full_conformal_inclusion}}

\begin{proof}
Take \(y=Y_{n+1}\). The augmented ERM objective and deterministic tie-breaking
are permutation invariant in the \(n+1\) augmented observations, so
\(\hat\theta^{Y_{n+1}}\) is a symmetric function of the augmented sample.
Therefore the scores \(\{V_{\hat\theta^{Y_{n+1}}}(X_i,Y_i)\}_{i=1}^{n+1}\) are exchangeable. By the usual full conformal rank argument,
\[
    \Prob\left\{
    V_{\hat\theta^{Y_{n+1}}}(X_{n+1},Y_{n+1})
    \le
    \hat\lambda_{\hat\theta^{Y_{n+1}}}^{Y_{n+1}}
    \right\}
    \ge 1-\alpha .
\]
On this event, the attainability assumption gives
\[
    D_{\hat\theta^{Y_{n+1}}}
    \left(X_{n+1};\hat\lambda_{\hat\theta^{Y_{n+1}}}^{Y_{n+1}}\right)
    \subseteq A(X_{n+1},Y_{n+1}).
\]
The intersection in $\widehat D^{\texttt{FO-DISC}}(X_{n+1})$ contains the term corresponding to
\(y=Y_{n+1}\), so \(\widehat D^{\texttt{FO-DISC}}(X_{n+1})\) is contained in
the true-label-specific set above. This yields the stated probability bound.
\end{proof}

\section{The resource-allocation simulation}
\label{appen:resource_allocation_details}

\subsection{Data-generating process}
\label{appen:resource_dgp}

For the fixed-family benchmark, let \(X\sim\mathrm{Unif}([-1,1]^5)\) and \(Y_j=[m_j(X)+\varsigma_j(X)\xi_j]_+\), \(j=1,2\), where \(\xi_1,\xi_2\) are independent standardized \(t_3\) noises. The baseline location and scale functions are
\[
\begin{aligned}
    m_1(x)&=3.0+x_1+0.75\sin(\pi x_2)+0.25x_1x_2+0.25x_3^2-0.25x_4+0.25\sin(\pi x_5),\\
    m_2(x)&=3.2-0.75x_1+\cos(\pi x_2)+0.25x_1^2+0.25x_3x_4-0.25x_5,\\
    \varsigma_1(x)&=0.40+0.25|x_1|+0.15|x_3|+0.10\mathbbm 1\{x_4>0\},\\
    \varsigma_2(x)&=0.50+0.25|x_2|+0.15|x_5|+0.10\mathbbm 1\{x_1<0\}.
\end{aligned}
\]
An independent pre-training sample of size \(n_{\rm pre}=5000\) is used to fit \(\mathsf{ML}(X)=(\widehat m_1(X),\widehat m_2(X))\): a random forest with 300 trees, minimum leaf size 10, square-root feature subsampling, and nonnegative clipping of predictions. The predictor is held fixed across methods and repetitions. Each repetition uses independent fresh calibration and test samples, with \(n=500\) and \(n_{\rm test}=1000\) in the main fixed-family experiment.

\subsection{Decision-system setting}
\label{appen:resource_decision_setting}

On \(\mathcal U=[0,8]^2\), the two constraint functions are
\[
\begin{aligned}
    f_{\rm res}(u;X,Y)
    &=\max\{Y_1-u_1,\ Y_2-u_2,\ \rho(Y_1+Y_2)-(u_1+u_2)\},\\
    f_{\rm buf}(u;X,Y)
    &=\max\{f_{\rm res}(u;X,Y),\ u_2-Y_2-B\},
    \qquad \rho=1.5,\quad B=3.5.
\end{aligned}
\]
The fixed-family benchmark uses \(\phi_L(u)=c^\top u\), \(c=(1,1.25)\), and \(\phi_Q(u)=0.1(u_1^2+1.5u_2^2)\).

For \texttt{DISC-Add} and \texttt{Base-Add}, set
\(L_{\lambda,f}(u;X)=\hat f(u;X)+\lambda\sigma(u)\), where \(\hat f\) replaces \(Y\) by \(\mathsf{ML}(X)\) and \(\sigma(u)=1+0.1(u_1+u_2)\). Since \(\hat f\) is max-affine and \(\sigma\) is affine, \(D_f(X;\lambda)=\{u\in\mathcal U:L_{\lambda,f}(u;X)\le0\}\) is a polytope.

For \texttt{DISC-PS} and \texttt{Base-PS}, use \(C(X;\lambda)=\{y:\|y-\mathsf{ML}(X)\|_2\le\lambda\}\), \(\lambda\ge0\). Its support function gives
\[
\begin{aligned}
    D_{\rm res}^{\rm PS}(X;\lambda)=\{u\in\mathcal U:\quad
    &u_1\ge\widehat m_1(X)+\lambda,\quad
    u_2\ge\widehat m_2(X)+\lambda,\\
    &u_1+u_2\ge\rho(\widehat m_1(X)+\widehat m_2(X))+\rho\sqrt2\,\lambda\},\\
    D_{\rm buf}^{\rm PS}(X;\lambda)
    &=D_{\rm res}^{\rm PS}(X;\lambda)\cap\{u:u_2\le\widehat m_2(X)-\lambda+B\}.
\end{aligned}
\]

\subsection{Score computation}
\label{appen:resource_score_computation}

Write \(f_i(u)=f(u;X_i,Y_i)\), \(\widehat m_i=\mathsf{ML}(X_i)\), and let \(f\) denote either constraint function.

\paragraph*{Additive residual-margin family.}
The pointwise exclusion formula in \eqref{eq:dir_score_add} gives
\begin{equation}
    V_i^{\texttt{DISC-Add}}
    =\sup_{\substack{u\in\mathcal U\\f_i(u)>0}}
    \frac{-\hat f(u;X_i)}{\sigma(u)}.
    \label{eq:resource_direct_unsafe_score}
\end{equation}
For buffered safety, the unsafe set is the union of the halfspaces
\(u_1<Y_{i1}\), \(u_2<Y_{i2}\), \(u_1+u_2<\rho(Y_{i1}+Y_{i2})\), and \(u_2>Y_{i2}+B\), intersected with \(\mathcal U\); reserve safety omits the fourth halfspace. We enumerate each unsafe halfspace and active affine branch of \(\hat f\). On every nonempty resulting polygon, the objective is linear-fractional with positive denominator, so its supremum is obtained by checking vertices of the polygon's closure \citep{boyd2004convex}.
The baseline score is
\[
    V_i^{\texttt{Base-Add}}
    =\sup_{u\in\mathcal U}
    \frac{f_i(u)-\hat f(u;X_i)}{\sigma(u)}.
\]
It is evaluated by the same vertex calculation after enumerating the active branches of both \(f_i\) and \(\hat f\).

\paragraph*{Prediction-set margin family.}
Define the constraint-specific radius limits
\[
\begin{aligned}
    r_{i,\rm res}(u)
    &=\min\left\{u_1-\widehat m_{i1},\ u_2-\widehat m_{i2},\
    \frac{u_1+u_2-\rho(\widehat m_{i1}+\widehat m_{i2})}{\rho\sqrt2}\right\},\\
    r_{i,\rm buf}(u)
    &=\min\{r_{i,\rm res}(u),\ \widehat m_{i2}+B-u_2\}.
\end{aligned}
\]
Then \eqref{eq:dir_score_PS} yields, for \(k\in\{{\rm res},{\rm buf}\}\),
\[
    V_{i,k}^{\texttt{DISC-PS}}
    =\max\left\{0,\ \sup_{\substack{u\in\mathcal U\\f_k(u;X_i,Y_i)>0}}r_{i,k}(u)\right\}.
\]
For each nonempty unsafe-halfspace intersection, solve a linear program over its closure: maximize an auxiliary variable bounded above by every affine term defining \(r_{i,k}\). Maximize over these intersections and truncate below at zero. This accounts for all candidate constraints and the domain bounds, and gives zero when the unsafe set is empty. The baseline score is \(V_i^{\texttt{Base-PS}}=\|Y_i-\widehat m_i\|_2\).

\subsection{Optimization setup}
\label{appen:resource_family_opt}
\label{appen:resource_optimization}

The optimized-family experiment uses both constraint functions and the asymmetric linear cost \(c^\top u\), \(c=(1,10)\). The context distribution and functions \(m_j,\varsigma_j\) remain as in Appendix~\ref{appen:resource_dgp}, but demands follow
\[
    Y_j=\left[m_j(X)+\varsigma_j(X)
    \left\{0.45\,\frac{Z_j}{\sqrt2}+h_j(X)G_j\right\}\right]_+,
    \qquad j=1,2,
\]
where \(Z_j\sim t_4\), \(G_j\sim\mathrm{Gamma}(\text{shape}=2,\text{scale}=0.65)\), all four variables are mutually independent and independent of \(X\), and
\[
\begin{aligned}
    h_1(x)&=(1.2+0.8|x_2|)\mathbbm 1\{x_1+0.7x_3>0.35\},\\
    h_2(x)&=(1.4+0.9|x_5|)\mathbbm 1\{-x_1+0.8x_4>0.25\}.
\end{aligned}
\]
The Gamma terms add context-dependent positive shocks with different activation regions and amplitudes for the two resources, and the positive part enforces nonnegative demand. Thus \(m_j\) is a baseline location, not the resulting conditional mean. Pre-training, calibration, and test samples are drawn independently from this model.

For each family parameter \(\eta\), compute its directed or baseline scores, obtain the split-conformal threshold \(\hat\lambda_\eta\), and minimize the empirical decision loss
\[
    \frac1n\sum_{i=1}^n
    \min_{u\in D_\eta(X_i;\hat\lambda_\eta)}c^\top u.
\]
The directed and baseline methods share the parameter domain and loss; their calibration scores differ as follows.

\subsubsection{Additive residual-margin family}
The optimized family is
\[
    L_\lambda^\theta(u;X)=\hat f(u;X)+\theta^\top u+\lambda\sigma(u),
    \qquad \theta\in\Theta_{\rm disc}=\{-2,-1,0,1,2\}^2,
\]
with the same \(\sigma(u)=1+0.1(u_1+u_2)\); \(\theta=0\) recovers the fixed family. The scores are
\[
\begin{aligned}
    V_i^{\texttt{DISC-Add}}(\theta)
    &=\sup_{\substack{u\in\mathcal U\\f_i(u)>0}}
    \frac{-\hat f(u;X_i)-\theta^\top u}{\sigma(u)},\\
    V_i^{\texttt{Base-Add}}(\theta)
    &=\sup_{u\in\mathcal U}
    \frac{f_i(u)-\hat f(u;X_i)-\theta^\top u}{\sigma(u)}.
\end{aligned}
\]
Both use the active-branch vertex calculation in Appendix~\ref{appen:resource_score_computation}. The empirical loss is minimized by enumerating the 25 values of \(\theta\).

\subsubsection{Prediction-set margin family}
Use the positive-definite matrix
\[
    \Sigma(a,b)=\begin{pmatrix}e^a&b\\b&(1+b^2)e^{-a}\end{pmatrix},
    \qquad (a,b)\in[-2,2]\times[-3,3],\qquad \det\Sigma=1,
\]
and \(C_\Sigma(X_i;\lambda)=\{y:(y-\widehat m_i)^\top\Sigma(y-\widehat m_i)\le\lambda\}\), \(\lambda\ge0\). Here \(\lambda\) is a squared radius, whereas the fixed-family parameter above is a radius; \((a,b)=(0,0)\) gives the same Euclidean-ball family. Writing \(s_\Sigma(v)=\sqrt{v^\top\Sigma^{-1}v}\), the support function is \(v^\top\widehat m_i+\sqrt\lambda\,s_\Sigma(v)\). Consequently, for buffered safety,
\[
\begin{aligned}
    D_\Sigma(X_i;\lambda)=\{u\in\mathcal U:\quad
    &u_1\ge\widehat m_{i1}+\sqrt\lambda\,s_\Sigma(e_1),\\
    &u_2\ge\widehat m_{i2}+\sqrt\lambda\,s_\Sigma(e_2),\\
    &u_1+u_2\ge\rho(\widehat m_{i1}+\widehat m_{i2})
       +\rho\sqrt\lambda\,s_\Sigma(\mathbf 1),\\
    &u_2\le\widehat m_{i2}-\sqrt\lambda\,s_\Sigma(e_2)+B\}.
\end{aligned}
\]
Reserve safety omits the final constraint. The certificate
\(g_i(\lambda;\Sigma)=\max_{u\in D_\Sigma(X_i;\lambda)}f_i(u)\)
is evaluated at polygon vertices, with \(g_i=-\infty\) for an empty polygon. The directed and baseline scores are
\[
    V_i^{\texttt{DISC-PS}}(\Sigma)=\inf\{\lambda\ge0:g_i(\lambda;\Sigma)\le0\},
    \qquad
    V_i^{\texttt{Base-PS}}(\Sigma)=(Y_i-\widehat m_i)^\top\Sigma(Y_i-\widehat m_i).
\]
The two-dimensional empirical-loss optimization uses Powell's derivative-free method \citep{powell1964efficient}, accommodating the nonsmooth conformal quantile and polygon active-set changes.




\subsection{Additional simulation results}
\label{appen:add_results_resource}

Figure~\ref{fig:resource_alpha} varies the inclusion level in the fixed-family benchmark. Larger \(\alpha\) permits larger decision subsets and lower costs, with the directed methods retaining an efficiency advantage. Figure~\ref{fig:resource_fopt_full_appendix} compares full-conformal optimized calibration, with fixed families as references, to assess the effect of augmented-ERM full conformal inversion.

\begin{figure}[!htbp]
    \centering
    \includegraphics[width=\linewidth]{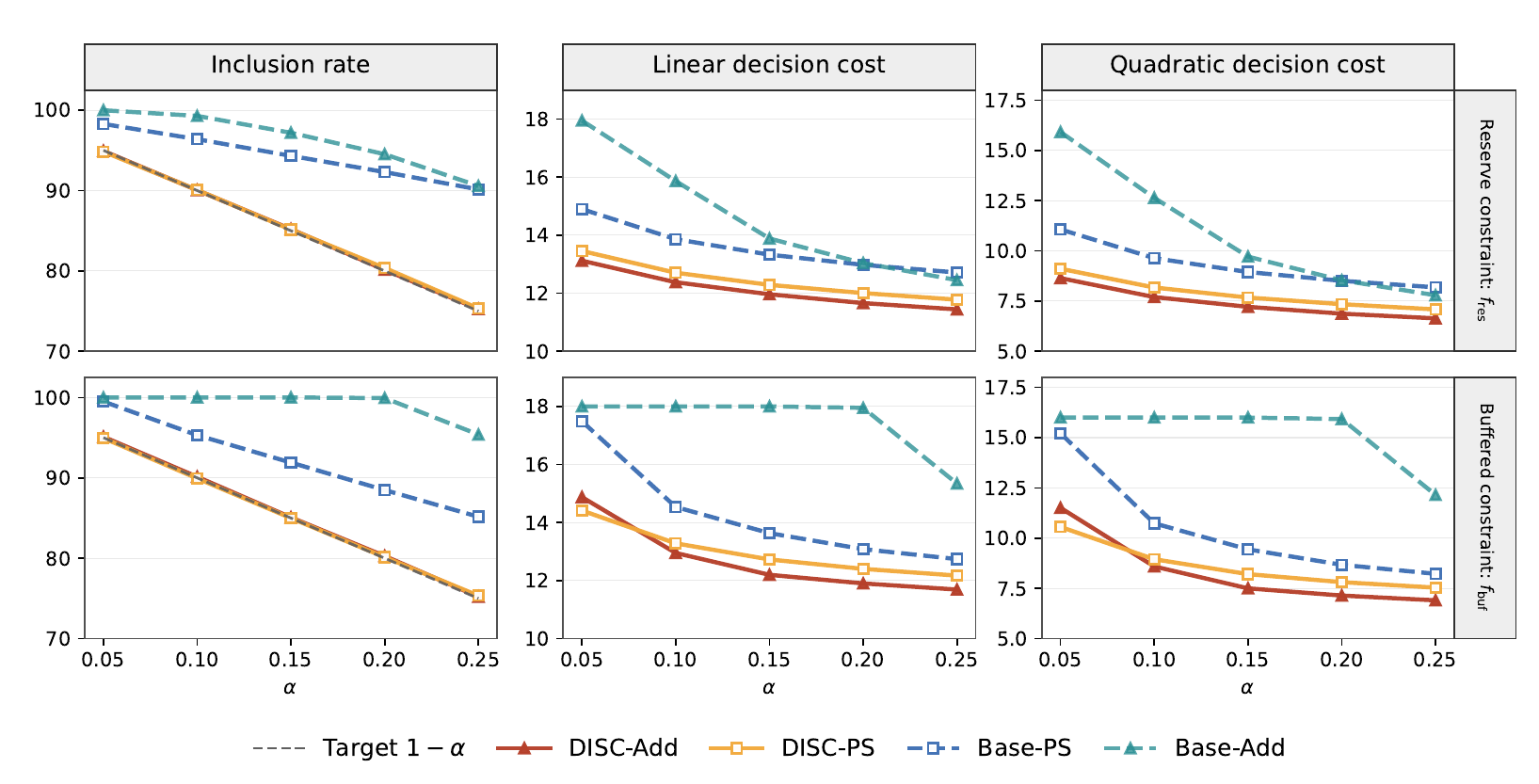}
    \caption{Sensitivity of each method to the inclusion level $ 1-\alpha$ in the fixed-family resource benchmark for $ f_{\rm res}$ and $ f_{\rm buf}$, with $ n=500$. Empty sets are counted as safe in the inclusion rate and are assigned the fallback allocation $ (8,8)$ when reporting decision cost.}
    \label{fig:resource_alpha}
\end{figure}

\begin{figure}[!htbp]
    \centering
    \includegraphics[width=\linewidth]{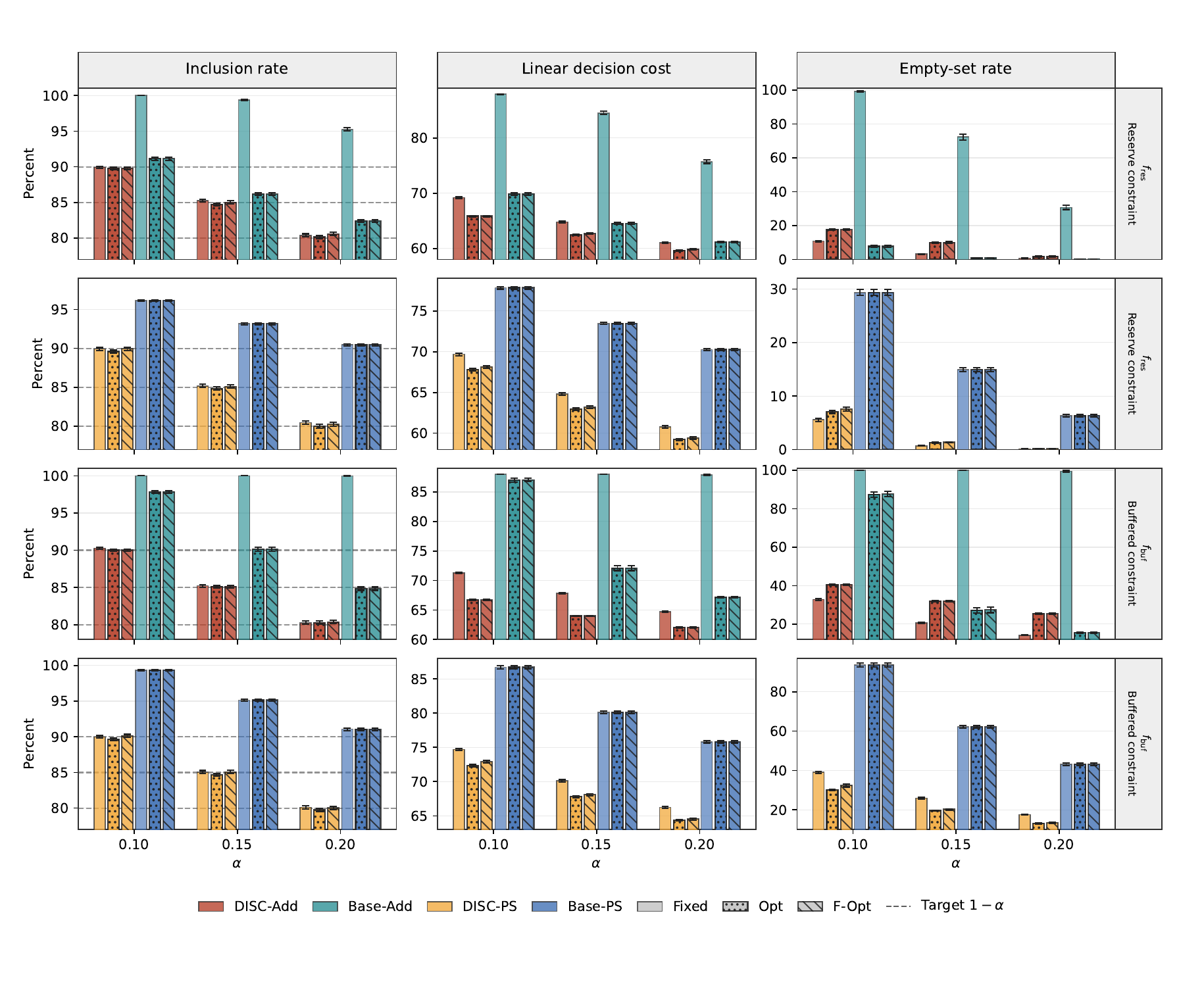}
    \caption{Full-conformal optimized-family benchmark for the resource-allocation experiment. The figure compares fixed, optimized, and full-conformal optimized versions of the additive residual-margin and prediction-set margin families under both \(f_{\rm res}\) and \(f_{\rm buf}\). Results are averaged over 100 repetitions with \(n=500\), \(n_{\rm test}=1000\), \(n_{\rm pre}=5000\), and linear decision cost \(c=(1,10)\).
    }
    \label{fig:resource_fopt_full_appendix}
\end{figure}

\section{The trajectory collision-avoidance simulation}
\label{appen:trajectory_collision_details}

\subsection{Data-generating process}
\label{appen:trajectory_dgp}

Let \(X\sim\mathrm{Unif}([-1,1]^3)\), \(T=10\), and \(\tau_t=(t-1)/(T-1)\). Conditional on \(X=x\), the obstacle trajectory is \(Y_t=m_t(x)+\varepsilon_t(x)+J_t\), where
\[
\begin{aligned}
    m_{t,1}(x)&=-1.2+2.4\tau_t+0.4x_1
        +0.2\sin\{2\pi(\tau_t+0.2x_2)\}+0.2x_3\tau_t,\\
    m_{t,2}(x)&=0.4\sin(\pi\tau_t)+0.6x_2
        +0.3x_3\cos(2\pi\tau_t)-0.2x_1\tau_t.
\end{aligned}
\]
With \(b_t=(1,\sin(\pi\tau_t),\cos(2\pi\tau_t))^\top\), the smooth noise is
\[
\begin{aligned}
    \varepsilon_{t,1}(x)&=s_1(x)a_1^\top b_t+\eta_{t,1},
    &s_1(x)&=0.10+0.10|x_1|,\\
    \varepsilon_{t,2}(x)&=s_2(x)a_2^\top b_t+\eta_{t,2},
    &s_2(x)&=0.15+0.15|x_2|+0.05\mathbbm 1\{x_3>0\}.
\end{aligned}
\]
The vectors \(a_1,a_2\in\mathbb R^3\) have independent \(t_4/\sqrt2\) entries, and \(\eta_t\sim N(0,0.04^2I_2)\) are independent local jitters. With probability \(0.08\), independently of \(X\), draw an equiprobable sign \(S\in\{-1,1\}\) and set \(J_t=(0,S\{0.50+0.50\sin(\pi\tau_t)\})\) for all \(t\); otherwise \(J_t=0\). An independent sample of size \(n_{\rm pre}=3000\) trains a GRU or LSTM predictor \(\widehat Y=\mathsf{ML}(X)\), held fixed across calibration-test repetitions and methods.

\subsection{Decision-system setting}
\label{appen:trajectory_decision_setting}

For \(u\in\mathcal U=[-3,3]^2\), the ego trajectory is \(Z_t(u;X)=a_t(X)+B_tu\), with \(h_t=\tau_t(1-\tau_t)\), \(s_t=\sin(\pi\tau_t)\), and
\[
    a_t(X)=\begin{pmatrix}
    -1.5+3.0\tau_t+0.2X_1+0.1X_3\tau_t\\
    -0.5+1.0\tau_t+0.2X_2-0.1X_1\tau_t
    \end{pmatrix},
    \qquad
    B_t=\begin{pmatrix}0.5h_t&0.3s_t\\0.2h_t&0.6s_t\end{pmatrix}.
\]
Define \(d_{it}(u)=\|Z_t(u;X_i)-Y_{it}\|_2^2\) and \(\widehat d_{it}(u)=\|Z_t(u;X_i)-\widehat Y_{it}\|_2^2\), and write
\[
\begin{aligned}
    d_{i,\rm avg}(u)&=T^{-1}\sum_t d_{it}(u),
    &d_{i,\min}(u)&=\min_t d_{it}(u),\\
    \widehat d_{i,\rm avg}(u)&=T^{-1}\sum_t\widehat d_{it}(u),
    &\widehat d_{i,\min}(u)&=\min_t\widehat d_{it}(u).
\end{aligned}
\]
For \(r\in\{{\rm avg},\min\}\), the constraint function is \(f_r(u;X_i,Y_i)=\omega-d_{i,r}(u)\), with baseline \(\omega=0.25\). The average distances are convex quadratics; the minimum distances are minima of timewise convex quadratics. Decision costs and empty-set handling follow Section~\ref{sec:simulation}.

The additive family is \(L_{\lambda,r}(u;X_i)=\omega-\widehat d_{i,r}(u)+\lambda\sigma(u)\). For the prediction-set family, use the RMS ball
\(C_{\rm avg}(X_i;\lambda)=\{y:(T^{-1}\sum_t\|y_t-\widehat Y_{it}\|_2^2)^{1/2}\le\lambda\}\)
or the sup-time ball
\(C_{\min}(X_i;\lambda)=\{y:\max_t\|y_t-\widehat Y_{it}\|_2\le\lambda\}\), respectively. For \(\lambda\ge0\), both induce
\[
    D_r^{\rm PS}(X_i;\lambda)
    =\{u\in\mathcal U:\sqrt{\widehat d_{i,r}(u)}\ge\sqrt\omega+\lambda\}.
\]

\subsection{Score computation}
\label{appen:trajectory_score_computation}

The additive scale is specified separately for the main-text and supplementary trajectory experiments:
\[
\begin{aligned}
    \sigma_{\rm main}(u)&=1+0.20(u_1+1)^2+0.05(u_2-0.5)^2,\\
    \sigma_{\rm old}(u)&=0.75+0.20(u_1+0.8)^2+0.045(u_2-0.4)^2.
\end{aligned}
\]
The main-text trajectory figures use \(\sigma=\sigma_{\rm main}\); the supplementary trajectory figures in this appendix use \(\sigma=\sigma_{\rm old}\). All additive scores and certificate bounds below use the scale corresponding to the experiment. The prediction-set families do not depend on \(\sigma\).

For either \(r\in\{{\rm avg},\min\}\), the exact directed scores are
\[
\begin{aligned}
    V_{i,r}^{\texttt{DISC-Add}}
    &=\sup_{\substack{u\in\mathcal U\\d_{i,r}(u)<\omega}}
    \frac{\widehat d_{i,r}(u)-\omega}{\sigma(u)},\\
    V_{i,r}^{\texttt{DISC-PS}}
    &=\left[\sup_{\substack{u\in\mathcal U\\d_{i,r}(u)<\omega}}
    \sqrt{\widehat d_{i,r}(u)}-\sqrt\omega\right]_+.
\end{aligned}
\]
For minimum-distance safety, the unsafe set is \(\bigcup_t\{u\in\mathcal U:d_{it}(u)<\omega\}\).
The baseline scores are
\[
\begin{aligned}
    V_{i,r}^{\texttt{Base-Add}}
    &=\sup_{u\in\mathcal U}\frac{\widehat d_{i,r}(u)-d_{i,r}(u)}{\sigma(u)},\\
    V_{i,\rm avg}^{\texttt{Base-PS}}
    &=\left(T^{-1}\sum_t\|Y_{it}-\widehat Y_{it}\|_2^2\right)^{1/2},
    \qquad
    V_{i,\min}^{\texttt{Base-PS}}=\max_t\|Y_{it}-\widehat Y_{it}\|_2.
\end{aligned}
\]
The directed scores used in the experiments are conservative upper bounds on the exact scores, computed as follows.

\subsection{Conservative box certificate}
\label{appen:trajectory_box_certificate}

Let \(\{B_m\}_{m=1}^M\) be the axis-aligned box cover of \(\mathcal U\) induced by the certificate grid, with default resolution \(321\times321\) unless a sensitivity experiment specifies otherwise. On each box, compute
\[
\begin{aligned}
    d^-_{i,\rm avg,m}&\le\inf_{u\in B_m}d_{i,\rm avg}(u),
    &\widehat d^+_{i,\rm avg,m}&\ge\sup_{u\in B_m}\widehat d_{i,\rm avg}(u),\\
    d^-_{itm}&\le\inf_{u\in B_m}d_{it}(u),
    &\widehat d^+_{itm}&\ge\sup_{u\in B_m}\widehat d_{it}(u),\\
    0<\sigma^-_m&\le\inf_{u\in B_m}\sigma(u),
    &\sigma^+_m&\ge\sup_{u\in B_m}\sigma(u).
\end{aligned}
\]
These are bounds on convex quadratics. Their box minima are obtained by checking feasible stationary points, clipped one-dimensional edge minimizers, and corners; their maxima occur at vertices. To handle both safety criteria, define
\[
\begin{aligned}
    \ell_{im}^{\rm avg}&=d^-_{i,\rm avg,m},
    &H_{im}^{\rm avg}&=\widehat d^+_{i,\rm avg,m},\\
    \ell_{im}^{\min}&=\min_t d^-_{itm},
    &H_{im}^{\min}&=\min_t\widehat d^+_{itm}.
\end{aligned}
\]
Thus \(\ell_{im}^r\le d_{i,r}(u)\) and \(\widehat d_{i,r}(u)\le H_{im}^r\) throughout \(B_m\), including for the minimum of timewise quadratics. Retain the potentially unsafe boxes \(\mathcal I_i^r=\{m:\ell_{im}^r\le\omega\}\). The two certificates are
\[
\begin{aligned}
    \widetilde V_{i,r}^{\texttt{DISC-Add}}
    &=\max_{m\in\mathcal I_i^r}
    \begin{cases}
    (H_{im}^r-\omega)/\sigma^-_m,&H_{im}^r\ge\omega,\\
    (H_{im}^r-\omega)/\sigma^+_m,&H_{im}^r<\omega,
    \end{cases}\\
    \widetilde V_{i,r}^{\texttt{DISC-PS}}
    &=\left[\max_{m\in\mathcal I_i^r}\sqrt{H_{im}^r}-\sqrt\omega\right]_+.
\end{aligned}
\]
If \(\mathcal I_i^r=\varnothing\), use \(-\infty\) and \(0\), respectively. The retained boxes cover every unsafe control, and the sign-dependent denominator gives an upper bound on the additive ratio. Hence both certificates satisfy \(\widetilde V_{i,r}\ge V_{i,r}\), so Theorem~\ref{thm:inclusion_certificate} applies to their conformal calibration.

\begin{figure}[!htbp]
    \centering
    \includegraphics[width=\linewidth]{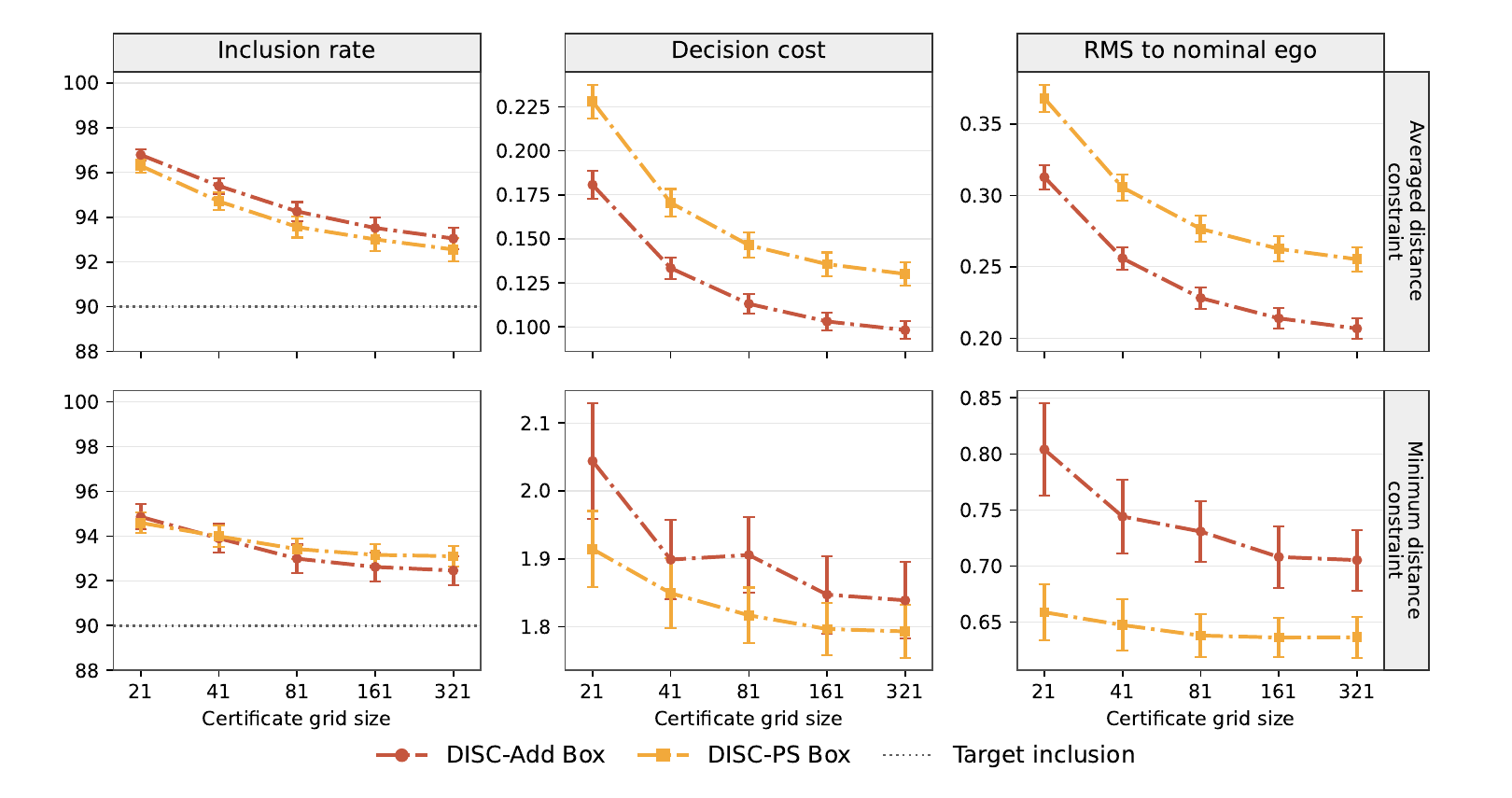}
    \caption{Sensitivity of the box certificate to the grid resolution. Rows correspond to the average-distance and minimum-distance safety criteria, both at \(\alpha=0.1\), and columns report inclusion rate, quadratic decision cost, and RMS distance to the nominal ego trajectory as the certificate grid is refined over \(21,41,81,161,321\).}
    \label{fig:trajectory_certificate_sensitivity_appendix}
\end{figure}

Figure~\ref{fig:trajectory_certificate_sensitivity_appendix} varies the certificate resolution while holding the statistical setting fixed. Inclusion remains close to the target across the tested resolutions; decision costs and RMS deviations stabilize as finer boxes reduce numerical conservatism.

\subsection{Additional simulation results}
\label{appen:trajectory_evaluation}

Figures~\ref{fig:trajectory_alpha_sensitivity_gru_appendix} and \ref{fig:trajectory_min_omega_sensitivity_gru_appendix} give GRU risk-level and safety-budget sweeps corresponding to the LSTM studies in Figures~\ref{fig:collision_trajectory_barplot} and \ref{fig:trajectory_min_omega_sensitivity_main}, with the supplementary additive scale specified above. They retain the qualitative pattern of inclusion near the target and lower decision costs for directed calibration.

\begin{figure}[H]
    \centering
    \includegraphics[width=\linewidth]{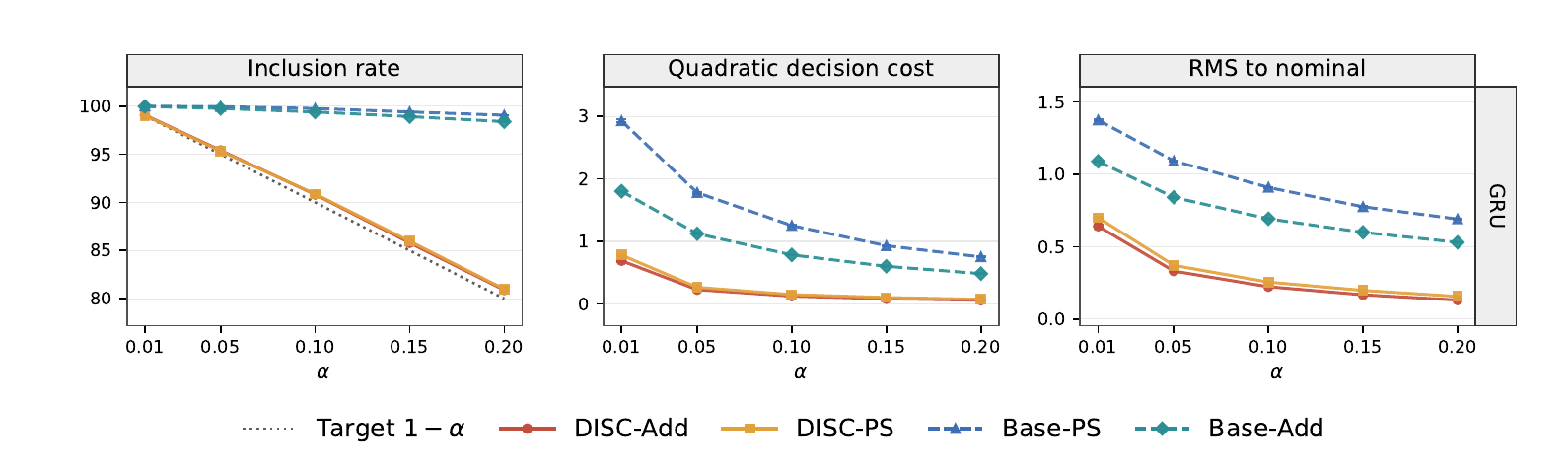}
    \caption{Sensitivity to the safety level \(\alpha\) in the average-distance safety setting with the GRU trajectory predictor, using \(n=500\), safety budget \(\omega=0.25\), and 100 repetitions. Columns report inclusion rate, decision cost, and RMS distance to the nominal ego trajectory. Empty decision sets are assigned the fallback decision cost \(4.5\); RMS summaries are computed over nonempty decision sets.}
    \label{fig:trajectory_alpha_sensitivity_gru_appendix}
\end{figure}

\begin{figure}[H]
    \centering
    \includegraphics[width=.95\linewidth]{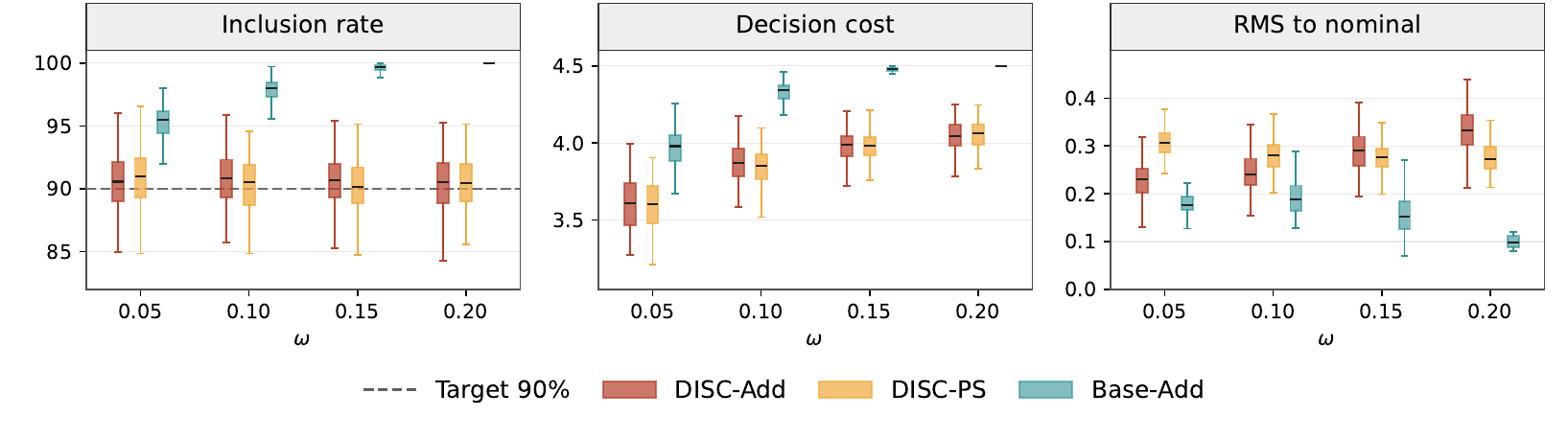}
    \caption{Sensitivity to the safety budget \(\omega\) in the minimum-distance safety setting with the GRU trajectory predictor, using \(n=200\), \(\alpha=0.1\), and 100 repetitions. Columns report inclusion rate, decision cost, and RMS distance to the nominal ego trajectory for \(\omega\in\{0.05,0.10,0.15,0.20\}\).}
    \label{fig:trajectory_min_omega_sensitivity_gru_appendix}
\end{figure}

The horizon sweep varies \(T\in\{5,10,15,20,25\}\), and the calibration-size sweep varies \(n\in\{50,100,200,400,600\}\). Figures~\ref{fig:trajectory_T_sensitivity_appendix} and \ref{fig:trajectory_n_cal_sensitivity_appendix} show that the directed methods retain their cost and RMS advantages across these settings; larger calibration samples mainly stabilize the repetition-level distributions.

\begin{figure}[!htbp]
    \centering
    \includegraphics[width=\linewidth]{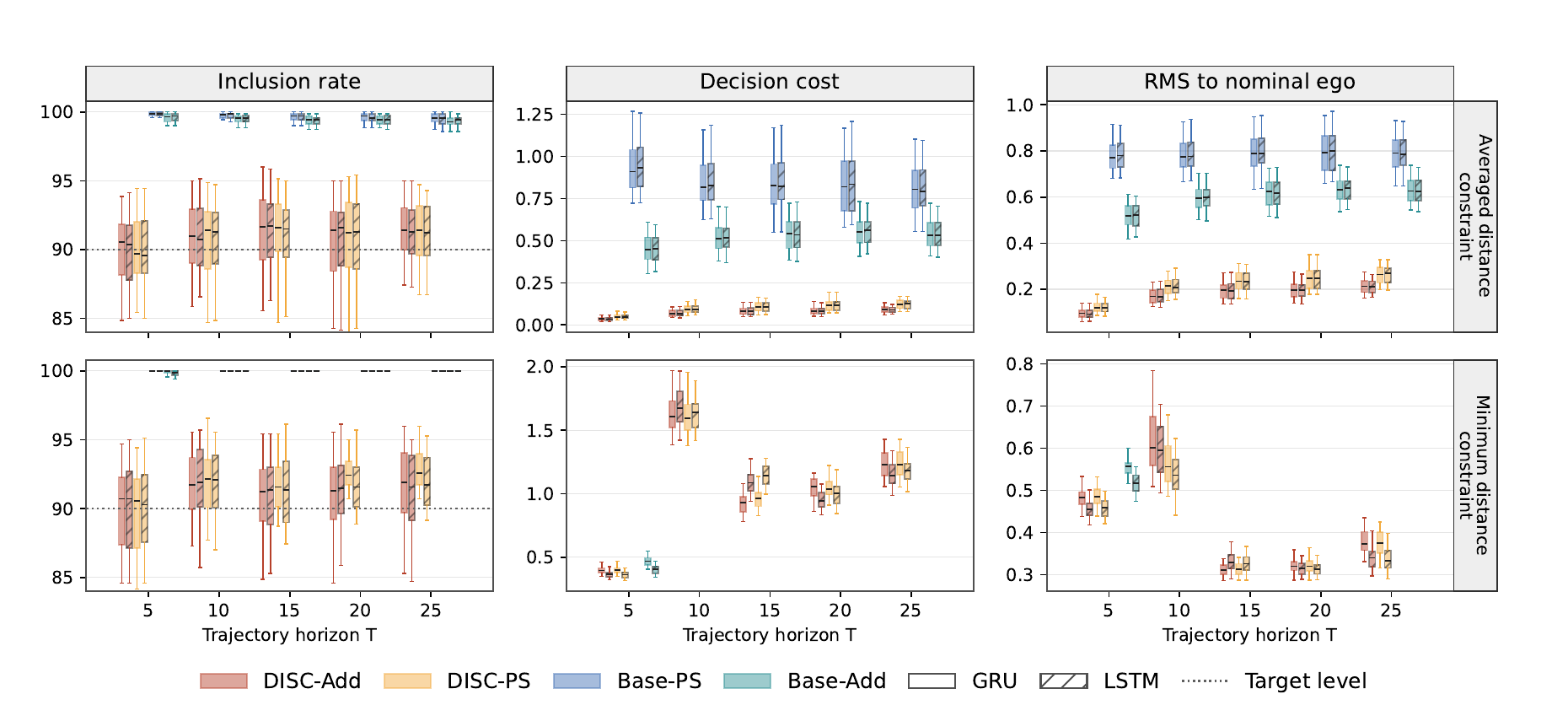}
    \caption{Sensitivity to the trajectory horizon \(T\), using \(\omega=0.25\), \(n=100\), \(\alpha=0.1\), and 100 repetitions.}
    \label{fig:trajectory_T_sensitivity_appendix}
\end{figure}

\begin{figure}[!htbp]
    \centering
    \includegraphics[width=\linewidth]{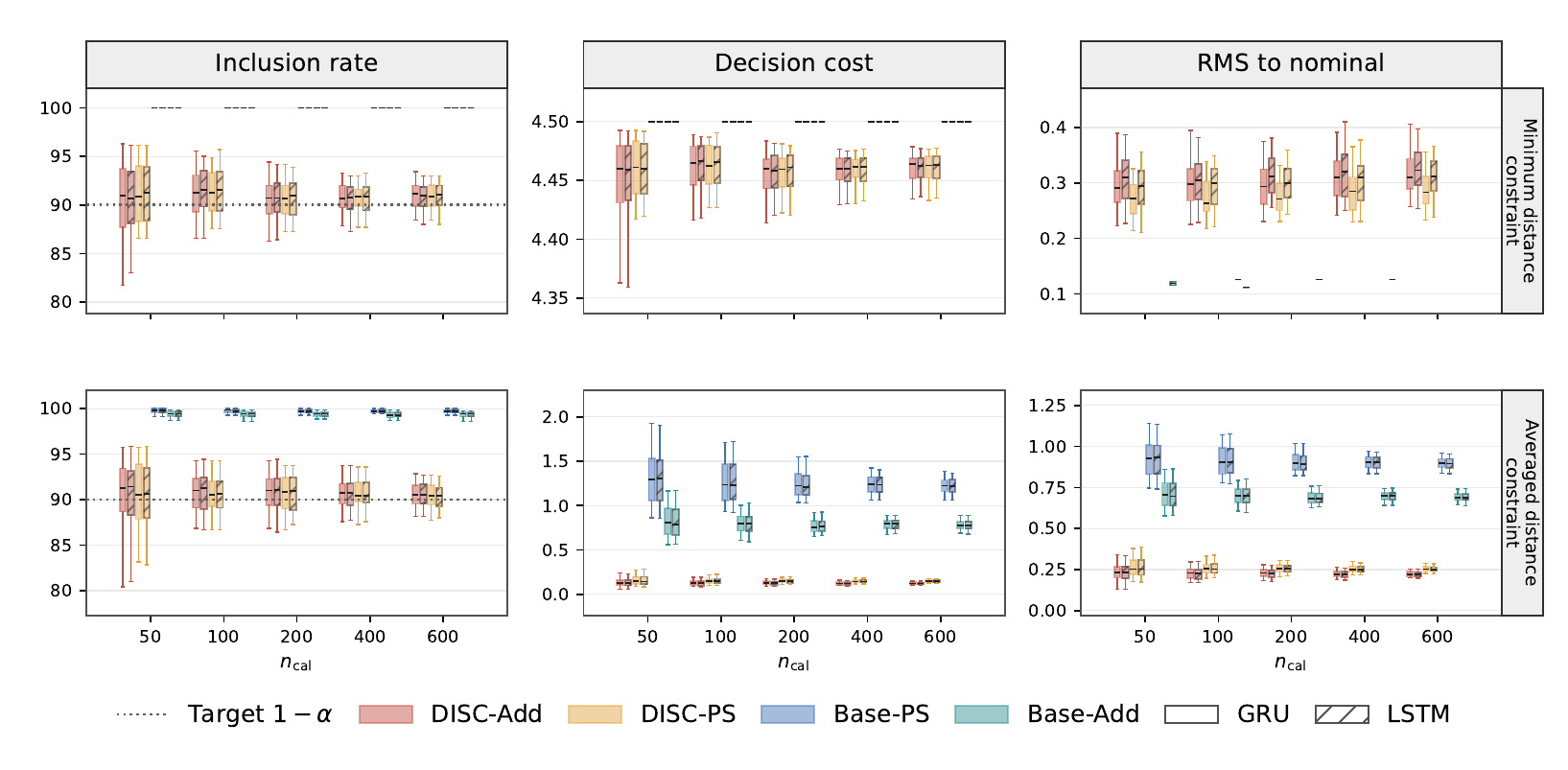}
    \caption{Sensitivity to the calibration size \(n\), using \(\omega=0.25\), \(\alpha=0.1\), grid size \(41\), and 100 repetitions.}
    \label{fig:trajectory_n_cal_sensitivity_appendix}
\end{figure}

\section{The LLM reasoning-chain verification experiment}
\label{appen:llm_cot_details}

This appendix describes implementation details of experiments in Section~\ref{sec:exp_llm_reasoning}. We use additive residual-margin families throughout, since there is no tractable conformal prediction set for the latent reasoning chain that can be propagated through the constraint function in closed form. We compare four directed-calibration procedures, \texttt{DISC-Const}, \texttt{DISC-Mean}, \texttt{DISC-Linear}, and \texttt{DISC-Sqrt}, with their uniform-residual counterparts, \texttt{Base-Const}, \texttt{Base-Mean}, \texttt{Base-Linear}, and \texttt{Base-Sqrt}.

\subsection{Detailed settings}

\phantomsection\label{appen:llm_graph_construction}
\paragraph*{Generation, decomposition, and candidate output family.}
We use Qwen-Flash for response generation, claim decomposition, dependency graph induction, claim confidence scoring, and reference-based claim judging, with stage-specific prompts. Following \citet{rubin2025conformal}, we represent each generated response as a reasoning graph. For each prompt $X$, a base LLM first generates one complete response by a fixed ``solve step by step'' instruction. We then decompose the response into the ordered subclaims $(\mathsf{LLM}_t(X))_{1\leq t\leq T(X)}$, where $T(X)$ is the number of subclaims. Next, an LLM induces a directed dependency graph $G(X)=([T(X)],E(X))$ over these subclaims, where $[T(X)]$ is the set of all subclaim indices and $E(X)$ is the set of directed edges. An edge $(s,t)\in E(X)$, written $s\to t$, means that subclaim $\mathsf{LLM}_t(X)$ directly depends on subclaim $\mathsf{LLM}_s(X)$. The graph-induction prompt and its in-context examples are reported in Appendix~\ref{appen:prompt_graph_induction}.
For each $t\in[T(X)]$, define its strict ancestor set by
\begin{gather*}
\operatorname{Anc}(t;G(X))=
\left\{s\in[T(X)]\setminus\{t\}:\begin{array}{l}
\text{there exist }m\geq1\text{ and }v_0=s,v_1,\ldots,v_m=t\\[-2pt]
\text{such that }(v_{\ell-1},v_\ell)\in E(X)\text{ for every }\ell\in[m]
\end{array}\right\}.
\end{gather*}
Thus, $s\in\operatorname{Anc}(t;G(X))$ exactly when a directed path from $\mathsf{LLM}_s(X)$ to $\mathsf{LLM}_t(X)$ exists; direct and transitive dependencies are both included. For $\mathcal S\subseteq[T(X)]$, write $u_{\mathcal S}=(\mathsf{LLM}_t(X))_{t\in\mathcal S}$ for the output retaining the subclaims indexed by $\mathcal S$ in their original order. Since the space of all possible reasoning chains is too large to search, we restrict attention to the dependency-closed retained outputs
\begin{gather*}
\gU(X)=
\left\{u_{\mathcal S}:\mathcal S\subseteq[T(X)],\ 
\operatorname{Anc}(t;G(X))\subseteq\mathcal S\ \text{for every }t\in\mathcal S \right\}.
\end{gather*}
Thus, the decision is not to edit individual subclaims, but to select a dependency-closed subset of the generated reasoning chain. The complete response is $u_{[T(X)]}$.

\paragraph*{Reference-based claim labels and the oracle feasible set.}
The MATH data provide a problem statement and one official solution, but they do not contain a unique ground-truth chain of thought. We therefore obtain subclaim-level reference labels from a fixed reference-judge LLM that receives the problem, the official solution, the ordered subclaims, and their dependency graph; Appendix~\ref{appen:prompt_claim_labeling} gives the complete prompt. Let $Y=(Y_t)_{1\leq t\leq T(X)}$ denote the resulting labels, where $Y_t=1$ means that $\mathsf{LLM}_t(X)$ is judged mathematically acceptable and $Y_t=0$ otherwise. For $u_{\mathcal S}\in\gU(X)$, define the constraint function by
\begin{gather*}
f(u_{\mathcal S};X,Y)=1-2\min_{t\in\mathcal S}Y_t\in\{-1,1\}.
\end{gather*}
We adopt the convention $\min_{t\in\emptyset}Y_t=1$, so the empty retained output contains no unacceptable subclaim and has $f(u_\emptyset;X,Y)=-1$. Hence $f(u_{\mathcal S};X,Y)\leq0$ if and only if every retained subclaim receives an acceptable label from the fixed reference-judge pipeline. The oracle feasible set is therefore $A(X,Y)=\{u_{\mathcal S}\in\gU(X):f(u_{\mathcal S};X,Y)\leq0\}$,
namely, the set of dependency-closed retained outputs that contain no unacceptable claim.

\paragraph*{Predicted claim scores and subset scores.}
On the prediction side, a verifier LLM receives the problem, the ordered subclaims, and the dependency graph, and returns a confidence score $\widehat Y_t(X)\in[0,1]$ for each $\mathsf{LLM}_t(X)$, where larger values indicate greater confidence that the subclaim is acceptable. To reduce overconfident scoring, the verifier is instructed to reserve values near one for clearly correct subclaims and values near zero for clearly incorrect subclaims, while using intermediate values for uncertain or underspecified steps. Replacing $Y_t$ in the constraint function with $\widehat Y_t(X)$ gives the pretrained surrogate constraint function
\begin{gather*}
\hat f(u_{\mathcal S};X)=1-2\min_{t\in\mathcal S}\widehat Y_t(X)\in[-1,1].
\end{gather*}
We similarly take the minimum over the empty set to be one, so that $\hat f(u_\emptyset;X)=-1$. Thus, a retained output receives a large surrogate constraint value as soon as one of its constituent subclaims is judged unreliable.

\paragraph*{Directed additive residual-margin families.}
We use the common additive residual-margin family
\begin{gather*}
    D(X;\lambda,\mu)=\{u_\emptyset\}\cup\{u_{\mathcal S}\in\gU(X):\hat f(u_{\mathcal S};X)+\mu(u_{\mathcal S};X)+\lambda\leq0\},
\end{gather*}
where $\mu(u_{\mathcal S};X)$ is an additive adjustment. The constant specification is $\mu_{\rm const}(u_{\mathcal S};X)\equiv0$. Directed calibration with this specification gives \texttt{DISC-Const}, which is equivalent to the thresholding rule of \citet{rubin2025conformal}. The three adaptive specifications used by \texttt{DISC-Mean}, \texttt{DISC-Linear}, and \texttt{DISC-Sqrt} are defined below.

\paragraph*{Safe feasible set construction and calibration.}
For calibration data $\{(X_i,Y_i)\}_{i=1}^{n}$, let $T_i=T(X_i)$ and write $Y_i=(Y_{it})_{1\leq t\leq T_i}$. For sample $i$, the notation $u_{\mathcal S}$ refers to $(\mathsf{LLM}_t(X_i))_{t\in\mathcal S}$. As in Section~\ref{sec:method}, the directed calibration score for an adjustment $\mu$ is
\begin{align*}
V_i^{\texttt{DISC}}(\mu)
&=\inf\{\lambda\in\sR:D(X_i;\lambda,\mu)\subseteq A(X_i,Y_i)\}\\
&=\inf\left\{\lambda\in\sR:
\begin{array}{l}
\min_{t\in\mathcal S}Y_{it}=1\ \text{for every }u_{\mathcal S}\in\gU(X_i)\text{ such that}\\[-2pt]
\displaystyle \min_{t\in\mathcal S}\widehat Y_t(X_i)\geq\frac{1+\mu(u_{\mathcal S};X_i)+\lambda}{2}
\end{array}
\right\}.
\end{align*}
For the constant specification $\mu_{\rm const}$, let $\mathcal B_i=\{t\in[T_i]:Y_{it}=0\}$ be the index set of unacceptable subclaims, and let $\mathcal C_i(t)=\operatorname{Anc}(t;G(X_i))\cup\{t\}$ be the minimal dependency-closed index set containing $t$. 
If $\mathcal B_i=\emptyset$, every candidate output is acceptable and hence $V_i^{\texttt{DISC}}(\mu_{\rm const})=-\infty$. 
Otherwise, every unsafe candidate output $u_{\mathcal S}$ has $\mathcal C_i(t)\subseteq\mathcal S$ for at least one $t\in\mathcal B_i$. Since adding subclaims can only decrease the bottleneck confidence score, the largest exclusion threshold is attained by one of these minimal unsafe outputs. Therefore,
\begin{gather*}
V_i^{\texttt{DISC}}(\mu_{\rm const})=\max_{t\in\mathcal B_i}\left\{2\min_{s\in\mathcal C_i(t)}\widehat Y_s(X_i)-1\right\},\qquad \mathcal B_i\neq\emptyset.
\end{gather*}
For the adaptive \texttt{DISC} variants, let $d(X)\in[n_d]$ denote the benchmark-provided difficulty level, where level $1$ is the easiest and level $n_d$ is the hardest. We use the mean of the finite directed exclusion thresholds $V_i^{\texttt{DISC}}(\mu_{\rm const})$ within each difficulty level to construct a difficulty-specific additive adjustment. Specifically, define
\begin{gather*}
\overline\mu_g=\dfrac{\sum_{i=1}^{n}V_i^{\texttt{DISC}}(\mu_{\rm const})\mathbbm{1}\left(V_i^{\texttt{DISC}}(\mu_{\rm const})>-\infty,\ d(X_i)=g\right)}{\sum_{i=1}^{n}\mathbbm{1}\left(V_i^{\texttt{DISC}}(\mu_{\rm const})>-\infty,\ d(X_i)=g\right)},\qquad g\in[n_d],
\end{gather*}
where we adopt the convention $(-\infty)\cdot0=0$. 
If a difficulty group contains no finite constant score $V_i^{\texttt{DISC}}(\mu_{\rm const})$, we use the mean over all finite \texttt{DISC-Const} scores as $\overline\mu_g=\frac{\sum_{i=1}^{n}V_i^{\texttt{DISC}}(\mu_{\rm const})\mathbbm{1}\left( V_i^{\texttt{DISC}}(\mu_{\rm const})>-\infty \right)}{\sum_{i=1}^{n}\mathbbm{1}\left( V_i^{\texttt{DISC}}(\mu_{\rm const})>-\infty\right)}$. We then consider three choices of $\mu$. The first uses only the difficulty group:
\begin{gather*}
\mu_{\rm mean}(u_{\mathcal S};X)=\overline\mu_{d(X)}.
\end{gather*}
The second adds a linearly centered length correction:
\begin{gather*}
\mu_{\rm lin}(u_{\mathcal S};X)=\overline\mu_{d(X)}-\left|\overline\mu_{d(X)}\right|\left(\frac{|\mathcal S|}{T(X)}-\frac12\right),
\end{gather*}
and the third replaces the linear term by a square-root correction:
\begin{gather*}
\mu_{\rm sqrt}(u_{\mathcal S};X)=\overline\mu_{d(X)}-\left|\overline\mu_{d(X)}\right|\left(\sqrt{\frac{|\mathcal S|}{T(X)}}-\sqrt{\frac12}\right).
\end{gather*}
Both length-dependent adjustments are nonincreasing in the retained fraction and equal $\overline\mu_{d(X)}$ at $|\mathcal S|/T(X)=1/2$, so they relax the margin condition for longer outputs relative to the mean adjustment at a fixed $\lambda$.
Because $\overline\mu_g$ and the final conformal quantile are estimated from the same calibration sample, these three variants have the same-sample adaptive form covered by the analysis in Theorem~\ref{thm:direct_concentration_inclusion}; \texttt{DISC-Const} instead uses a family fixed before calibration. For a subset-dependent adjustment, we check all unsafe dependency-closed outputs to determine the score. Thus, $V_i^{\texttt{DISC}}(\mu)=-\infty$ when $\mathcal B_i=\emptyset$, while
\begin{gather*}
V_i^{\texttt{DISC}}(\mu)=\max_{u_{\mathcal S}\in\gU(X_i):\mathcal S\cap\mathcal B_i\neq\emptyset}\left\{2\min_{t\in\mathcal S}\widehat Y_t(X_i)-1-\mu(u_{\mathcal S};X_i)\right\},\qquad \mathcal B_i\neq\emptyset.
\end{gather*}
The directed threshold is $\hat\lambda_{\texttt{DISC},\mu}=\mathsf{Q}_{1-\alpha}\LRs{\{V_i^{\texttt{DISC}}(\mu)\}_{i=1}^{n},\infty}$. For an unsafe output $u_{\mathcal S}$, exclusion from $D(X_i;\lambda,\mu)$ requires $\lambda>2\min_{t\in\mathcal S}\widehat Y_t(X_i)-1-\mu(u_{\mathcal S};X_i)$. Thus, the infimum defining the score need not itself exclude all unsafe outputs. We remove outputs on the margin boundary and return
\begin{gather*}
\widehat D^{\texttt{DISC}}(X;\mu)
=\{u_\emptyset\}\cup
\left\{u_{\mathcal S}\in\gU(X):
\hat f(u_{\mathcal S};X)+\mu(u_{\mathcal S};X)+\hat\lambda_{\texttt{DISC},\mu}<0
\right\}.
\end{gather*}
The strict inequality implements the boundary-exclusion convention while retaining the empty output.

For \texttt{DISC-Const} and \texttt{DISC-Mean}, the adjustment does not depend on the retained-subset size, so the score can be computed by checking the minimal dependency closure associated with each unacceptable subclaim. For the length-dependent \texttt{DISC-Linear} and \texttt{DISC-Sqrt} variants, the score is instead computed over all unsafe dependency-closed candidate outputs, as specified above.
The inclusion check is carried out over dependency-closed candidate outputs induced by the claim graph, using the boundary-exclusion convention above.

\paragraph*{Uniform-residual baselines.}
For comparison, we calibrate the same four margin specifications using the uniform residual score
\begin{gather*}
V_i^{\texttt{Base}}(\mu)=\sup_{u_{\mathcal S}\in\gU(X_i)}\left\{f(u_{\mathcal S};X_i,Y_i)-\hat f(u_{\mathcal S};X_i)-\mu(u_{\mathcal S};X_i)\right\}.
\end{gather*}
Using $\mu_{\rm const}$ gives \texttt{Base-Const}, while $\mu_{\rm mean}$, $\mu_{\rm lin}$, and $\mu_{\rm sqrt}$ give \texttt{Base-Mean}, \texttt{Base-Linear}, and \texttt{Base-Sqrt}, respectively. For a given $\mu$, the threshold and safe feasible set are
\begin{gather*}
\hat\lambda_{\texttt{Base},\mu}=\mathsf{Q}_{1-\alpha}\LRs{\{V_i^{\texttt{Base}}(\mu)\}_{i=1}^{n},\infty},
\qquad
\widehat D^{\texttt{Base}}(X;\mu)=D(X;\hat\lambda_{\texttt{Base},\mu},\mu).
\end{gather*}
Substituting the definitions of $f$ and $\hat f$, the base score can be written explicitly as
\begin{gather}
V_i^{\texttt{Base}}(\mu) = \max_{u_{\mathcal S}\in\gU(X_i)} \left\{ 2\min_{t\in\mathcal S}\widehat Y_t(X_i) - 2\min_{t\in\mathcal S}Y_{it} - \mu(u_{\mathcal S};X_i) \right\}.\label{eq:llm_base_score}
\end{gather}
Thus, unlike the directed score, the base score maximizes over both safe and unsafe dependency-closed outputs. For any unsafe $u_{\mathcal S}\in\gU(X_i)$, the expression in \eqref{eq:llm_base_score} reduces to $2\min_{t\in\mathcal S}\widehat Y_t(X_i)-\mu(u_{\mathcal S};X_i)$. Hence $V_i^{\texttt{Base}}(\mu)\geq V_i^{\texttt{DISC}}(\mu)+1$ when $\mathcal B_i\neq\emptyset$. When $\mathcal B_i=\emptyset$, the directed score is $-\infty$. Consequently, if the two procedures use the same fitted adjustment $\mu$ and calibration sample, their finite thresholds satisfy $\hat\lambda_{\texttt{Base},\mu}\geq\hat\lambda_{\texttt{DISC},\mu}+1$. Every nonempty output $u_{\mathcal S}\in\widehat D^{\texttt{Base}}(X;\mu)$ therefore satisfies
\begin{gather*}
\hat f(u_{\mathcal S};X)+\mu(u_{\mathcal S};X)+\hat\lambda_{\texttt{DISC},\mu}
\leq \hat\lambda_{\texttt{DISC},\mu}-\hat\lambda_{\texttt{Base},\mu}
\leq-1<0.
\end{gather*}
Thus, it also satisfies the strict margin condition for $\widehat D^{\texttt{DISC}}(X;\mu)$. Since both subsets contain $u_\emptyset$, we obtain $\widehat D^{\texttt{Base}}(X;\mu)\subseteq\widehat D^{\texttt{DISC}}(X;\mu)$. The same containment holds at infinite thresholds: a directed threshold of $-\infty$ retains all candidate outputs, while a base threshold of $+\infty$ retains only $u_\emptyset$.

\paragraph*{Method comparison.}
The experiment compares directed and uniform-residual calibration under four shared adjustment specifications:
\begin{itemize}
    \item \texttt{DISC-Const} and \texttt{Base-Const} use $\mu_{\rm const}$;
    \item \texttt{DISC-Mean} and \texttt{Base-Mean} use $\mu_{\rm mean}$;
    \item \texttt{DISC-Linear} and \texttt{Base-Linear} use $\mu_{\rm lin}$;
    \item \texttt{DISC-Sqrt} and \texttt{Base-Sqrt} use $\mu_{\rm sqrt}$.
\end{itemize}
These comparisons assess the effects of the calibration rule and the choice of adjustment. Table~\ref{tab:llm_reasoning_resultsp05} reports the four directed methods at $\alpha=0.05$. The uniform-residual methods are omitted from the table because they return empty outputs and achieve an empirical inclusion rate of 1.

\begin{table}[htbp]
    \centering
    \caption{LLM reasoning-chain verification on MATH at $\alpha=0.05$. Boldface and underlining indicate the largest and second-largest values among the retention metrics.}
    \label{tab:llm_reasoning_resultsp05}
    \begin{tabular}{lcccc}
        \toprule
        Method & Inclusion & Mean len $\uparrow$ & Mean frac $\uparrow$ & Final kept $\uparrow$ \\
        \midrule
        \texttt{DISC-Const} & 0.9638 & 6.36 & 0.5622 & 0.3697 \\
        \texttt{DISC-Mean} & 0.9505 & \textbf{7.68} & \textbf{0.6614} & 0.4733 \\
        \texttt{DISC-Linear} & 0.9497 & 7.06 & 0.6210 & \textbf{0.7840} \\
        \texttt{DISC-Sqrt} & 0.9496 & \underline{7.26} & \underline{0.6412} & \underline{0.7519} \\
        \bottomrule
    \end{tabular}
\end{table}

\phantomsection\label{appen:llm_evaluation_metrics}
\paragraph*{Evaluation metrics.}
For a safe feasible set $\widehat D(X)$, define the inclusion indicator and a maximum-retention output by
\begin{align*}
I_{\rm inc}(X,Y)&=\mathbbm{1}\{\widehat D(X)\subseteq A(X,Y)\},\\
\widehat{\mathcal S}(X)&\in\arg\max_{\mathcal S}\left\{|\mathcal S|:u_{\mathcal S}\in\widehat D(X)\right\},
\qquad \widehat u(X)=u_{\widehat{\mathcal S}(X)}.
\end{align*}
For the four adjustment specifications above, $\mu(u_{\mathcal S};X)$ is nonincreasing in $|\mathcal S|$ at a fixed $X$. Taking the union of two admissible index sets preserves both dependency closure and the margin condition. Hence the maximum-cardinality index set $\widehat{\mathcal S}(X)$ is unique.
The retained length and fraction are $L(X)=|\widehat{\mathcal S}(X)|$ and $R(X)=L(X)/T(X)$. The final-claim indicator is $I_{\rm final}(X)=\mathbbm{1}\{T(X)\in\widehat{\mathcal S}(X)\}$. Thus, `Inclusion', `Mean len.', `Mean frac.', and `Final kept' are the test-sample averages of $I_{\rm inc}$, $L$, $R$, and $I_{\rm final}$, respectively.
All reported entries are then averaged over the repeated random splits.

\subsection{Prompt templates}

This subsection records the prompt templates used in the LLM pipeline described above. We report the stable system prompt together with the user-side template for each stage. Placeholders such as \verb|{problem}|, \verb|{response}|, \verb|{numbered_claims}|, and \verb|{edge_text}| denote the corresponding instance-specific fields inserted at runtime.

\subsubsection{CoT generation}\label{appen:prompt_cot_generation}
We first prompt a base LLM to produce one complete step-by-step solution.

\begin{quote}
\footnotesize\setstretch{1}
\begin{verbatim}
[System]
You are a careful math solver. Solve the user's problem step by step, 
keep the reasoning explicit, compact and brief, and end with the final answer.

[User]
Problem:
{problem}
\end{verbatim}
\end{quote}

\subsubsection{Claim decomposition}\label{appen:prompt_claim_decomposition}
The generated response is then decomposed into an ordered list of short atomic subclaims.

\begin{quote}
\footnotesize\setstretch{1}
\begin{verbatim}
[System]
You are a precise decomposition assistant. Split a math solution into an 
ordered list of atomic subclaims. Preserve order. Do not add new 
information. Each subclaim should contain one mathematical assertion, 
transformation, or conclusion. Keep each subclaim short and direct, 
ideally one sentence. Return JSON only in the form 
{"claims":[{"id":1,"text":"..."},{"id":2,"text":"..."}]}.

[User]
Decompose the following response. Return JSON only.

Problem:
{problem}

Response:
{response}
\end{verbatim}
\end{quote}

\subsubsection{Dependency graph induction}\label{appen:prompt_graph_induction}
Next, we induce a dependency graph over the ordered claims. The user prompt includes two in-context examples.

\begin{quote}
\footnotesize\setstretch{1}
\begin{verbatim}
[System]
You create dependency graphs over ordered subclaims. If claim j depends on 
claim i, include edge i->j. Include implicit dependencies. Keep 
dependencies even if the reasoning is wrong. A priori claims may have no 
ancestors. Never include self-loops. Output only an edge list like 
[1->2,2->4,3->4], or [] if there are no edges.

[User]
Build the dependency graph for these subclaims. Return only the edge list.

Example 1:
Question: How many vertical asymptotes does the graph of 
y = x / (x^2 + 1) have?
NUM = 4
Subclaims:
1. A function has vertical asymptotes where its denominator equals zero.
2. We must solve x^2 + 1 = 0.
3. For all real x, x^2 + 1 > 0.
4. Therefore the function has no vertical asymptotes.
Desired Output:
[1->2,2->3,2->4,3->4]

Example 2:
Question: Consider y = x^2 + 2x + 15. What is the sum of the zeroes?
NUM = 5
Subclaims:
1. The zeroes are the x-values of the x-intercepts.
2. Set x^2 + 2x + 15 = 0.
3. Factor it as (x + 3)(x - 5) = 0.
4. Therefore the zeroes are -3 and 5.
5. Therefore the sum is 2.
Desired Output:
[1->2,2->3,3->4,4->5]

Question:
{problem}

NUM = {n}
Subclaims:
{numbered_claims}

Output:
\end{verbatim}
\end{quote}

\subsubsection{Claim confidence scoring}\label{appen:prompt_claim_scoring}
We then ask a verifier LLM to assign a seperate confidence score in $[0,1]$ to each claim.

\begin{quote}
\footnotesize\setstretch{1}
\begin{verbatim}
[System]
You are a cautious math verifier. Given a problem and a student's 
decomposed claims, assign each claim an independent confidence score in 
[0,1] for how likely it is to be correct relative to the problem. Use the 
dependency edges only as context. Avoid overconfidence: reserve very high 
scores for claims that are clearly correct, reserve very low scores for 
claims that are clearly wrong, and use mid-range scores when a claim is 
uncertain, incomplete, or underspecified. Return JSON only in the form 
{"scores":[{"id":1,"score":0.73},{"id":2,"score":0.15}]}.

[User]
Below is a math problem and a student's answer decomposed into claims. 
The dependency edges are provided only as context.

Problem:
{problem}

Claims:
{numbered_claims}

Edges:
{edge_text}

Return JSON only.
\end{verbatim}
\end{quote}

\subsubsection{Claim label judging}\label{appen:prompt_claim_labeling}
Finally, we construct claim-level reference labels by judging each generated claim against the official solution.

\begin{quote}
\footnotesize\setstretch{1}
\begin{verbatim}
[System]
You are a strict math judge. Given a problem, an official solution, a 
student's decomposed claims, and their dependency edges, judge each claim 
independently as acceptable or not. Use the dependency edges only as 
context. Label 1 only if the claim is mathematically correct and 
compatible with the problem and the reference solution. Label 0 if the 
claim is wrong, unsupported, or too vague to accept as a mathematical 
step. Return JSON only in the form 
{"labels":[{"id":1,"label":1},{"id":2,"label":0}]}.

[User]
Below is a math problem, an official solution, and a student's decomposed 
claims. Judge each claim as acceptable (1) or not acceptable (0).

Problem:
{problem}

Official solution:
{reference_solution}

Claims:
{numbered_claims}

Edges:
{edge_text}

Return JSON only.
\end{verbatim}
\end{quote}

In implementation, when a JSON-formatted response appeared truncated or malformed, we used a short retry instruction requesting compact valid JSON and, if needed, one additional JSON-repair call. These recovery prompts are implementation details for robustness and are not part of the main task definitions above.

\section{The safe object detection experiment}
\label{appen:object_detection_details}
\label{appen:safe_obj_detect}

We use the experimental setup and notation of Section~\ref{sec:safe_obj_detect}. This appendix details detector preprocessing, offset training, calibration scores, and evaluation metrics, followed by additional results for $\gamma\in\{0.5,0.6,0.7\}$, including raw detector baselines.

\subsection{Implementation details}

This subsection gives the implementation details for the preceding object-level experiment. We use a fixed pretrained detector and construct candidate feasible sets from confidence-ordered prefixes using a scalar nestedness level $\lambda$. The comparison includes four conformal methods, namely \texttt{DISC-Const}, \texttt{DISC-Train}, \texttt{Base-Const}, and \texttt{Base-Train}, together with five raw detector baselines obtained by direct confidence thresholding.

\paragraph*{Dataset, detector, and object-level samples.}
We use the COCO 2017 validation set together with a fixed pretrained YOLOv11n detector. We focus on person, vehicle, and traffic-facility classes, corresponding to COCO category ids $\{1\}\cup\{2,3,4,5,6,7,8,9\}\cup\{10,11,13,14,15\}$. We retain only target-class ground-truth objects whose normalized area is at least $0.001$, and discard images containing no such object after this filtering. Let $N_{\rm img}=3205$ denote the resulting number of eligible images. Within each repetition, $\mathcal I_{\rm pre}$, $\mathcal I_{\rm cal}$, and $\mathcal I_{\rm test}$ denote the disjoint image-index sets used for pre-training, calibration, and testing, respectively. Splitting is performed at the image level, so all objects from the same image remain in the same split.

For each such image $X$, the detector is first run with confidence threshold $0$ and maximum output size $300$. We then restrict its outputs to the target classes, sort them by confidence, and retain at most the first $100$ candidates. Let $K(X)\leq100$ denote the resulting number of candidates, written as $(\widehat b_k(X),\widehat c_k(X),\widehat a_k(X))_{k=1}^{K(X)}$, 
where $\widehat b_k(X)\in[0,1]^4$ is the normalized predicted box, $\widehat c_k(X)\in[0,1]$ is its confidence score, and $\widehat a_k(X)$ is its predicted class label. The candidates satisfy $\widehat c_1(X)\geq\cdots\geq\widehat c_{K(X)}(X)$, with ties broken deterministically.
For image $X_i$, the retained ground-truth objects are $Y_{ij}=(b_{ij},a_{ij})$, $j=1,\ldots,T(X_i)$. They induce the object-level pairs $(X_i,Y_{ij})$ used for calibration and evaluation.

\paragraph*{Safety event for one object.}
Fix one object-level pair $(X,Y)$ with $Y=(b,a)$. For $k\in[K(X)]$, let $u_{[k]}(X)=(\widehat b_\ell(X),\widehat c_\ell(X),\widehat a_\ell(X))_{\ell=1}^{k}$, and let $\gU(X)=\{u_{[k]}(X):k\in[K(X)]\}$. The same-class coverage of $Y$ is
\begin{gather*}
\operatorname{cover}(u_{[k]}(X),Y)
=\max_{\ell\leq k:\,\widehat a_\ell(X)=a}
\frac{|b\cap\widehat b_\ell(X)|}{|b|},
\end{gather*}
where $|b|$ is the area of the ground-truth box and the maximum over an empty set is zero. For $\gamma\in(0,1)$, the constraint function is $f(u_{[k]}(X);X,Y)=\gamma-\operatorname{cover}(u_{[k]}(X),Y)$, and the oracle feasible set is
\begin{gather*}
A(X,Y)=\{u\in\gU(X):f(u;X,Y)\leq0\}.
\end{gather*}
Thus, a prefix is safe for $Y$ if at least one retained prediction of the correct class covers at least a $\gamma$-fraction of $b$.

\paragraph*{Prefix family and reported output.}
For each prefix, define the surrogate constraint function
\begin{gather}
\hat f(u_{[k]}(X);X)=\min_{\ell\leq k}\widehat c_\ell(X)=\widehat c_k(X).
\end{gather}
The last equality follows from the decreasing confidence order. The detector provides confidence scores but no direct estimate of object-level coverage. We therefore use the boundary confidence as a surrogate score, rather than a numerical estimate of the true constraint function. As the prefix grows, both the boundary confidence and the true constraint function are nonincreasing.
We use an image-dependent offset $\mu(X)$. For $\lambda\in\mathbb R$, define
\begin{align}
D(X;\lambda,\mu)&=
\left\{u_{[k]}(X)\in\gU(X):
\hat f(u_{[k]}(X);X)+\mu(X)+\lambda\leq0
\right\}\cup\{u_{[K(X)]}(X)\}\\
&=\left\{u_{[k]}(X)\in\gU(X):
\widehat c_k(X)+\mu(X)+\lambda\leq0
\right\}\cup\{u_{[K(X)]}(X)\}.
\end{align}
Including the full prefix ensures that the candidate feasible set is nonempty. To report a single detection output, we select the shortest prefix in $D(X;\lambda,\mu)$, which retains the fewest predicted boxes among its members. Its length, denoted by $k_\lambda(X;\mu)$, determines how many leading detections are retained and is given by
\begin{gather}
k_\lambda(X;\mu)=
\min\left(\left\{k\in[K(X)]:\widehat c_k(X)+\mu(X)+\lambda\leq0\right\}\cup\{K(X)\}\right).
\end{gather}
The reported output is $u^{\rm out}(X;\lambda,\mu)=u_{[k_\lambda(X;\mu)]}(X)$. If no prefix satisfies the margin inequality, the subset contains only the full prefix $u_{[K(X)]}(X)$, which is then reported.
Since confidence scores are nonincreasing in $k$, $D(X;\lambda,\mu)$ is a tail of the prefix family. Object coverage is nondecreasing as boxes are added, so
\begin{gather}
D(X;\lambda,\mu)\subseteq A(X,Y)
\quad\Longleftrightarrow\quad
u^{\rm out}(X;\lambda,\mu)\in A(X,Y).
\end{gather}

\paragraph*{Constant and trained families.}
The constant family is obtained by setting $\mu_{\rm const}(X)\equiv 0$. This yields the method \texttt{DISC-Const} when combined with the directed score below, and \texttt{Base-Const} when combined with the base score. The trained family replaces $\mu_{\rm const}$ by an estimated offset $\widehat\mu(X)$ learned once from the pre-training images indexed by $\mathcal I_{\rm pre}$. The detector itself is fixed throughout; only this post-processing offset is trained.

For both calibration strategies, the constant methods pool object-level scores from images indexed by $\mathcal I_{\rm pre}\cup\mathcal I_{\rm cal}$, whereas the trained methods use only those from $\mathcal I_{\rm cal}$. Each method obtains $\hat\lambda$ by applying the empirical $(1-\alpha)$-quantile rule with an appended $+\infty$ from Section~\ref{sec:dir_inclusion} to its corresponding pooled scores.

The trained family aims to reduce the number of retained boxes while targeting an overall object-level inclusion rate of $1-\alpha$. Covering difficult objects in some images may require retaining many additional low-confidence detections. An image-dependent offset allows the retained prefix length to adapt to this variation. We therefore use the pre-training set to determine how many objects to target for coverage in each image, allowing approximately an $\alpha$ fraction of the pooled objects to be left uncovered. This allocation targets coverage across the pre-training set as a whole, rather than requiring the same coverage fraction in every image.

To measure the prefix length required to cover an object-level sample $(X,Y)$, define
\begin{gather}
L(X,Y) = \begin{cases}
\min\{k\in[K(X)]:u_{[k]}(X)\in A(X,Y)\}, & \text{if such a prefix exists},\\
K(X)+1, & \text{if $u_{[K(X)]}(X)\notin A(X,Y)$.}
\end{cases}
\end{gather}
Thus, an object that cannot be covered by the full detector output is ordered after all coverable objects in the training-target construction.
For each pre-training image $X_i$, $i\in\mathcal I_{\rm pre}$, sort the objectwise lengths $\{L(X_i,Y_{ij})\}_{j=1}^{T(X_i)}$ as $L_{i,(1)}\le\cdots\le L_{i,(T(X_i))}$. This orders its objects from those requiring the shortest prefixes to those requiring the longest. For a coverable object at position $r$, a prefix of length $L_{i,(r)}$ covers at least the first $r$ objects in this order.

The incremental costs of increasing the number of objects targeted for coverage are
\begin{gather*}
\Delta_{i,r}=L_{i,(r)}-L_{i,(r-1)},\qquad r=1,\ldots,T(X_i),
\end{gather*}
where $L_{i,(0)}$ is taken to be $1$ when the smallest safe prefix length is positive, and $0$ otherwise.
For coverable objects, $\Delta_{i,r}$ is the additional prefix length required when the coverage target increases from the first $r-1$ objects to the first $r$ objects. Conversely, relaxing the target from $r$ to $r-1$ saves this many retained boxes. This saving is the basis of the greedy allocation below.

We initialize the target number of objects in each image as $r_i^{\rm pre}=T(X_i)$ and reduce these targets greedily. At each step, we compare the current terminal costs $\Delta_{i,r_i^{\rm pre}}$ across images with $r_i^{\rm pre}>0$, select an image with the largest such cost, and decrease its $r_i^{\rm pre}$ by one. Only the last object in the current ordered sequence is removed from the coverage target; the preceding increment then becomes the new terminal cost for that image. We repeat this step until approximately an $\alpha$ fraction of all pre-training object-level requirements has been removed. The procedure thus seeks shorter prefixes by allocating the allowed uncovered objects to the largest currently available savings. These removals apply only to the pre-training coverage targets, not to the objects used for subsequent calibration or evaluation.

The resulting $r_i^{\rm pre}$ specifies how many objects image $X_i$ should target for coverage. We convert this target into a training response for the image-dependent offset:
\begin{gather}
\tau_i^{\rm pre}=
\begin{cases}
-1, & r_i^{\rm pre}=0,\\
-\widehat c_{\,L_{i,(r_i^{\rm pre})}}(X_i),
& r_i^{\rm pre}>0,\quad L_{i,(r_i^{\rm pre})}\le K(X_i),\\
0, & r_i^{\rm pre}>0,\quad L_{i,(r_i^{\rm pre})}=K(X_i)+1.
\end{cases}
\end{gather}
For a coverable target, the response is the negative confidence score at the last detection in the shortest required prefix. At $\lambda=0$, setting $\mu(X_i)=\tau_i^{\rm pre}$ makes this prefix satisfy the margin condition at equality. If no object remains in the coverage target, we set $\tau_i^{\rm pre}=-1$; since confidence scores lie in $[0,1]$, all candidate prefixes then satisfy the margin condition at $\lambda=0$, and the reported output is the shortest, single-box prefix. If the target includes an object that cannot be covered even by the full detector output, we set $\tau_i^{\rm pre}=0$.

We fit a random forest regressor to the pairs $(X_i,\tau_i^{\rm pre})$, $i\in\mathcal I_{\rm pre}$, to obtain $\widehat\mu(X)$. Its inputs consist only of detector-derived features that are available at test time. Specifically, we use the first $15$ target-class confidence scores; the proportion of target-class detections among all predicted boxes; the corresponding area and confidence-score proportions; analogous proportions among the top-$10$, top-$20$, and top-$50$ highest-confidence detections; the number and area fractions of the target-class candidates retained after truncation; and their average confidence score. Here, truncation refers only to retaining at most the top $100$ target-class detector predictions, not to the ground-truth-based greedy reduction of pre-training coverage targets. Ground-truth boxes and objectwise safe-prefix lengths are used only to construct the pre-training response $\tau_i^{\rm pre}$, and are not included among the predictor inputs. Once $\widehat\mu$ has been fitted using the pre-training images, it is kept fixed while the conformal quantile is estimated using the object-level pairs from images indexed by $\mathcal I_{\rm cal}$.

\paragraph*{Directed score used by \texttt{DISC}.}
For a fixed offset $\mu$, the object-level directed score is
\begin{gather}
V^{\texttt{DISC}}(X,Y;\mu) = \inf \left\{ \lambda\in\mathbb R: D(X;\lambda,\mu)\subseteq A(X,Y) \right\}.
\end{gather}
Both $A(X,Y)$ and $D(X;\lambda,\mu)$ are defined over $\gU(X)$, and $D(X;\lambda,\mu)$ is a tail of this prefix family. Because object coverage is nondecreasing along this family, inclusion is equivalent to requiring its shortest member to have length at least $L(X,Y)$. 
Consequently,
\begin{gather}
V^{\texttt{DISC}}(X,Y;\mu) = \begin{cases}
\displaystyle \max_{1\le k<L(X,Y)} \left\{ -\widehat c_k(X)-\mu(X) \right\}, & 2\le L(X,Y)\le K(X), \\
-\infty, & L(X,Y)=1, \\
+\infty, & L(X,Y)=K(X)+1.
\end{cases}
\end{gather}
The first case gives the critical boundary above which every prefix shorter than the shortest safe prefix is excluded from $D(X;\lambda,\mu)$; exclusion requires $\lambda$ to strictly exceed this boundary. The value $-\infty$ means that every candidate prefix is already safe, whereas $+\infty$ means that even the full prefix is unsafe. Because the margin family uses a non-strict boundary, an infinitesimal upward perturbation, implemented through a fixed numerical tolerance, is applied at exact equality.

\paragraph*{Base score.}
The base methods use the uniform residual construction in \eqref{eq:uniform_score}, with the same constraint function $f$ and surrogate $\hat f$ defined above.
Then, for one object-level sample $(X,Y)$, the base score is
\begin{gather}
V^{\texttt{Base}}(X,Y;\mu) = \sup_{u\in\gU(X)} \left\{ f(u;X,Y)-\hat f(u;X)-\mu(X) \right\}.
\end{gather}
Since $\gU(X)$ is finite, this becomes
\begin{gather}
V^{\texttt{Base}}(X,Y;\mu) = \max_{1\le k\le K(X)} \left\{ \gamma - \operatorname{cover}(u_{[k]}(X),Y) - \widehat c_k(X) - \mu(X) \right\}.
\end{gather}
In the implementation, the quantity inside the braces is precomputed for every target object and every prefix length, and the score is obtained by taking the maximum over $k$.

\paragraph*{Evaluation metrics.}
For each test-image index $i\in\mathcal I_{\rm test}$, let $u_i^{\rm out}$ denote the reported output of the method being evaluated, and let $|u_i^{\rm out}|$ denote its number of retained boxes. The primary metric is the empirical object-level inclusion rate
\begin{gather*}
\widehat{\mathrm{Inc}}_{\rm obj}=
\frac{
\sum_{i\in\mathcal I_{\rm test}}\sum_{j=1}^{T(X_i)}
\mathbbm{1}\{u_i^{\rm out}\in A(X_i,Y_{ij})\}
}{
\sum_{i\in\mathcal I_{\rm test}}T(X_i)
}.
\end{gather*}
For the conformal methods, safety of the reported shortest prefix is equivalent to inclusion of the entire prefix tail in $A(X_i,Y_{ij})$, so this metric is also the empirical set-inclusion rate.
We also report the mean retained prefix length $\widehat{\mathrm{Ret}}=\frac{1}{|\mathcal I_{\rm test}|}\sum_{i\in\mathcal I_{\rm test}}|u_i^{\rm out}|$, together with the empirical retained-length quantiles $\widehat Q_{0.25}^{\rm Ret}$, $\widehat Q_{0.50}^{\rm Ret}$, and $\widehat Q_{0.75}^{\rm Ret}$ across test images. Finally, we report the mean union area of the retained boxes,
\begin{gather*}
\widehat{\mathrm{Area}}=
\frac{1}{|\mathcal I_{\rm test}|}
\sum_{i\in\mathcal I_{\rm test}}
\left|\bigcup_{\ell=1}^{|u_i^{\rm out}|}\widehat b_\ell(X_i)\right|.
\end{gather*}
Raw confidence thresholding may return an empty output; in that case, every object in the image has inclusion indicator zero, and the retained length and union area are both zero. Since the box coordinates are normalized to $[0,1]^2$, the union area always lies in $[0,1]$.

\paragraph*{Additional results.}
Tables~\ref{tab:object_detection_gamma05}, \ref{tab:object_detection_gamma06_full}, and~\ref{tab:object_detection_gamma07} report results for $\gamma=0.5$, $0.6$, and $0.7$, respectively. Each table includes five raw detector baselines that retain all target-class predictions with confidence strictly above $0.01$, $0.02$, $0.05$, $0.1$, or $0.2$. These raw baselines use neither the offset $\mu$ nor the calibration level $\lambda$; their results at $\gamma=0.6$ supplement the conformal-method comparison in the main text.

For each value of $\gamma$, we compute the metrics on the test set of each split and average the resulting values over $100$ random splits. In particular, the reported quartiles are averages of the within-split retained-length quartiles, not quartiles of pooled outputs across splits. For table marking, we use an inclusion cutoff of $0.898$, while the nominal inclusion target remains $1-\alpha=0.9$. Among methods whose mean inclusion exceeds this cutoff, boldface and underlining mark the smallest and second-smallest values in the retention and area columns. A superscript $-$ marks mean inclusion below the cutoff.

\begin{table}[t]
    \centering
    \caption{Safe object detection at $\gamma=0.5$ and $\alpha=0.1$. Boldface and underlining mark the smallest and second-smallest values in the retention and area columns.}
    \label{tab:object_detection_gamma05}
    \begin{tabular}{lcccccc}
        \toprule
        Method & Inclusion & Mean Ret $\downarrow$ & $Q_{0.25}^{\rm Ret}\ \downarrow$ & $Q_{0.50}^{\rm Ret}\ \downarrow$ & $Q_{0.75}^{\rm Ret}\ \downarrow$ & Area $\downarrow$ \\
        \midrule
        \texttt{DISC-Const} & 0.900 & \underline{10.40} & 2.83 & 5.45 & \underline{13.84} & 0.374 \\
        \texttt{DISC-Train} & 0.901 & 14.36 & \textbf{2.00} & \textbf{3.16} & \textbf{8.81} & \underline{0.368} \\
        \texttt{Base-Const} & 0.959 & 82.79 & 72.18 & 100.00 & 100.00 & 0.576 \\
        \texttt{Base-Train} & 0.956 & 44.56 & \underline{2.02} & 10.50 & 100.00 & 0.431 \\
        Raw $>0.01$ & 0.943 & 27.14 & 4.00 & 12.00 & 37.00 & 0.404 \\
        Raw $>0.02$ & 0.929 & 18.13 & 3.00 & 8.00 & 25.00 & 0.385 \\
        Raw $>0.05$ & 0.899 & \textbf{10.28} & \textbf{2.00} & \underline{5.00} & 14.00 & \textbf{0.363} \\
        Raw $>0.10$ & $0.863^{-}$ & 6.69 & 1.00 & 3.00 & 9.00 & 0.347 \\
        Raw $>0.20$ & $0.805^{-}$ & 4.42 & 1.00 & 3.00 & 6.00 & 0.330 \\
        \bottomrule
    \end{tabular}
\end{table}

\begin{table}[t]
    \centering
    \caption{Safe object detection at $\gamma=0.6$ and $\alpha=0.1$.}
    \label{tab:object_detection_gamma06_full}
    \begin{tabular}{lcccccc}
        \toprule
        Method & Inclusion & Mean Ret $\downarrow$ & $Q_{0.25}^{\rm Ret}\ \downarrow$ & $Q_{0.50}^{\rm Ret}\ \downarrow$ & $Q_{0.75}^{\rm Ret}\ \downarrow$ & Area $\downarrow$ \\
        \midrule
        Raw $>0.01$ & 0.935 & 27.14 & 4.00 & 12.00 & 37.00 & 0.404 \\
        Raw $>0.02$ & 0.920 & 18.13 & 3.00 & 8.00 & 25.00 & 0.385 \\
        Raw $>0.05$ & $0.888^{-}$ & 10.28 & 2.00 & 5.00 & 14.00 & 0.363 \\
        Raw $>0.10$ & $0.849^{-}$ & 6.69 & 1.00 & 3.00 & 9.00 & 0.347 \\
        Raw $>0.20$ & $0.789^{-}$ & 4.42 & 1.00 & 3.00 & 6.00 & 0.330 \\
        \bottomrule
    \end{tabular}
\end{table}

\begin{table}[H]
    \centering
    \caption{Safe object detection at $\gamma=0.7$ and $\alpha=0.1$. Boldface and underlining mark the smallest and second-smallest values in the retention and area columns.}
    \label{tab:object_detection_gamma07}
    \begin{tabular}{lcccccc}
        \toprule
        Method & Inclusion & Mean Ret $\downarrow$ & $Q_{0.25}^{\rm Ret}\ \downarrow$ & $Q_{0.50}^{\rm Ret}\ \downarrow$ & $Q_{0.75}^{\rm Ret}\ \downarrow$ & Area $\downarrow$ \\
        \midrule
        \texttt{DISC-Const} & 0.901 & \textbf{18.04} & 3.48 & 8.53 & \textbf{24.76} & 0.393 \\
        \texttt{DISC-Train} & 0.901 & 23.49 & \textbf{2.00} & \textbf{3.74} & \underline{24.96} & \underline{0.387} \\
        \texttt{Base-Const} & 0.938 & 82.79 & 72.18 & 100.00 & 100.00 & 0.576 \\
        \texttt{Base-Train} & 0.938 & 56.09 & \underline{3.00} & 75.87 & 100.00 & 0.467 \\
        Raw $>0.01$ & 0.918 & 27.14 & 4.00 & 12.00 & 37.00 & 0.404 \\
        Raw $>0.02$ & 0.901 & \underline{18.13} & \underline{3.00} & \underline{8.00} & 25.00 & \textbf{0.385} \\
        Raw $>0.05$ & $0.865^{-}$ & 10.28 & 2.00 & 5.00 & 14.00 & 0.363 \\
        Raw $>0.10$ & $0.824^{-}$ & 6.69 & 1.00 & 3.00 & 9.00 & 0.347 \\
        Raw $>0.20$ & $0.763^{-}$ & 4.42 & 1.00 & 3.00 & 6.00 & 0.330 \\
        \bottomrule
    \end{tabular}
\end{table}

Increasing the raw confidence threshold shortens the retained output but lowers object-level inclusion. Some raw thresholds are competitive at individual values of $\gamma$, but they are not calibrated to a prespecified inclusion target and their inclusion rates change with the required coverage fraction.

\section{Theoretically Optimal Design of Safe Feasible Set}\label{appen:optimal_safety_subset}

\subsection{Maximum-Volume Safe Feasible Set}

\paragraph*{Problem setup.}
Let $(X,Y)\sim P$, where $X\in\mathcal X$ is the covariate and $Y\in\mathcal Y$ represents the uncertain outcome. For each realization
$(x,y)\in\mathcal X\times\mathcal Y$, define the oracle feasible set $A(x,y)=\{u\in\mathcal U:f(u,x,y)\le 0\}$. We seek a measurable set-valued function $D:\mathcal X\to 2^{\mathcal U}$ with maximal expected volume subject to the marginal inclusion constraint
\begin{gather*}
    \max_{D:\mathcal X\to 2^{\mathcal U}}\mathbb E[\mu(D(X))]\quad\text{s.t.}\quad \mathbb P\{D(X)\subseteq A(X,Y)\}\ge 1-\alpha .
\end{gather*}
For any set-valued function $D$, define its \emph{inclusion event} by
$S_D=\{(x,y)\in\mathcal X\times\mathcal Y:D(x)\subseteq A(x,y)\}$.
Then $\mathbb P\{D(X)\subseteq A(X,Y)\}= \Prob\{(X,Y)\in S_D\}.$

\paragraph*{Scenario-induced covariate-dependent safe feasible sets.}
Let $\mathcal E = \{E\subseteq\mathcal X\times\mathcal Y: \Prob\{(X,Y)\in E\}\ge 1-\alpha\}.$ For each $E\in\mathcal E$, define the $x$-section
$E_x=\{y\in\mathcal Y:(x,y)\in E\}$ and the \emph{induced set-valued function}
$D_E(x)=\bigcap_{y\in E_x}A(x,y)=\{u\in\mathcal U:f(u,x,y)\le 0\text{ for all }y\in E_x\}$.
For every $(x,y)\in E$, we have $y\in E_x$, hence
$D_E(x)\subseteq A(x,y)$. Therefore $E\subseteq S_{D_E}$ and
\begin{gather*}
    \mathbb P\{D_E(X)\subseteq A(X,Y)\}=\Prob\{(X,Y)\in S_{D_E}\}\ge \Prob\{(X,Y)\in E\}\ge 1-\alpha .
\end{gather*}
Thus, every scenario-induced function $D_E$ is feasible. In general, the equality
$\Prob\{D_E(X)\subseteq A(X,Y)\}=\Prob\{(X,Y)\in E\}$ does not necessarily hold; it
holds only when $E=S_{D_E}$ up to a $P$-null set.

\paragraph*{Reduction to scenario-induced functions.}
Conversely, let $D$ be any feasible set-valued function, so that $\mathbb P\{D(X)\subseteq A(X,Y)\}\ge 1-\alpha$. Since $D(x)\subseteq A(x,y)$ for every $(x,y)\in S_D$, we have $D(x)\subseteq \bigcap_{y\in \gY:(x,y)\in S_D}A(x,y)=D_{S_D}(x)$ for each $x$. Moreover, $\Prob(S_D)=\mathbb P\{D(X)\subseteq A(X,Y)\}\ge 1-\alpha$, so $S_D\in\mathcal E$. Hence every feasible $D$ is pointwise contained in a scenario-induced feasible function $D_{S_D}$, that is, $D(x)\subseteq D_{S_D}(x)$ for all $x$. Consequently, $\mu(D(x))\le \mu(D_{S_D}(x))$ for all $x$, and therefore $\mathbb E[\mu(D(X))] \le \mathbb E[\mu(D_{S_D}(X))]$. Thus, when maximizing expected volume, it is sufficient to optimize over scenario-induced set-valued functions of the form $D_E(x)=\cap_{y\in E_x}A(x,y)$.

\paragraph*{Optimal characterization.}
The optimal covariate-dependent safe feasible set is characterized by
\begin{gather*}
    E^*\in\argmax_{E\subseteq\mathcal X\times\mathcal Y}\mathbb E\left[\mu\left(\bigcap_{y\in \gY:(X,y)\in E}A(X,y)\right)\right]\quad\text{s.t.}\quad \Prob\{(X,Y)\in E\}\ge 1-\alpha,
\end{gather*}
and
$D^*(x)=\bigcap_{y\in \gY:(x,y)\in E^*}A(x,y)$. Equivalently, $D^*(x)=\{u\in\mathcal U:\sup_{y\in \gY:(x,y)\in E^*} f(u,x,y)\le 0\}$.

\paragraph*{Proof of optimality.}
Let $D$ be any feasible set-valued function. Its inclusion event $S_D$ satisfies $P(S_D)\ge 1-\alpha$, and the reduction above gives $D(x)\subseteq D_{S_D}(x)$ for all $x$. Therefore $\mathbb E\{\mu(D(X))\}\le \mathbb E\{\mu(D_{S_D}(X))\}$. Since $E^*$ maximizes the expected volume among all events $E\subseteq\mathcal X\times\mathcal Y$ with $P(E)\ge 1-\alpha$, we also have $\mathbb E\{\mu(D_{S_D}(X))\}\le \mathbb E\{\mu(D_{E^*}(X))\}$. Combining the two inequalities yields $\mathbb E\{\mu(D(X))\}\le \mathbb E\{\mu(D_{E^*}(X))\}=\mathbb E\{\mu(D^*(X))\}$.
Therefore no feasible set-valued function has larger expected volume than
$D^*$, and hence $D^*$ solves the original marginal chance-constrained problem.

\paragraph*{Remark on conditional and marginal inclusion.}
The constraint above is marginal: $\mathbb P\{D(X)\subseteq A(X,Y)\}\ge 1-\alpha$. It does not require the stronger conditional guarantee $\mathbb P\{D(x)\subseteq A(x,Y)\mid X=x\}\ge 1-\alpha$ for every $x$. If the stronger conditional guarantee is desired, then one would instead solve, for each fixed $x$, $D^*(x)=\bigcap_{y\in E_x^*}A(x,y)$ with $E_x^*\in\arg\max_{E_x\subseteq\mathcal Y}\mu\left(\bigcap_{y\in E_x}A(x,y)\right)$ subject to $\Prob_{Y\mid X}(E_x\mid X=x)\ge 1-\alpha$.
This conditional formulation is different from the marginal formulation,
because the marginal constraint allows probability mass to be allocated
unevenly across different values of $X$.

\paragraph*{Generalized analysis of the optimal safe feasible set.}
The characterization above is based on scenario-induced sets. A complementary view, parallel to the decision-aware analysis, is to ask how much local safety budget should be assigned to each covariate value, and how much safe volume can be retained once that local budget is fixed. 
For $x\in\mathcal X$ and $v\in[0,1]$, define
\begin{gather*}
    \gV(v,x):=\sup\left\{\mu\left(\bigcap_{y\in E_x}A(x,y)\right):E_x\subseteq\mathcal Y,\ \Prob_{Y\mid X}(E_x\mid X=x)\ge v\right\}.
\end{gather*}
At covariate value $x$, $\gV(v,x)$ is the supremum of the retained volume under a local inclusion requirement of at least $v$. A larger $v$ imposes a stricter requirement and cannot increase this supremum.

\paragraph*{Reduction to local budget allocation.}
For an event $E\subseteq\mathcal X\times\mathcal Y$, define its local probability budget by $b_E(x):=\Prob\{(X,Y)\in E\mid X=x\}$. If $D_E(x)=\bigcap_{y\in E_x}A(x,y)$ is the rule induced by $E$, then the event $E$ allocates local budget $b_E(x)$ to covariate value $x$, and its marginal budget is $\mathbb E\{b_E(X)\}=\Prob\{(X,Y)\in E\}$. By the definition of $\gV$, the induced rule satisfies $\mu(D_E(x))\le \gV(b_E(x),x)$ for every $x$.

Conversely, for any local budget function $b:\mathcal X\to[0,1]$, an oracle rule $D_b$ can be chosen pointwise to attain the best available volume at that budget, namely $\mu(D_b(x))=\gV(b(x),x)$ whenever the supremum is attained. Therefore the original optimal design problem
\begin{gather*}
    \max_{D:\mathcal X\to 2^{\mathcal U}}\mathbb E\{\mu(D(X))\}\quad \text{s.t.}\quad \Prob\{D(X)\subseteq A(X,Y)\}\ge 1-\alpha
\end{gather*}
reduces, at the oracle level, to the local budget allocation problem
\begin{gather*}
    \max_{b:\mathcal X\to[0,1]}\mathbb E\{\gV(b(X),X)\}\quad \text{s.t.}\quad \mathbb E\{b(X)\}\ge 1-\alpha.
\end{gather*}
Here $b(x)$ is a covariate-dependent local safety budget. Unlike the decision-aware formulation, this objective rewards retained safe volume.

\paragraph*{Lagrangian characterization.}
If $v\mapsto\gV(v,x)$ is differentiable on the interior of $[0,1]$, then the corresponding Lagrangian first-order condition gives a constant marginal value across covariate values. Specifically, there exists a non-negative multiplier $\lambda^*\ge 0$ such that any interior optimizer $b^*$ satisfies
\begin{gather*}
    \frac{\partial}{\partial v}\gV(v,x)\big|_{v=b^*(x)}=\lambda^*,\qquad x\in\mathcal X\text{ whenever } b^*(x)\in(0,1),
\end{gather*}
with $\mathbb E\{b^*(X)\}=1-\alpha$ when the constraint is active. Boundary points satisfy the usual KKT inequalities. In words, the optimal allocation equalizes the local marginal value of the safety budget across covariate values. This yields a threshold-type policy: $b^*(x)$ is determined by comparing the local marginal value $\partial_v\gV(v,x)$ with a common cutoff. The negative threshold notation $\lambda_-$ used below is only a sign convention for the estimated rule written in terms of $-\widehat{\phi}$.

\paragraph*{Data-driven implication.}
This characterization suggests a direct plug-in construction. Suppose estimators of the local marginal value $\partial\gV(v,x)/\partial v$ and of the map $b\mapsto D_b$ are available, denoted by $\widehat{\phi}(v,x)$ and $\widehat D_b(x)$. For a threshold level $\lambda_-\le 0$, the estimated budget selected at covariate value $x$ is $\sup\{b\in[0,1]:-\widehat{\phi}(b,x)\le \lambda_-\}$. The corresponding threshold-indexed set collects decisions from the estimated sets $\widehat D_b(x)$ satisfying $-\widehat{\phi}(b,x)\le \lambda_-$:
\begin{gather*}
    \widehat{D}_{\lambda_-}(x) = \left\{ u\in\widehat{D}_b(\cdot) : -\widehat{\phi}(b,x) \leq \lambda_- \right\}\,.
\end{gather*}
This is the data-driven threshold analogue of the oracle policy above. 
To calibrate this family, define
\begin{gather*}
    \lambda_{i-}:=\sup\{\lambda_-\le 0:\widehat D_{\lambda_-}(X_i)\subseteq A(X_i,Y_i)\}
\end{gather*}
for each calibration observation, and set
\begin{gather*}
    \hat\lambda_-:=\inf\left\{\lambda_-:\frac{1}{n+1}\sum_{i=1}^n\mathbbm{1}\{\lambda_{i-}\le \lambda_-\}\ge 1-\alpha\right\}.
\end{gather*}
The final safe feasible set is $D(X_{n+1})=\widehat D_{\hat\lambda_-}(X_{n+1})$. 
As in the decision-aware analysis below, the task is to allocate the marginal safety budget across covariate values. Here the objective is retained set volume rather than the cost of a deployed action.

\subsection{Decision-Aware Design of $L^*(u)$}

To better understand the optimization of the safety margin family, it is useful to step back from the parametric family in Section \ref{sec:family_optimization} and describe an oracle formulation that isolates the statistical tradeoff between downstream decision cost and safety.

\paragraph*{Problem setup.}
Let $(X,Y)\sim P=P_X\times P_{Y\mid X}$, where $P_X$ is the marginal distribution of $X$ and $P_{Y\mid X}$ is the conditional distribution of $Y$ given $X$. For each realization $(x,y)\in\mathcal X\times\mathcal Y$, define the oracle feasible set by $A(x,y)=\{u\in\mathcal U:f(u,x,y)\le 0\}$. We seek a measurable set-valued rule $D:\mathcal X\to 2^{\mathcal U}$ satisfying the marginal inclusion constraint $\Prob\{D(X)\subseteq A(X,Y)\}\ge 1-\alpha$.

\paragraph*{Scenario-induced decision rules.}
To describe feasible rules, it is convenient to begin from events in the joint space $(X,Y)$ rather than from decision rules directly. Let $\mathcal E=\{E\subseteq\mathcal X\times\mathcal Y:\Prob\{(X,Y)\in E\}\ge 1-\alpha\}$. For each $E\in\mathcal E$, define the $x$-section $E_x=\{y\in\mathcal Y:(x,y)\in E\}$ and the induced set-valued rule $D_E(x)=\bigcap_{y\in E_x}A(x,y)$. In words, $D_E(x)$ keeps those decisions that are safe simultaneously for all responses $y$ included in the section $E_x$. We write $\breve{\mathcal D}=\{D_E:E\in\mathcal E\}$ for the family of all such induced rules. If $(x,y)\in E$, then $y\in E_x$ and hence $D_E(x)\subseteq A(x,y)$. Therefore every $D_E\in\breve{\mathcal D}$ is feasible, since 
\begin{gather*}
    \Prob\{D_E(X)\subseteq A(X,Y)\}\ge \Prob\{(X,Y)\in E\}\ge 1-\alpha\,.
\end{gather*}

\paragraph*{Reduction to point decisions.}
Now introduce a downstream decision cost $\phi(u,x)$. A set-valued rule $D$ specifies which decisions are declared safe at covariate value $x$, while the deployed action minimizes the decision cost over that set. Thus, for any set-valued rule $D$, define its induced point decision rule by $g_D(x)\in\argmin_{u\in D(x)}\phi(u,x)$. More generally, write $g:\mathcal X\to\mathcal U$ for an arbitrary measurable point decision rule.

To quantify the conditional safety level of a decision $u$ at covariate value $x$, define
\begin{gather*}
    \Lambda(u,x):=\sup\left\{\Prob_{Y\mid X}(E_x\mid X=x):E_x\subseteq\mathcal Y,\ u\in\bigcap_{y\in E_x}A(x,y)\right\}.
\end{gather*}
The role of $\Lambda(u,x)$ is to assign a single scalar value to each candidate decision $u$: larger $\Lambda(u,x)$ means that $u$ stays safe over a larger portion of the conditional distribution of $Y$ given $X=x$. Equivalently, $\Lambda(u,x)=\Prob\{u\in A(x,Y)\mid X=x\}$, so $\Lambda(u,x)$ is the conditional safety level of $u$ at $x$.

For any induced rule $D_E$, the deployed action $g_{D_E}(x)$ belongs to $D_E(x)$, and therefore $\Prob\{D_E(X)\subseteq A(X,Y)\mid X=x\}\le \Lambda(g_{D_E}(x),x)$. Thus, the conditional safety level of the deployed action is at least the conditional inclusion probability of the whole set. This motivates constructing an induced set-valued rule from a point decision rule. Given a measurable point decision rule $g$, define $E_g=\{(x,y)\in\mathcal X\times\mathcal Y:g(x)\in A(x,y)\}$ and let $D_g:=D_{E_g}$. Here $E_g$ is the exact event on which the point decision $g(x)$ is safe, while $D_g$ is the induced set-valued rule generated by that event. Then $g(x)\in D_g(x)$ for every $x$, and the marginal inclusion constraint is controlled by the average conditional safety level of $g$. Hence the oracle decision-aware problem can be written as
\begin{gather*}
    g^*\in\argmin_{g:\mathcal X\to\mathcal U}\mathbb E\{\phi(g(X),X)\}\quad \text{s.t.}\quad \mathbb E\{\Lambda(g(X),X)\}\ge 1-\alpha.
\end{gather*}
As in the local budget formulation above, the marginal constraint is expressed through an average conditional quantity, leaving a local tradeoff at each covariate value.

\paragraph*{Lagrangian characterization.}
Introduce a Lagrange multiplier $\eta\ge 0$. The corresponding Lagrangian is  
\begin{gather*}
    \mathcal L(g,\eta)=\mathbb E\{\phi(g(X),X)-\eta\Lambda(g(X),X)\}+\eta(1-\alpha)\,.
\end{gather*}
For a fixed $\eta$, the optimizing rule therefore satisfies $g_\eta(x)\in\argmin_{u\in\mathcal U}\{\phi(u,x)-\eta\Lambda(u,x)\}$. Thus the formal Lagrangian tradeoff is additive: one pays downstream decision cost $\phi(u,x)$ and receives a reward proportional to the conditional safety value $\Lambda(u,x)$.

The ratio $\phi(u,x)/\Lambda(u,x)$ enters as a convenient ranking score suggested by this tradeoff. When $\Lambda(u,x)$ is interpreted as the amount of conditional safety carried by decision $u$, the ratio measures downstream decision cost per unit of safety value. In this sense, the ratio rule should be viewed as a practical surrogate motivated by the Lagrangian tradeoff, rather than as the literal first-order optimality condition.

\paragraph*{Data-driven implementation.}
Suppose an estimator $\hat\lambda(u,x)$ of $\Lambda(u,x)$ is available. The oracle ranking rule $\phi(u,x)/\Lambda(u,x)$ then suggests the empirical surrogate $\phi(u,x)/\hat\lambda(u,x)$. We therefore consider the one-parameter family 
\begin{gather*}
    \widehat D_\lambda(x)=\{u\in\mathcal U:\phi(u,x)/\hat\lambda(u,x)\le \lambda\}\,.
\end{gather*}
The nestedness level $\lambda$ controls the tradeoff between downstream decision cost and estimated safety value: smaller $\lambda$ retains only decisions with smaller decision cost relative to their estimated safety level, while larger $\lambda$ admits more decisions.

To calibrate this family, define for each calibration point $\lambda_i=\inf\{\lambda\ge 0:\widehat D_\lambda(X_i)\subseteq A(X_i,Y_i)\}$. In words, $\lambda_i$ is the smallest nestedness level for which the candidate feasible set becomes safe on the $i$th sample. We then set
\begin{gather*}
    \hat\lambda:=\sup\left\{\lambda:\frac{1}{n+1}\sum_{i=1}^n\mathbbm{1}\{\lambda_i\le \lambda\}<\alpha\right\}.
\end{gather*}
The resulting safe feasible set is $D(X_{n+1})=\widehat D_{\hat\lambda}(X_{n+1})$, and the deployed action is the cost-minimizing element of this set, namely $g_D(X_{n+1})\in\argmin_{u\in D(X_{n+1})}\phi(u,X_{n+1})$.

\setstretch{.8}
\putbib
\end{bibunit}

\end{document}